\newtheoremstyle{mystyle}
  {}
  {}
  {\itshape}
  {}
  {\bfseries}
  { }
  {.5em}
  {}
\theoremstyle{mystyle}
\newcommand{\setX}{\mathbbmss{X}}
\newcommand{\setA}{\mathbbmss{A}}
\newcommand{\setP}{\mathbbmss{P}}
\newcommand{\setR}{\mathbbmss{R}}
\newcommand{\setV}{\mathbbmss{V}}
\newcommand{\setS}{\mathbbmss{S}}
\newcommand{\setW}{\mathbbmss{W}}
\newcommand{\setJ}{\mathbbmss{J}}
\newcommand{\setZ}{\mathbbmss{Z}}
\newcommand{\setO}{\mathbbmss{O}}
\newcommand{\setU}{\mathbbmss{U}}
\newcommand{\rmc}{\mathrm{c}}
\newcommand{\rmp}{\mathrm{p}}
\newcommand{\rmq}{\mathrm{q}}
\newcommand{\rmF}{\mathrm{F}}
\newcommand{\rmE}{\mathrm{E}}
\newcommand{\rmH}{\mathrm{H}}
\newcommand{\rmR}{\mathrm{R}}
\newcommand{\rmg}{\mathrm{g}}
\newcommand{\rmQ}{\mathrm{Q}}
\newcommand{\rmT}{\mathrm{T}}
\newcommand{\rmI}{\mathrm{I}}
\newcommand{\lams}{\lambda^{\mathsf{s}}}
\newcommand{\rsim}{\mathsf{Sim}^\mathsf{R}}
\newcommand{\sfF}{\mathsf{F}}
\newcommand{\sfI}{\mathsf{I}}
\newcommand{\sfT}{\mathsf{T}}
\newcommand{\sfC}{\mathsf{C}}
\newcommand{\sfM}{\mathsf{M}}
\newcommand{\sfQ}{\mathsf{Q}}
\newcommand{\sfE}{\mathsf{E}}
\newcommand{\sfD}{\mathsf{D}}
\newcommand{\sfr}{\mathsf{r}}
\newcommand{\sfR}{\mathsf{R}}
\newcommand{\sfd}{\mathsf{d}}
\newcommand{\sfZ}{\mathsf{Z}}
\newcommand{\sfL}{\mathsf{L}}
\newcommand{\sfK}{\mathsf{K}}
\newcommand{\mae}{\mathcal{E}}
\newcommand{\maz}{\mathcal{Z}}
\newcommand{\man}{\mathcal{N}}
\newcommand{\mam}{\mathcal{M}}
\newcommand{\mao}{\mathcal{O}}
\newcommand{\mg}{\mathcal{G}}
\newcommand{\mai}{\mathcal{I}}
\newcommand{\bxx}{\mathbf{x}}
\newcommand{\bss}{\mathbf{s}}
\newcommand{\bvv}{\mathbf{v}}
\newcommand{\bmu}{\boldsymbol{\mu}}
\newcommand{\bx}{{\boldsymbol{x}}}
\newcommand{\hx}{{\hat{x}}}
\newcommand{\hv}{{\hat{v}}}
\newcommand{\htt}{{\hat{t}}}
\newcommand{\vv}{\mathrm{v}}
\newcommand{\hxx}{\hat{\mathrm{x}}}
\newcommand{\xx}{\mathrm{x}}
\newcommand{\btv}{\boldsymbol{\tilde{v}}}
\newcommand{\bhx}{{\boldsymbol{\hat{x}}}}
\newcommand{\bz}{{\boldsymbol{z}}}
\newcommand{\bu}{{\boldsymbol{u}}}
\newcommand{\bv}{{\boldsymbol{v}}}
\newcommand{\bgg}{{\mathbf{g}}}
\newcommand{\baa}{{\mathbf{a}}}
\newcommand{\dif}{\mathrm{d}}
\newcommand{\by}{{\boldsymbol{y}}}
\newcommand{\trp}{\mathsf{T}}
\newcommand{\mA}{\mathbf{A}}
\newcommand{\mR}{\mathbf{R}}
\newcommand{\mI}{\mathbf{I}}
\newcommand{\mone}{\mathbf{1}}
\newcommand{\mJ}{\mathbf{J}}
\newcommand{\mG}{\mathbf{G}}
\newcommand{\mQ}{\mathbf{Q}}
\newcommand{\mS}{\mathbf{S}}
\newcommand{\mU}{\mathbf{U}}
\newcommand{\mD}{\mathbf{D}}
\newcommand{\mM}{\mathbf{M}}
\newcommand{\mV}{\mathbf{V}}
\newcommand{\mtV}{\tilde{\mathbf{V}}}
\newcommand{\mB}{\mathbf{B}}
\newcommand{\mT}{\mathbf{T}}
\newcommand{\md}{\mathrm{D}}
\newcommand{\E}{\mathsf{E}\hspace{.5mm}}
\newcommand{\mse}{\mathsf{MSE}}
\newcommand{\poster}[2]{\rmq_{\lams} ( #1 | #2 ) }
\newcommand{\norm}[1]{\lVert #1 \rVert}
\newcommand{\abs}[1]{\lvert #1 \rvert}
\newcommand{\tr}[1]{\mathrm{Tr} \{ #1 \}}
\newcommand{\Tr}{\mathrm{Tr} }
\renewenvironment{proof}{{\bfseries Proof. }}{\hfill$\blacktriangle$\vspace{2mm}}
\declaretheorem[style=definition,qed=$\blacksquare$,numberwithin=section]{definition}
\declaretheorem[style=proposition,qed=$\blacksquare$,numberwithin=section]{proposition}
\declaretheorem[style=lemma,qed=$\blacksquare$,numberwithin=section]{lemma}
\declaretheorem[style=assumption,qed=$\blacksquare$]{assumption}
\declaretheorem[style=remark,numberwithin=section]{remark}
\algnewcommand\algorithmicset{\textbf{set}}
\algnewcommand\Set{\item[\algorithmicset]}
\algnewcommand\algorithmicInitiate{\textbf{initiate}}
\algnewcommand\Initiate{\item[\algorithmicInitiate]}
\algnewcommand\algorithmicEvaluate{\textbf{evaluate}}
\algnewcommand\Evaluate{\item[\algorithmicEvaluate]}
\algnewcommand\algorithmicimplement{\textbf{implement}}
\algnewcommand\Implement{\item[\algorithmicimplement]}
\algnewcommand\algorithmicbegin{\textbf{begin}}
\algnewcommand\BEGIN{\item[\algorithmicbegin]}
\algnewcommand\algorithmicEnd{\textbf{end}}
\algnewcommand\ENDall{\item[\algorithmicEnd]}
\algnewcommand\algorithmicoutput{\textbf{output}}
\algnewcommand\Output{\item[\algorithmicoutput]}
\newcounter{bar}
\newcommand{\exmpl}[1]{\refstepcounter{bar}\label{#1} \textbf{Example \thebar. }}
\begin{document}
%
\title{Statistical Mechanics of MAP Estimation: General Replica Ansatz}
%
%
%

\author{Ali~Bereyhi,
        Ralf R. M\"uller,
        and~Hermann Schulz-Baldes
\thanks{The results of this manuscript were presented in parts at 2016 IEEE Information Theory Workshop (ITW) \cite{bereyhi2016itw} and 2017 IEEE Information Theory and Applications Workshop (ITA) \cite{bereyhi2017replica}.}
\thanks{This work was supported by the German Research Foundation, Deutsche Forschungsgemeinschaft (DFG), under Grant No. MU 3735/2-1.}
\thanks{Ali Bereyhi and Ralf R. M\"uller are with the Institute for Digital Communications (IDC), Friedrich Alexander University of Erlangen-N\"urnberg (FAU), Konrad-Zuse-Stra{\ss}e 5, 91052, Erlangen, Bavaria, Germany (e-mails: ali.bereyhi@fau.de, ralf.r.mueller@fau.de).}
\thanks{Hermann Schulz-Baldes is with the Department of Mathematics, FAU, Cauerstra{\ss}e 11, 91058, Erlangen, Bavaria, Germany (e-mail: schuba@ mi.uni-erlangen.de).}
}

%
%

\markboth{}%
{ }
%



\maketitle

\tikzstyle{block} = [draw,  rectangle, rounded corners, minimum height=2.5em, minimum width=5em]
\tikzstyle{margin} = [draw, dotted, rectangle, rounded corners, minimum height=2.1em, minimum width=2em]
\tikzstyle{sum} = [draw, circle, node distance=1cm, inner sep=0pt]
\tikzstyle{state} = [draw, circle, node distance=1cm, inner sep=1pt, minimum size=.8cm]
\tikzstyle{input} = [coordinate]
\tikzstyle{output} = [coordinate]
\tikzstyle{pinstyle} = [pin edge={to-,thick,black}]

\begin{abstract}
The large-system performance of maximum-a-posterior estimation is studied considering a general distortion function when the observation vector is received through a linear system with additive white Gaussian noise. The analysis considers the system matrix to be chosen from the large class of rotationally invariant random matrices. We take a statistical mechanical approach by introducing a spin glass corresponding to the estimator, and employing the replica method for the large-system analysis. In contrast to earlier replica based studies, our analysis evaluates the general replica ansatz of the corresponding spin glass and determines the asymptotic distortion of the estimator for any structure of the replica correlation matrix. Consequently, the replica symmetric as well as the replica symmetry breaking ansatz with $b$ steps of breaking is deduced from the given general replica ansatz. The generality of our distortion function lets us derive a more general form of the maximum-a-posterior decoupling principle. Based on the general replica ansatz, we show that for any structure of the replica correlation matrix, the vector-valued system decouples into a bank of equivalent decoupled linear systems followed by maximum-a-posterior estimators. The structure of the decoupled linear system is further studied under both the replica symmetry and the replica symmetry breaking assumptions. For $b$ steps of symmetry breaking, the decoupled system is found to be an additive system with a noise term given as the sum of an independent Gaussian random variable with $b$ correlated impairment terms. The general decoupling property of the maximum-a-posterior estimator leads to the idea of a replica simulator which represents the replica ansatz through the state evolution of a transition system described by its corresponding decoupled system. As an application of our study, we investigate large compressive sensing systems by considering the $\ell_p$ norm minimization recovery schemes. Our numerical investigations show that the replica symmetric ansatz for $\ell_0$ norm recovery fails to give an accurate approximation of the mean square error as the compression rate grows, and therefore, the replica symmetry breaking ans\"atze are needed in order to assess the performance precisely.
\end{abstract}

\begin{IEEEkeywords}
Maximum-a-posterior estimation, linear vector channel, decoupling principle, equivalent single-user system, compressive sensing, zero norm, replica method, statistical physics, replica symmetry breaking, replica simulator
\end{IEEEkeywords}


%

%
%
%
%


 
\section{Introduction}
\label{sec:introduction}
Consider a vector-valued \ac{awgn} system specified by
\begin{align}
\by=\mA \bx + \bz \label{eq:sys-1}
\end{align}
where the \ac{iid} source vector $\bx_{n \times 1}$, with components in a support set $\setX \subset \setR$, is measured by the random system matrix $\mA_{k \times n} \in \setA^{k \times n}$, with $\setA\subset \setR$, and corrupted by the \ac{iid} zero-mean Gaussian noise vector $\bz_{k \times 1}$, with variance $\lambda_0$, i.e., $\bz \sim \man(\boldsymbol{0},\lambda_0 \mI)$. The source vector can be estimated from the observation vector $\by_{k \times 1}$ using a \ac{map} estimator. For a given system matrix $\mA$, the estimator maps the observation vector to the estimated vector $\bhx_{n \times 1} \in \setX^n$ via the estimation function~$\bgg(\cdot | \mA)$~defined~as
\begin{align}
\bgg(\by | \mA)= \arg \min_{\bv \in \setX^n} \ \left[ \frac{1}{2\lambda} \norm{\by-\mA \bv}^2 + u(\bv) \right]  \label{eq:int-2}
\end{align}
for some ``utility function'' $u(\cdot): \setR^n \mapsto \setR^+$ and estimation parameter $\lambda \in \setR^+$. In \eqref{eq:int-2}, $\norm{\cdot}^2$ denotes the Euclidean norm, and it is assumed that the minimum is not degenerate so that $\bgg(\cdot | \mA)$ is well-defined, at least for almost all $\by$ and $\mA$. In order to analyze the performance of the system in the large-system limit, i.e., $k,n \uparrow \infty$, one considers a general distortion function $\sfd(\cdot;\cdot): \setX \times \setX \mapsto \setR$. For some choices of $\sfd(\cdot;\cdot)$, the distortion function determines the distance between the source and estimated vector, e.g. $\sfd(\hx;x)=\abs{\hx-x}^2$; however, in general, it takes different choices. The asymptotic distortion
\begin{align}
\sfD = \lim_{n \uparrow \infty} \frac{1}{n} \sum_{i=1}^n \sfd(\hx_i;x_i), \label{eq:int-2.1}
\end{align}
then, expresses the large-system performance regarding the distortion function $\sfd(\cdot;\cdot)$. The performance analysis of the estimator requires \eqref{eq:int-2} to be explicitly computed, and then, $\bhx=\bgg(\by|\mA)$ substituted in the distortion function. This task, however, is not trivial for many choices of the utility function and the source support $\setX$, and becomes unfeasible as $n$ grows large. As basic analytic tools fail, we take a statistical mechanical approach and investigate the large-system performance by studying the macroscopic parameters of a corresponding spin glass. This approach enables us to use the replica method which has been developed in the context of statistical mechanics.
\subsection{Corresponding Spin Glass}
\label{sec:spin_glasses}
Consider a thermodynamic system which consists of $n$ particles with each having a microscopic parameter $v_i \in \setV $. The vector $\bv_{n \times 1}= \left[ v_1,\ldots,v_n\right]^\trp$, collecting the microscopic parameters, presents then the microscopic state of the system and is called the ``microstate''. The main goal of statistical mechanics is to excavate the ``macroscopic parameters'' of the system, such as energy and entropy through the analysis of the microstate in the thermodynamic limit, i.e., $n \uparrow \infty$. Due to the large dimension of the system, statistical mechanics proposes a stochastic approach in which the microstate is supposed to be randomly distributed over the support $\setV^n$ due to some distribution $\rmp_{\bv}$. For this system, the Hamiltonian $\mae(\cdot): \setR^n \mapsto \setR^+$ assigns to each realization of the microstate a non-negative energy level, and $\rmH \coloneqq -\E_{\rmp_{\bv}} \log \rmp_{\bv}$ denotes the system's entropy. The ``free energy'' of the thermodynamic system at the inverse temperature $\upbeta$ is then defined as
\begin{align}
\sfF(\upbeta) \coloneqq \E_{\rmp_{\bv}} \mae(\bv)-\upbeta^{-1} \rmH. \label{eq:int-3}
\end{align}
The second law of thermodynamics states that the microstate at thermal equilibrium takes its distribution such that the free energy meets its minimum. Thus, the microstate's distribution at thermal equilibrium reads
\begin{align}
\rmp^{\upbeta}_{\bv}(\bv)= \left[ \maz(\upbeta)\right]^{-1} e^{-\upbeta \mae(\bv)} \label{eq:int-4}
\end{align}
where $\maz(\upbeta)$ is a normalization factor referred to as the ``partition function'', and the superscript $\upbeta$ indicates the distribution's dependence on the inverse temperature. The distribution in \eqref{eq:int-4} is known as the ``Boltzmann-Gibbs distribution'' and covers many distributions on $\setV^n$ by specifying $\mae(\cdot)$ and $\upbeta$ correspondingly. Substituting the Boltzmann-Gibbs distribution in \eqref{eq:int-3}, the free energy at thermal equilibrium and inverse temperature $\upbeta$ reads
\begin{align}
\sfF(\upbeta) = -\upbeta^{-1} \log \maz(\upbeta). \label{eq:int-5}
\end{align}
The average energy and entropy of the system at thermal equilibrium are then determined by taking expectation over the distribution in \eqref{eq:int-4}, i.e.,
\begin{subequations}
\begin{align}
\rmE(\upbeta) &\coloneqq  \E_{\rmp_{\bv}^\upbeta} \mae(\bv)\\
\rmH(\upbeta) &\coloneqq -\E_{\rmp_{\bv}^\upbeta} \log \rmp^{\upbeta}_{\bv}(\bv), 
\end{align}
\end{subequations}
which can be calculated in terms of the free energy via
\begin{subequations}
\begin{align}
\rmE(\upbeta) &= \frac{\dif}{\dif \upbeta} \left[ \upbeta \sfF(\upbeta) \right] \label{eq:int-5.1a} \\
\rmH(\upbeta) &= \upbeta^2 \frac{\dif}{\dif \upbeta} \left[ \sfF(\upbeta) \right]. \label{eq:int-5.1b}
\end{align}
\end{subequations}

In spin glasses \cite{edwards1975theory}, the Hamiltonian assigns the energy levels randomly using some randomizer $\sfQ$ resulting from random interaction coefficients. 
 In fact, each realization of $\sfQ$ specifies a thermodynamic system represented by the deterministic Hamiltonian $\mae(\cdot|\sfQ)$. 
In statistical mechanics, $\sfQ$ is known to have ``quenched'' randomness while the microstate is an ``annealed'' random variable. The analysis of spin glasses takes similar steps as above considering a given realization of the randomizer, and therefore, as the system converges to its thermal equilibrium at the inverse temperature $\upbeta$, the microstate's conditional distribution given $\sfQ$, i.e., $\rmp^\upbeta_{\bv|\sfQ}$, is a Boltzmann-Gibbs distribution specified by $\mae(\cdot|\sfQ)$. Consequently, the free energy reads
\begin{align}
\sfF(\upbeta|\sfQ) = -\upbeta^{-1} \log \maz(\upbeta|\sfQ) \label{eq:int-6}.
\end{align}
where $\maz(\upbeta|\sfQ)$ is the partition function with respect to the Hamiltonian $\mae(\cdot|\sfQ)$. Here, the free energy, as well as other macroscopic parameters of the system, is random; however, the physical intuition behind the analyses suggests that these random macroscopic parameters converge to deterministic values in the thermodynamic limit. This property is known as the ``self averaging property'' and has been rigorously justified for some particular classes of Hamiltonians, e.g., \cite{pastur1991absence,guerra2002thermodynamic,guerra2002infinite,korada2010tight}. Nevertheless, in cases where a mathematical proof is still lacking, the property is supposed to hold during the analysis. According to the self averaging property, the free energy of spin glasses converges to its expectation in the thermodynamic limit.

As mentioned earlier, the \ac{map} estimator in \eqref{eq:int-2} can be investigated using a corresponding spin glass. To see that, consider a spin glass whose microstate is taken from $\setX^n$, and whose Hamiltonian is defined as
\begin{align}
\mae(\bv|\by,\mA)=\frac{1}{2\lambda} \norm{\by-\mA \bv}^2 + u(\bv). \label{eq:int-7}
\end{align}
Here, the system matrix $\mA$ and the observation vector $\by$ are considered to be the randomizers of the spin glass. In this case, given $\mA$ and $\by$, the conditional distribution of the microstate is given by
\begin{align}
\rmp^{\upbeta}_{\bv|\by,\mA}(\bv|\by,\mA)= \left[ \maz(\upbeta|\by,\mA)\right]^{-1} e^{-\upbeta \mae(\bv|\by,\mA)}. \label{eq:int-8}
\end{align}
Taking the limit when $\upbeta \uparrow \infty$ and using Laplace method of integration \cite{merhav2010statistical}, the zero temperature distribution, under this assumption that the minimizer is unique, reduces to
\begin{subequations}
\begin{align}
\rmp^{\infty}_{\bv|\by,\mA}(\bv|\by,\mA)&= \mone \{ \bv = \arg \min_{\bv \in \setX^n} \mae(\bv|\by,\mA)\}\label{eq:int-9a} \\
&=\mone \{ \bv = \bgg(\by | \mA) \}, \label{eq:int-9b}
\end{align}
\end{subequations}
where $\mone\{\cdot\}$ denotes the indicator function, and $\bgg(\cdot |\mA)$ is defined as in \eqref{eq:int-2}. \eqref{eq:int-9b} indicates that the microstate of the spin glass converges to the estimated vector of the \ac{map} estimator, i.e., $\bhx=\bgg(\by|\mA)$, in the zero temperature limit. Invoking this connection, we study the corresponding spin glass instead of the \ac{map} estimator. We represent the input-output distortion of the system regarding a general distortion function $\sfd(\cdot;\cdot)$ as a macroscopic parameter of the spin glass. Consequently, the replica method developed in statistical mechanics is employed to determine the defined macroscopic parameter of the corresponding spin glass. The replica method is a generally nonrigorous but effective method developed in the physics literature to study spin glasses. Although the method lacks rigorous mathematical proof in some particular parts, it has been widely accepted as an analysis tool and utilized to investigate a variety of problems in applied mathematics, information processing, and coding \cite{mezard1986replica,fu1986application,nishimori2001statistical,montanari2000turbo}. 

The use of the replica method for studying multiuser estimators goes back to \cite{tanaka2002statistical} where Tanaka determined the asymptotic spectral efficiency of \ac{mpm} estimators by employing the replica method. The study demonstrated interesting large-system properties of multiuser estimators, and consequently, the statistical mechanical approach received more attention in the context of multiuser systems. This approach was then employed in the literature to study multiple estimation problems in large vector-valued linear systems, e.g. \cite{tanaka2001average,guo2003multiuser,guo2005randomly}. The method was also utilized to analyze the asymptotic properties of \ac{mimo} systems in \cite{muller2003channel} considering an approach similar to \cite{tanaka2002statistical}. Regarding multiuser estimators, the earlier studies mainly considered the cases in which the entries of the source vector are binary or Gaussian random variables. The results were later extended to a general source distribution in \cite{guo2005randomly}. The statistical mechanical approach was further employed to address mathematically similar problems in vector precoding, compressive sensing and analysis of superposition codes \cite{muller2008vector,guo2009single,barbier2014replica}, to name just a few examples. Despite the fact that the replica method lacks mathematical rigor, a body of work, such as \cite{montanari2006analysis,huleihel2017asymptotic,korada2010tight,reeves2016replica, barbier2016mutual,barbier2017mutual,barbier2016threshold,barbier2017universal}, has shown the validity of several replica-based results in the literature, e.g., Tanaka's formula in \cite{tanaka2002statistical}, using some alternative rigorous approaches. We later discuss these rigorous results with more details by invoking the literature of compressive sensing.

\subsection{Decoupling Principle}
\label{subsec:decoupling}
Considering the \ac{map} estimator defined in \eqref{eq:int-2}, the entries of the estimated vector $\bhx$ are correlated in general, since the system matrix couples the entries of $\bx$ linearly, and $\bgg(\cdot|\mA)$ performs several nonlinear operations on $\by$. In the large-system performance analysis, the marginal joint distribution of two corresponding input-output entries $x_j$ and $\hx_j$, $1\leq j\leq n$, is of interest. To clarify our point, consider the case in which a linear estimator is employed instead of \eqref{eq:int-2}, i.e., $\bhx=\mG^{\trp} \by$. Denote the matrices $\mA$ and $\mG$ as $\mA=[\baa_1 \cdots \baa_n]$ and $\mG=[\bgg_1 \cdots \bgg_n]$, respectively, with $\baa_i$ and $\bgg_i$ being $k \times 1$ vectors for $1\leq i\leq n$. Therefore, $\hx_j$ is written as
\begin{subequations}
\begin{align}
\hx_j &= \bgg_j^\trp \by \label{eq:int-10a} \\
&=\bgg_j^\trp \left[ \sum_{i=1}^n x_i \baa_i + \bz \right] \label{eq:int-10b}\\
&=(\bgg_j^\trp \baa_j) \ x_j + \sum_{i=1, i\neq j}^n (\bgg_j^\trp \baa_i) \ x_i + \bgg_j^\trp \bz. \label{eq:int-10c}
\end{align}
\end{subequations}
Here, the right hand side of \eqref{eq:int-10c} can be interpreted as the linear estimation of a single-user system indexed by $j$ in which the symbol $x_j$ is corrupted by an additive impairment given by the last two summands in the right hand side of \eqref{eq:int-10c}. The impairment term is not necessarily independent and Gaussian. For some classes of matrix ensembles, and under a set of assumptions, it is shown that the dependency of the derived single-user systems on the index $j$ vanishes, and the distribution of the impairment terms converges to a Gaussian distribution in the large-system limit \cite{guo2002asymptotic}. As a result, one can assume the vector-valued system described by \eqref{eq:sys-1} followed by the linear estimator $\mG$ to be a set on $n$ additive scalar systems with Gaussian noise which have been employed in parallel. In other words, the vector system can be considered to ``decouple'' into a set of similar scalar systems. Each of them relates an input entry $x_j$ to its corresponding estimated one $\hx_j$. This asymptotic property of the estimator is referred to as the ``decoupling property'' and can be investigated through the large-system performance analysis.

The decoupling property was first studied for linear estimators. Tse and Hanly noticed this property while they were determining the multiuser efficiency of several linear multiuser estimators in the large-system limit \cite{tse1999linear}. They showed that for an \ac{iid} system matrix, the effect of impairment is similar to the effect of some modified Gaussian noise when the dimension tends to infinity. This asymptotic property was then investigated further by studying the asymptotics of different linear receivers and their large-system distributions \cite{tse2000linear, eldar2003asymptotic}. In an independent work, Verd\'u and Shamai also studied the linear \ac{mmse} estimator and showed that the conditional output distribution is asymptotically Gaussian \cite{verdu1999spectral}. In \cite{zhang2001output}, the authors studied the asymptotics of the impairment term when a family of linear estimators is employed and proved that it converges in distribution to a Gaussian random variable. The latter result was further extended to a larger class of linear estimators in \cite{guo2002asymptotic}.

Regarding linear estimators, the main analytical tool is random matrix theory \cite{tulino2004random,muller2013applications}. In fact,~invoking~properties of large random matrices and the central limit theorem, the decoupling property is rigorously proved, e.g. \cite{guo1999linear,shamai2001impact}. These tools, however, fail for nonlinear estimators as the source symbol and impairment term do not decouple linearly due to nonlinear operations at the estimators. In \cite{muller2004capacity}, M\"uller and Gerstacker employed the replica method and studied the capacity loss due to the separation of detection and decoding. The authors showed that the additive decoupling of the spectral efficiency, reported in \cite{shamai2001impact} for Gaussian inputs, also holds for binary inputs. As a result,~it was conjectured that regardless of input distribution and linearity, the spectral efficiency always decouples in an additive form \cite{muller2002channel}. In \cite{guo2005randomly}, Guo and Verd\'u justified this conjecture for a family of nonlinear \ac{mmse} estimators, and showed that for an \ac{iid} system matrix, the estimator decouples into a bank of single-user \ac{mmse} estimators under the \ac{rs} ansatz. In \cite{rangan2012asymptotic}, Rangan et al. studied the asymptotic performance of a class of \ac{map} estimators. Using standard large deviation arguments, the authors represented the \ac{map} estimator~as the limit of an indexed \ac{mmse} estimators' sequence. Consequently, they determined the estimator's asymptotics employing the results from \cite{guo2005randomly} and justified the decoupling property of \ac{map} estimators under the \ac{rs} ansatz for an \ac{iid} $\mA$. %

Regarding the decoupling property of \ac{map} estimators, there are still two main issues which need further~investigations:
\begin{inparaenum}
\item cases in which the system matrix $\mA$ is not \ac{iid}, and
\item the analysis of the estimator under the \ac{rsb} ans\"atze.
\end{inparaenum}
The first issue was partially addressed in \cite{tulino2013support} where, under the \ac{rs} assumption, the authors studied the asymptotics of a \ac{map} estimator employed to recover the support of a source vector from observations received through noisy sparse random measurements. They considered a model in which a sparse Gaussian source vector\footnote{It means that the entries of the source vector are of the form $x_i b_i$ where $x_i$ and $b_i$ are Gaussian and Bernoulli random variables, respectively.} is first randomly measured by a square matrix $\mV$, and then, the measurements are sparsely sampled by a diagonal matrix $\mB$ whose non-zero entries are \ac{iid} Bernoulli random variables. For this setup, the input-output information rate and support recovery error rate were investigated by considering the measuring matrix $\mV$ to belong to a larger set of matrix ensembles. These results, moreover, could address the decoupling property of the considered setting. Although the class of system matrices is broadened in \cite{tulino2013support}, it cannot be considered as a complete generalization of the property presented in \cite{guo2005randomly} and \cite{rangan2012asymptotic}, since it is restricted to cases with a sparse Gaussian source and loading factors less than one, i.e., $kn^{-1}<1$ in \eqref{eq:sys-1}. Vehkaper\"a et al. also tried to investigate the first issue for a similar formulation in compressive sensing \cite{vehkapera2014analysis}. In fact, the authors considered a linear sensing model as in \eqref{eq:sys-1} for the class of rotationally invariant random matrices\footnote{The class of rotationally invariant random matrices is precisely defined later throughout the problem formulation.} and under the \ac{rs} ansatz determined the asymptotic \ac{mse} for the least-square recovery schemes which can be equivalently represented by the formulation in \eqref{eq:int-2}. The large-system results in \cite{vehkapera2014analysis}, however, did not address the asymptotic marginal joint input-output distribution, and the emphasis was on the \ac{mse}. Regarding the second issue, the \ac{map} estimator has not yet been investigated under \ac{rsb} ans\"atze in the literature. Nevertheless, the necessity of such investigations was mentioned for various similar settings in the literature; see for example \cite{yoshida2007statistical,kabashima2009typical,zaidel2012vector}. In \cite{yoshida2007statistical}, the performances of \ac{cdma} detectors were investigated by studying both the \ac{rs} and one-step \ac{rsb} ans\"atze and the impact of symmetry breaking onto the results for low noise scenarios were discussed. The authors in \cite{zaidel2012vector} further studied the performance of vector precoding under both \ac{rs} and \ac{rsb} and showed that the analysis under \ac{rs} yields a significantly loose bound on the true performance. The replica ansatz with one-step of \ac{rsb}, however, was shown to lead to a tighter bound consistent with rigorous performance bound available in the literature. A similar observation was recently reported for the problem of least-square-error precoding in \cite{sedaghat2016lse,bereyhi2017asymptotics}. The replica analyses of compressive sensing in \cite{kabashima2009typical,takeda2011statistical}, moreover, discussed the necessity of investigating the performance of $\ell_p$ minimization recovery schemes under \ac{rsb} ans\"atze for some choices of $p$.

\subsection{Compressive Sensing}
\label{subsec:compressive_sensing}
The \ac{map} estimation of a source vector from a set of noisy linear observations arises in several applications, such as \ac{mimo} and sampling systems. To address one, we consider large compressive sensing systems and employ our asymptotic results to analyze the large-system performance \cite{donoho2006compressed,candes2006robust,candes2006near}. In context of compressive sensing, \eqref{eq:sys-1} represents a noisy sampling system in which the source vector $\bx$ is being sampled linearly via $\mA$ and corrupted by additive noise $\bz$. In the ``noise-free'' case, i.e. $\lambda_0 = 0$, the source vector $\bx$ is exactly recovered from the observation vector $\by$, if the number of observations $k$ is as large as the source length $n$ and the sampling matrix $\mA$ is full rank. As the number of observations reduces, the possible answers to the exact reconstruction problem may increase regarding the source support $\setX$, and therefore, the recovered source vector from the observation vector is not necessarily unique. In this case, one needs to enforce some extra constraints on the properties of the source vector in order to recover it uniquely among all possible solutions. %
In compressive sensing, the source vector is supposed to be sparse, i.e., a certain fraction of entries are zero. This property of the source imposes an extra constraint on the solution which allows for exact recovery in cases with $k < n$. In fact, in this case, one should find a solution to $\by=\mA \bv$ over
\begin{align}
\setS=\left\lbrace \bv_{n\times 1} \in \setX^n: \ \norm{\bv}_0 <\alpha n \right\rbrace \label{eq:cs-zero}
\end{align}
where $\norm{\cdot}_0$ denotes the ``$\ell_0$ norm'' and is defined as $\norm{\bv}_0 \coloneqq \sum_{i=1}^n \mone\{v_i \neq 0\}$, and $\alpha \leq 1$ is the source's sparsity factor defined as the fraction of non-zero entries. Depending on $\mA$ and $\setX$, the latter problem can have a unique solution even for $k < n$ \cite{donoho2001uncertainty,elad2002generalized,donoho2003optimally}. Searching for this solution optimally over $\setS$, however, is an NP-hard problem and therefore intractable. The main goal in noise-free compressive sensing is to study feasible reconstruction schemes and derive tight bounds on the sufficient compression rate, i.e., $k/n$, for exact source recovery via these schemes. 

In noisy sampling systems, exact recovery is only possible for some particular choices of $\setX$. Nevertheless, considering either cases in which exact recovery is not possible or choices of $\setX$ for which the source vector can be exactly recovered from noisy observations, the recovery approaches in these sensing systems need to take the impact of noise into account. The classical strategy in this case is to find a vector in $\setS$ such that the recovery distortion is small. Consequently, a recovery scheme for noisy sensing system based on the $\ell_0$ norm is given by
\begin{align}
\bhx=\arg \min_{\bv \in \setX^n} \ \left[ \frac{1}{2\lambda} \norm{\by-\mA \bv}^2 + \norm{\bv}_0 \right] \label{eq:int-12}
\end{align}
which is the \ac{map} estimator defined in \eqref{eq:int-2} with $u(\bv)=\norm{\bv}_0$. It is straightforward to show that for $\lambda_0=0$, i.e., zero noise variance, \eqref{eq:int-12} reduces to the optimal noise-free recovery scheme as $\lambda \downarrow 0$. Similar to the noise-free case, the scheme in \eqref{eq:int-12} results in a non-convex optimization problem, and therefore, is computationally infeasible. Alternatively, a computationally feasible schemes is introduced by replacing the $\ell_0$ norm in the cost function with the $\ell_1$ norm. The resulting recovery scheme is known as LASSO \cite{tibshirani1996regression} or basis pursuit denoising \cite{chen2001atomic}. Based on these formulations, several iterative and greedy algorithms have been introduced for recovery taking into account the sparsity pattern and properties of the sampling matrix \cite{needell2009cosamp,dai2009subspace,cai2011orthogonal}. The main body of work in noisy compressive sensing then investigates the trade-off between the compression rate and recovery distortion. 

For large compressive sensing systems, it is common to consider a random sensing matrix, since for these matrices, properties such as restricted isometry property are shown to hold with high probability \cite{foucart2013mathematical}. In this case, the performance of a reconstruction schemes is analyzed by determining the considered performance metric, e.g., \ac{mse} and probability of exact recovery in the noisy and noise-free case, respectively, for a given realization of the sensing matrix. The average performance is then calculated by taking the expectation over the matrix distribution.~Comparing \eqref{eq:int-12} with \eqref{eq:int-2}, one can utilize the \ac{map} formulation, illustrated at the beginning of this section, to study the large-system performance of several reconstruction schemes. This similarity was considered in a series of papers, e.g., \cite{guo2009single,rangan2012asymptotic}, and therefore, earlier replica results were employed to study compressive sensing systems. The extension of analyses from the context of multiuser estimation had the disadvantage that the assumed sampling settings were limited to those setups which are consistent with the estimation problems in the literature. Compressive sensing systems, however, might require a wider set of assumptions, and thus, a large class of settings could not be addressed by earlier investigations. As the result, a body of work deviated from this approach and applied the replica method directly to the compressive sensing problem; see for example \cite{tulino2013support,vehkapera2014analysis,kabashima2010statistical,kabashima2009typical,wen2016sparse,zheng2017does}. 

Although in general the replica method is considered to be mathematically non-rigorous, several recent studies have justified the validity of the replica results in the context of compressive sensing by using some alternative tools for analysis. A widely investigated approach is based on the asymptotic analysis of \ac{amp} algorithms. In the context of compressive sensing, the \ac{amp} algorithms were initially introduced to address iteratively the convex reconstruction schemes based on $\ell_1$ norm minimization, such as LASSO and basis pursuit, with low computational complexity \cite{donoho2009message,donoho2010message}. The proposed approach was later employed to extend the algorithm to a large variety of estimation problems including \ac{map} and \ac{mmse} estimation; see for example \cite{rangan2011generalized,montanari2012graphical}. The primary numerical investigations of \ac{amp} demonstrated that for large sensing systems the sparsity-compression rate tradeoff of these iterative algorithms, as well as the compression rate-distortion tradeoff in noisy cases, is derived by the fixed-points of ``state evolution'' and recovers the asymptotics of convex reconstruction schemes \cite{donoho2010message}. This observation was then rigorously justified for \ac{iid} sub-Gaussian sensing matrices in \cite{bayati2011dynamics} by using the conditioning technique developed in \cite{bolthausen2009high}. The study was recently extended to cases with rotationally invariant system matrices in \cite{rangan2017vector,takeuchi2017rigorous}. The investigations in \cite{donoho2013information,krzakala2012statistical} moreover showed that using \ac{amp} algorithms for spatially coupled measurements, the fundamental limits on the required compression rate \cite{wu2012optimal,wu2011mmse} can be achieved in the asymptotic regime. The methodology proposed by \ac{amp} algorithms and their state evolution also provided a justification for validity of several earlier studies based on the replica method. In fact, the results given by the replica method were recovered through state evolution of the corresponding \ac{amp} algorithms. Invoking this approach along with other analytical tools, the recent study in \cite{barbier2017mutual} further approved the validity of the replica prediction for the asymptotic \ac{mmse} and mutual information of the linear system in \eqref{eq:sys-1} with \ac{iid} Gaussian measurements. Similar results were demonstrated in \cite{reeves2016replica} using a different approach. 

\subsection{Contribution and Outline}
\label{subsec:contribution}
In this paper, we determine the asymptotic distortion for a general distortion function for cases where the \ac{map} estimator is employed to estimate the source vector from the observation given in \eqref{eq:sys-1}. We represent the  asymptotic distortion in \eqref{eq:int-2.1} as a macroscopic parameter of a corresponding spin glass and study the spin glass via the replica method. The general replica ansatz is then given for an arbitrary replica correlation matrix, and its special cases are studied considering the \ac{rs} and \ac{rsb} assumptions. The asymptotic distortion is determined for rotationally invariant random system matrices invoking results for asymptotics of spherical integrals \cite{guionnet2002large,guionnet2005fourier}. Using our asymptotic results, we derive a more general form of the decoupling principle by restricting the distortion function to be of a special form and employing the moment method \cite{akhiezer1965classical,simon1998classical}. We show that the vector-valued system in \eqref{eq:sys-1} estimated by \eqref{eq:int-2} decouples into a bank of similar noisy single-user systems followed by single-user \ac{map} estimators. This result holds for any replica correlation matrix; however, the structure of the decoupled single-user system depends on the supposed structure of the correlation matrix. Under the \ac{rsb} assumption with $b$ steps of breaking ($b$\ac{rsb}), the noisy single-user system is given in the form of an input term added by an impairment term. The impairment term, moreover, is expressed as the sum of an independent Gaussian random variable and $b$ correlated interference terms. By reducing the assumption to \ac{rs}, the result reduces to the formerly studied \ac{rs} decoupling principle of the \ac{map} estimators \cite{rangan2012asymptotic} for rotationally invariant system matrix ensembles. In fact, our investigations collect the previous results regarding the decoupling principle in addition to a new set of setups under a single umbrella given by a more general form of the decoupling principle. More precisely, we extend the scope of the decoupling principle to
\begin{itemize}
\item the systems whose measuring matrix belongs to the class of rotationally invariant random matrices, and
\item the replica ansatz with general replica correlations which include the \ac{rs} and \ac{rsb} ans\"atze.
\end{itemize}

To address a particular application, we study the large-system performance of a compressive sensing system under the $\ell_p$ minimization recovery schemes. We address the linear reconstruction, as well as the LASSO and the $\ell_0$ norm scheme considering both the sparse Gaussian and finite alphabet sources. Our general setting allows to investigate the asymptotic performance with respect to different metrics and for multiple sensing matrices such as random \ac{iid} and projector. The numerical investigations show that the \ac{rs} ansatz becomes unstable for some regimes of system parameters and predicts the performance of $\ell_0$ minimization recovery loosely within a large range of compression rates. This observation agrees with the earlier discussions on the necessity of \ac{rsb} investigations reported in \cite{kabashima2009typical,takeda2011statistical}. We therefore study the performance under \ac{rsb} and discuss the impact of the symmetry breaking. Throughout the numerical investigations, it is demonstrated that the performance enhancement obtained via random orthogonal measurements, reported in \cite{vehkapera2014analysis}, also holds for sparse finite alphabet sources in which sensing via random projector matrices results in phase transitions at higher rates.

The rest of the manuscript is organized as follows. In Section \ref{sec:problem_formulation}, the problem is formulated. We illustrate our statistical mechanical approach in Section \ref{sec:statisc_app} and explain briefly the replica method. The general replica ansatz, as well as the general decoupling principle is given in Section \ref{sec:results}. The ansatz under the \ac{rs} and \ac{rsb} assumptions are expressed in Sections \ref{sec:rs} and \ref{sec:rsb}. Based on $b$\ac{rsb} decoupled system, we propose the idea of a replica simulator in \ref{sec:rep_sim} and describe the given ans\"atze in terms of the corresponding decoupled systems. To address an application of our study, we consider large compressive sensing systems in Section \ref{sec:cs} and discuss several examples. The numerical investigations of the examples are then given in Section \ref{sec:numerics}. Finally, we conclude the manuscript in Section \ref{sec:conclusion}.
\subsection{Notations}
Throughout the manuscript, we represent scalars, vectors and matrices with lower, bold lower, and bold upper case letters, respectively. The set of real numbers is denoted by $\setR$, and $\mA^{\trp}$ and $\mA^{\mathsf{H}}$ indicate the transposed and Hermitian of $\mA$. $\mI_m$ is the $m\times m$ identity matrix and $\mone_m$ is an $m \times m$ matrix with all entries equal to $1$. For a random variable $x$, $\mathrm{p}_x$ represents either the \ac{pmf} or \ac{pdf}, and $\mathrm{F}_x$ represents the \ac{cdf}. We denote the expectation over $x$ by $\mathsf{E}_x$, and an expectation over all random variables involved in a given expression by $\mathsf{E}$. $\setZ$ and $\setR$ represent the set of integer and real numbers and the superscript $+$, e.g. $\setR^+$, indicates the corresponding subset of all non-negative numbers. For sake of compactness, the set of integers $\{1, \ldots,n \}$ is abbreviated as $[1:n]$, the zero-mean and unit-variance Gaussian \ac{pdf} is denoted by $\phi(\cdot)$, and the Gaussian averaging is expressed as
\begin{align}
\int \left( \cdot \right) \md t \coloneqq  \int_{-\infty}^{+\infty} \left(\cdot\right) \phi(t) \dif t .
\end{align}
Moreover, in many cases, we drop the set on which a sum, minimization or an integral is taken. Whenever needed, we consider the entries of $\bx$ to be discrete random variables, namely the support $\setX$ to be discrete. The results of this paper, however, are in full generality and hold also for continuous distributions.
\subsection{Announcements}
Some of the results of this manuscript were presented at the IEEE Information Theory Workshop \cite{bereyhi2016itw} and the Information Theory and Applications Workshop \cite{bereyhi2017replica}. Even though the results have a mathematical flavor, the stress is not on investigating the rigor of the available tools such as the replica method, but rather to employ them for deriving formulas which can be used in different problems. 

\section{Problem Formulation}
\label{sec:problem_formulation}
Consider the vector-valued linear system described by \eqref{eq:sys-1}. Let the system satisfy the following properties.

\begin{enumerate}[label=(\alph*)]
\item $\bx_{n \times 1}$ is an \ac{iid} random vector with each entry being distributed with $\rmp_x$ over $\setX \subseteq \setR$.
\item $\mA_{k \times n}$ is randomly generated over $\setA^{k \times n}$ with $\setA \subseteq \setR$ from rotationally invariant random ensembles. The random matrix $\mA$ is said to be rotationally invariant when its Gramian, i.e., $\mJ=\mA^{\trp} \mA$, has the eigendecomposition 
\begin{align}
\mJ= \mU \mD \mU^{\trp} \label{eq:sys-1.2}
\end{align}
with $\mU_{n \times n}$ being an orthogonal Haar distributed matrix and $\mD_{n \times n}$ being a diagonal matrix. For a given $n$, we denote the empirical \ac{cdf} of $\mJ$'s eigenvalues (cumulative density of states) with $\mathrm{F}_{\mJ}^n$ and define it as
\begin{align}
\mathrm{F}_{\mJ}^n (\lambda) =\frac{1}{n} \sum_{i=1}^n \mone \{ \lambda^{\mJ}_i < \lambda \}.
\end{align}
where $\lambda^{\mJ}_i$ for $i \in [1:n]$ denotes the $i$th diagonal entry of $\mD$. We assume that $\mathrm{F}_{\mJ}^n$ converges, as $n\uparrow\infty$, to a deterministic \ac{cdf} $\mathrm{F}_{\mJ}$.
\item $\bz_{k \times 1}$ is a real \ac{iid} zero-mean Gaussian random vector in which the variance of each entry is $\lambda_0$.
\item The number of observations $k$ is a deterministic function of the transmission length $n$, such that
\begin{align}
\lim_{n \uparrow \infty} \frac{k(n)}{n}=\frac{1}{\sfr} < \infty . \label{eq:eq:sys-1.3}
\end{align}For sake of compactness, we drop the explicit dependence of $k$ on $n$.
\item $\bx$, $\mA$ and $\bz$ are independent.
\end{enumerate}
The source vector $\bx$ is reconstructed from the observation vector $\by$ with a system matrix $\mA$ that is known at the estimator. Thus, for a given $\mA$, the source vector is recovered by $\bhx=\bgg(\by|\mA)$ where $\bgg(\cdot|\mA)$ is given in \eqref{eq:int-2}. %
Here, the non-negative scalar $\lambda$ is the estimation parameter and the non-negative cost function $u(\cdot)$ is referred to the ``utility function''. The utility function $u(\cdot)$ is supposed to decouple which means that it takes arguments with different lengths, i.e., $u(\cdot): \setR^\ell \mapsto \setR^+$ for any positive integer $\ell$, and
\begin{align}
u(\bx)=\sum_{i=1}^n u(x_i).
\end{align}
In order to use the estimator in \eqref{eq:int-2}, one needs to guarantee the uniqueness of the estimation output. Therefore, we impose the following constraint on our problem.
\begin{enumerate}[label=(\alph*)]\addtocounter{enumi}{5}
\item For a given observation vector $\by$, the objective function in \eqref{eq:int-2} has a unique minimizer over the support $\setX^n$.
\end{enumerate}

\subsection{\ac{map} Estimator}
The \ac{map} estimator in \eqref{eq:int-2} can be considered as the optimal estimator in the sense that it minimizes the reconstruction's error probability postulating a source prior distribution proportional to $e^{-u(\bx)}$ and a noise variance~$\lambda$. To clarify the argument, assume $\setX$ is a finite set\footnote{i.e., the entries of $\bx$ are discrete random variables.}. In this case, we can define the reconstruction's error probability~as
\begin{align}
\mathsf{P_e} = \Pr \{ \bx \neq \tilde{\bgg}(\by| \mA) \} \label{eq:sys-3}
\end{align}for some estimator $\tilde{\bgg}(\cdot| \mA)$. In order to minimize $\mathsf{P_e}$, $\tilde{\bgg}(\cdot| \mA)$ is chosen such that the posterior distribution over the input support $\setX^n$ is maximized, i.e.,
\begin{subequations}
\begin{align}
\tilde{\bgg}(\by| \mA) &= \arg \max_{\bv} \rmp_{\bx|\by,\mA}(\bv|\by,\mA) \label{eq:sys-4a} \\
&= \arg \max_{\bv} \rmp_{\by|\bx,\mA}(\by|\bv,\mA) \rmp_{\bx|\mA}(\bv|\mA) \label{eq:sys-4b} \\
&\stackrel{\star}{=} \arg \min_{\bv} \left[ - \log \rmp_{\by|\bx,\mA}(\by|\bv,\mA) - \log \rmp_{\bx}(\bv) \right]. \label{eq:sys-4c}
\end{align}
\end{subequations}
where $\star$ comes from the independency of $\bx$ and $\mA$. Here, $\rmp_{\by|\bx,\mA}({\by|\bv,\mA})=\rmp_{\bz}({\by-\mA\bv})$, and $\rmp_{\bx}$ is the prior distribution of the source vector. Now, let the estimator postulate the noise variance to be $\lambda$ and the prior to be
\begin{align}
\rmp_{\bx}(\bv)=\frac{e^{-u(\bv)}}{\sum_{\bv}e^{-u(\bv)}} \label{eq:sys-5}
\end{align}
for some non-negative function $u(\cdot)$. Substituting into \eqref{eq:sys-4c}, the estimator $\tilde{\bgg}(\cdot| \mA)$ reduces to ${\bgg}(\cdot| \mA)$ defined in \eqref{eq:int-2}. The estimator in \eqref{eq:int-2} models several particular reconstruction schemes in compressive sensing. We address some of these schemes later on in Section~\ref{sec:cs}.

\begin{remark}
\normalfont
In the case that the entries of $\bx$ are continuous random variables, the above argument needs some modifications. In fact, in this case the reconstruction's error probability as defined in \eqref{eq:sys-3} is always one, and therefore, it cannot be taken as the measure of error. In this case, the \ac{map} estimator is considered to maximize the posterior \ac{pdf} postulating $\rmp_{\bx}(\bv)\varpropto e^{-u(\bv)}$ and noise variance $\lambda$.
\end{remark}

\subsection{Asymptotic Distortion and Conditional Distribution}
\label{subsec:distortion_measure}
In many applications, the distortion is given in terms of the average \ac{mse}, while in some others the average symbol error rate is considered. In fact, the former takes the $\ell_2$ norm as the distortion function, and the latter considers the $\ell_0$ norm. The distortion function, however, can be of some other form in general. Here, we study the asymptotic performance by considering a general distortion function which determines the imperfection level of the estimation. Thus, we consider a distortion function $\sfd(\cdot;\cdot)$ which
\begin{align}
\sfd(\cdot;\cdot): \setX \times \setX \to \setR. \label{eq:sys-6}
\end{align}

The term ``average distortion'' usually refers to the case when the averaging weights are uniform. It means that each tuple of source-estimated entries $(x_i,\hx_i)$ is weighted equally when the distortion is averaged over all the entries' tuples. It is however possible to average the distortion by a non-uniform set of weights. In the following, we define the average distortion for a class of binary weights which includes the case of uniform averaging,~as~well.

\begin{definition}[Asymptotic distortion]
\label{def:average_dist}
Consider the vectors $\bx_{n \times 1}$ and $\bhx_{n \times 1}$ defined over $\setX^n$, and let $\sfd(\cdot;\cdot)$ be a distortion function defined in \eqref{eq:sys-6}. Define the index set $\setW(n)$ as a subset of $[1:n]$, and let $\abs{\setW(n)}$ grow with $n$ such that
\begin{align}
\lim_{n \uparrow \infty} \frac{\abs{\setW(n)}}{n}= \eta \label{eq:sys-7}
\end{align}
for some $\eta \in [0,1]$. Then, the average distortion of $\bhx$ and $\bx$ over the index set $\setW(n)$ regarding the distortion function $\sfd(\cdot;\cdot)$ is defined as
\begin{align}
\sfD^{\setW(n)}(\bhx;\bx) \coloneqq \frac{1}{\abs{\setW(n)}} \sum_{w \in \setW(n)} \sfd(\hx_w;x_w). \label{eq:sys-8}
\end{align}
Assuming the limit of $\eqref{eq:sys-8}$ when $n \uparrow \infty$ exists, we denote
\begin{align}
\sfD^{\setW}(\bhx;\bx) \coloneqq \lim_{n\uparrow\infty} \sfD^{\setW(n)}(\bhx;\bx) \label{eq:sys-9}
\end{align}
and refer to it as the ``asymptotic distortion'' over the limit of the index set $\setW(n)$.
\end{definition}

Definition \ref{def:average_dist} is moreover utilized to investigate the asymptotic conditional distribution of the estimator which plays a key role in studying the decoupling principle. For further convenience, we define the asymptotic conditional distribution of the \ac{map} estimator as follows.
\begin{definition}[Asymptotic conditional distribution]
\label{def:asymp_distribution}
Consider the source vector $\bx_{n \times 1}$ passed through the linear system in \eqref{eq:sys-1}, and let $\bhx_{n \times 1}$ be its \ac{map} estimation as given in \eqref{eq:int-2}. For a given $n$, we take an index $j \in [1:n]$ and denote the conditional distribution of $\hx_j$ given $x_j$ by $\rmp_{\hx|x}^{j(n)}$ which at the mass point $(\hv,v) \in \setX\times\setX$ reads
\begin{align}
\rmp_{\hx|x}^{j(n)}(\hv|v) \coloneqq  \left[\rmp_{x_j}(v)\right]^{-1} \rmp_{\hx_j,x_j}(\hv,v) \label{eq:sys-10}
\end{align}
with $\rmp_{\hx_j,x_j}(\hv,v)$ being the marginal joint distribution of $x_j$ and $\hx_j$ at the mass point $(\hv,v)$. Then, in the large-system limit, we define the asymptotic conditional distribution of $\hx_j$ given $x_j$ at $(\hv,v)$ as 
\begin{align}
\rmp_{\hx|x}^{j} (\hv|v) \coloneqq \lim_{n \uparrow \infty} \rmp_{\hx|x}^{j(n)} (\hv|v). \label{eq:sys-11}
\end{align}
\end{definition}

We study the asymptotic distortion over the limit of a desired index set $\setW(n)$ and distortion function $\sfd(\cdot;\cdot)$ by defining it as a macroscopic parameter of the corresponding spin glass and employing the replica method to evaluate it. Using the result for the asymptotic distortion, we determine then the asymptotic conditional distribution and investigate the decoupling property of the estimator.

\section{Statistical Mechanical Approach}
\label{sec:statisc_app}
The Hamiltonian in \eqref{eq:int-7} introduces a spin glass which corresponds to the \ac{map} estimator. The spin glass at zero temperature describes the asymptotics of the \ac{map} estimator. For further convenience, we formally define the ``corresponding spin glass'' as follows.

\begin{definition}[Corresponding spin glass]
\label{def:cors_spin}
\normalfont
Consider the integer $n\in \setZ^+$. The corresponding spin glass with the microstate $\bv_{n \times 1} \in \setX^n$ and the quenched randomizers $\by_{n \times 1}$ and $\mA_{k\times n}$ is given by
\begin{itemize}
\item $\mA$ is a rotationally invariant random matrix, and $\by$ is constructed as in \eqref{eq:sys-1} from the source vector $\bx$. 
\item For any realization of $\mA$ and $\by$, the Hamiltonian reads
\begin{align}
\mae(\bv|\by,\mA)=\frac{1}{2\lambda} \norm{\by-\mA \bv}^2 + u(\bv). \label{eq:sm-0.1}
\end{align}
for $\bv \in \setX^n$.
\end{itemize}
\end{definition}
For the corresponding spin glass, at the inverse temperature $\upbeta$, the following properties are~directly~concluded.
\begin{itemize}
\item The conditional distribution of the microstate reads
\begin{align}
\rmp^{\upbeta}_{\bv|\by,\mA}(\bv|\by,\mA)=\frac{e^{-\upbeta\mae(\bv|\by,\mA) }}{\maz(\upbeta|\by,\mA)} \label{eq:sm-0.2}
\end{align}
with $\maz(\upbeta|\by,\mA)$ being the partition function
\begin{align}
\maz(\upbeta|\by,\mA) = \sum_{\bv} e^{-\upbeta\mae(\bv|\by,\mA)}. \label{eq:sm-0.3}
\end{align}
\item The normalized free energy is given by
\begin{align}
\sfF(\upbeta)=-\frac{1}{\upbeta n} \sfE \log \maz(\upbeta|\by,\mA), \label{eq:sm-0.4}
\end{align}
where the expectation is taken over the quenched randomizers.
\item The entropy of the spin glass is determined as
\begin{align}
\rmH(\upbeta) &= \upbeta^2 \frac{\dif}{\dif \upbeta} \left[ \sfF(\upbeta) \right]. \label{eq:sm-0.5}
\end{align}
\end{itemize}

Regarding the \ac{map} estimator, one can represent the asymptotic distortion as a macroscopic parameter of the corresponding spin glass. More precisely, using Definition \ref{def:asymp_distribution}, the asymptotic distortion reads
\begin{align}
\sfD^{\setW}(\bhx;\bx)=\lim_{\upbeta\uparrow\infty}\lim_{n\uparrow\infty}\sfE_{\bv}^\upbeta\sfD^{\setW(n)}(\bv;\bx)\label{eq:sm-1}
\end{align}
where $\sfE_{\bv}^\upbeta$ indicates the expectation over $\bv \in \setX^n$ with respect to the conditional Boltzmann-Gibbs distribution $\rmp^{\upbeta}_{\bv|\by,\mA}$ defined in \eqref{eq:sm-0.2}. In fact, by introducing the macroscopic parameter $\sfD^{\setW}(\upbeta)$\footnote{In general, $\sfD^{\setW}(\upbeta)$ is a function of $\upbeta$, $\by$, $\bx$ and $\mA$. However, we drop the other arguments for sake of compactness.} at the temperature $\upbeta^{-1}$ as
\begin{align}
\sfD^{\setW}(\upbeta)\coloneqq\lim_{n\uparrow\infty}\sfE_{\bv}^\upbeta\sfD^{\setW(n)}(\bv;\bx), \label{eq:sm-2}
\end{align}
the asymptotic distortion can be interpreted as the macroscopic parameter at zero temperature. Here, we take a well-known strategy in statistical mechanics which modifies the partition function to
\begin{align}
\maz(\upbeta,h) = \sum_{\bv}  e^{-\upbeta\mae(\bv|\by,\mA)+h n\sfD^{\setW(n)} (\bv;\bx)}. \label{eq:sm-3}
\end{align}
In this case, the expectation in \eqref{eq:sm-2} is taken as
\begin{align}
\sfD^{\setW}(\upbeta)=\lim_{n \uparrow \infty} \lim_{h \downarrow 0} \frac{1}{n}\frac{\partial}{\partial h} \log \maz(\upbeta,h). \label{eq:sm-4}
\end{align}
The macroscopic parameter defined in \eqref{eq:sm-2} is random, namely depending on the quenched randomizer $\{ \by , \mA \}$. As discussed in Section \ref{sec:spin_glasses}, under the self averaging property, the macroscopic parameter is supposed to converge in the large-system limit to its expected value over the quenched random variables. For the corresponding spin glass defined in Definition \ref{def:cors_spin}, the self averaging property has not been rigorously justified, and the proof requires further mathematical investigations as in \cite{guerra2002infinite}. However, as it is widely accepted in the literature, we assume that the property holds at least for the setting specified here. Therefore, we state the following assumption.

\begin{assumption}[Self Averaging Property]
\label{asp:1}
\normalfont
Consider the corresponding spin glass defined in Definition \ref{def:cors_spin}. For almost all realizations of the quenched randomizers $\mA$ and $\by$,
\begin{align}
\sfD^{\setW}(\upbeta) = \sfE_{\by,\mA} \sfD^{\setW}(\upbeta). \label{eq:sm-5}
\end{align}
\end{assumption}
Using the self averaging property of the system, the asymptotic distortion is written as
\begin{align}
\sfD^{\setW}(\bhx;\bx)=\lim_{\upbeta\uparrow\infty}\lim_{n\uparrow\infty}\lim_{h \downarrow 0} \frac{1}{n} \frac{\partial}{\partial h} \sfE \log \maz(\upbeta,h). \label{eq:sm-6}
\end{align}
Evaluation of \eqref{eq:sm-6}, as well as the normalized free energy defined in \eqref{eq:sm-0.4}, confronts the nontrivial problem of determining the logarithmic expectation. The task can be bypassed by using the Riesz equality \cite{riesz1930valeurs} which for a given random variable $t$ states that
\begin{align}
\E \log t = \lim_{m \downarrow 0} \frac{1}{m}\log \E t^m. \label{eq:sm-7}
\end{align}
Using the Riesz equality, the asymptotic distortion can be finally written as 
\begin{align}
\sfD^{\setW}(\bhx;\bx)=\lim_{\upbeta\uparrow\infty}\lim_{n\uparrow\infty}\lim_{h \downarrow 0} \lim_{m \downarrow 0} \frac{1}{m} \frac{1}{n} \frac{\partial}{\partial h} \log \sfE [\maz(\upbeta,h)]^m. \label{eq:sm-8}
\end{align}
\eqref{eq:sm-8} expresses the asymptotic distortion in terms of the moments of the modified partition function; however, it does not yet simplify the problem. In fact, one faces two main difficulties when calculating the right hand side of \eqref{eq:sm-8}:
\begin{inparaenum}
\item the moment needs to be evaluated for real values of $m$ (at least within a right neighborhood of $0$), and
\item the limits need to be taken in the order stated.
\end{inparaenum}
Here is where the replica method plays its role. The replica method suggests to determine the moment for an arbitrary non-negative integer $m$ as an analytic function in $m$ and then assume the two following statements:
\begin{enumerate}
\item The moment function analytically continues from the set of integer numbers onto the real axis (at least for some $m$ at a right neighborhood of $0$) which means that an analytic expression found for integer moments directly extends to all (or some) real moments. Under this assumption, the expression determined for integer moments can be replaced in \eqref{eq:sm-8}, and the limit with respect to $m$ taken when $m \downarrow 0$. This assumption is the main part where the replica method lacks rigor and is known as the ``replica continuity''.
\item In \eqref{eq:sm-8}, the limits with respect to $m$ and $n$ exchange. We refer to this assumption as the ``limit exchange''.
\end{enumerate}

In order to employ the replica method, we need to suppose the validity of the above two statements; therefore, we state the following assumption.
\begin{assumption}[Replica Continuity and Limit Exchange]
\label{asp:2}
\normalfont
For the spin glass defined in Definition \ref{def:cors_spin}, the replica continuity and limit exchange assumptions hold.
\end{assumption}

By means of Assumption \ref{asp:2}, the calculation of asymptotic distortion reduces to the evaluation of integer moments of the modified partition function which is written as
\begin{subequations}
\begin{align}
\sfZ(m) &\coloneqq \sfE [\maz(\upbeta,h)]^m \label{eq:sm-9a} \\
&=\sfE \prod_{a=1}^m \ \sum_{\bv_a}  e^{-\upbeta\mae(\bv_a|\by,\mA)+h n\sfD^{\setW(n)} (\bv_a;\bx)} \label{eq:sm-9b}\\
&=\sfE_{\bx} \sfE_{\mA} \sfE_{\bz} \sum_{\{\bv_a\}}   e^{-\upbeta\sum_{a=1}^m\mae(\bv_a|\by,\mA)+h n\sum_{a=1}^m\sfD^{\setW(n)} (\bv_a;\bx)} \label{eq:sm-9c}.
\end{align}
\end{subequations}
Here, we refer to $\bv_a\in\setX^n$ for $a \in [1:m]$ as the replicas, and define $\{\bv_a\} \coloneqq \{\bv_1, \ldots, \bv_m\} \in \setX^n \times \cdots \times \setX^n$ as the set of replicas. After taking the expectation with respect to $\bz$ and $\mA$, it is further observed that, in the large limit, the expectation with respect to $\bx$ can be dropped due to the law of large numbers. By inserting the final expression for $\sfZ(m)$ in \eqref{eq:sm-8} and taking the limits, the asymptotic distortion is determined as~in~Proposition~\ref{proposition:1}~given~below.

\section{Main Results}
\label{sec:results}
Proposition \ref{proposition:1} states the general replica ansatz. The term ``general'' is emphasized here, in order to indicate that no further assumption needs to be considered for derivation. Using Proposition \ref{proposition:1} along with results in the classical moment problem \cite{simon1998classical}, a general form of the decoupling principle is justified for the \ac{map} estimator.

Before stating the general replica ansatz, let us define the $\rmR$-transform of a probability distribution. Considering a random variable $t$, the corresponding Stieltjes transform over the upper complex half plane is defined as 
\begin{align}
\mathrm{G}_t(s)= \E (t-s)^{-1}. \label{eq:rep-1}
\end{align}
Denoting the inverse with respect to composition by $\mathrm{G}_t^{-1}(\cdot)$, the $\rmR$-transform is given by
\begin{align}
\mathrm{R}_t(\omega)= \mathrm{G}_t^{-1}(-\omega) - \omega^{-1} \label{eq:rep-2}
\end{align}
such that $\lim_{\omega \downarrow 0} \mathrm{R}_{t}(\omega) = \E t$. 

The definition can also be extended to matrix arguments. Assume the matrix $\mM_{n \times n}$ to have the decomposition $\mM=\mU \mathbf{\Lambda} \mU^{-1}$ where $\mathbf{\Lambda}_{n \times n}$ is a diagonal matrix whose nonzero entries represent the eigenvalues of $\mM$, i.e., $\mathbf{\Lambda}=\mathrm{diag}[\lambda_1, \ldots, \lambda_n]$, and $\mU_{n \times n}$ is the matrix of eigenvectors. $\rmR_t(\mM)$ is then an $n \times n$ matrix defined as
\begin{align}
\mathrm{R}_t(\mM)=\mU \ \mathrm{diag}[\mathrm{R}_t(\lambda_1), \ldots, \mathrm{R}_t(\lambda_n)] \ \mU^{-1}. \label{eq:rep-3}
\end{align}

\subsection{General Replica Ansatz}
Proposition \ref{proposition:1} expresses the macroscopic parameters of the corresponding spin glass, including the asymptotic distortion, in terms of the parameters of a new spin glass of finite dimension. It is important to note that the new spin glass, referred to as ``spin glass of replicas'', is different from the corresponding spin glass defined in Definition \ref{def:cors_spin}. In fact, the spin glass of replicas is the projection of the corresponding spin glass on the reduced support $\setX^m$ with $m$ indicating the number of replicas. The macroscopic parameters of the spin glass of replicas can therefore readily be determined.
\begin{definition}[Spin glass of replicas]
\label{def:replica_spin}
\normalfont
For the finite integer $m$, the spin glass of replicas with the microstate $\bvv_{m \times 1} \in \setX^m$ and quenched randomizer $\bxx_{m \times 1}$ is defined as follows.
\begin{itemize}
\item All the entries of $\bxx$ equal to $x$ where $x$ is a random variable distributed with the source distribution $\rmp_x$.
\item For a given realization of $\bxx$, the Hamiltonian reads
\begin{align}
\mae^\sfR(\bvv|\bxx)= (\bxx-\bvv)^{\trp} \mT \rmR_{\mJ}(-2 \upbeta \mT \mQ) (\bxx-\bvv) + u(\bvv) \label{eq:rep-4}
\end{align}
where $\rmR_{\mJ}(\cdot)$ is the $\rmR$-transform corresponding to $\mathrm{F}_{\mJ}$, $\mT_{m \times m}$ is defined as
\begin{align}
\mT \coloneqq \frac{1}{2 \lambda}\left[ \mI_m - \frac{\upbeta\lambda_0}{\lambda+m \upbeta \lambda_0} \mone_m \right], \label{eq:rep-5}
\end{align}
and $\mQ_{m\times m}$ is the so-called replica correlation matrix defined as
\begin{align}
\mQ= \E (\bxx - \bvv)(\bxx - \bvv)^{\trp}. \label{eq:rep-6}
\end{align}
with the expectation taken over $\bxx$ and $\bvv$ at thermal equilibrium.
\item At thermal equilibrium, the microstate is distributed according to the Boltzmann-Gibbs distribution $\rmp^{\upbeta}_{\bvv|\bxx}$
\begin{align}
\rmp^{\upbeta}_{\bvv|\bxx}(\bvv|\bxx)=\frac{e^{-\upbeta\mae^\sfR(\bvv|\bxx) }}{\maz^\sfR(\upbeta|\bxx)} \label{eq:rep-7}
\end{align}
where $\maz^\sfR(\upbeta|\bxx)$ denotes the partition function of the system defined as
\begin{align}
\maz^\sfR(\upbeta|\bxx) \coloneqq \sum_{\bvv} e^{-\upbeta\mae^\sfR(\bvv|\bxx)}. \label{eq:rep-8}
\end{align}
\item The normalized free energy of the system at the inverse temperature $\upbeta$ is given by
\begin{align}
\sfF^\sfR(\upbeta,m)=-\frac{1}{\upbeta m} \sfE \log \maz^\sfR(\upbeta|\bxx), \label{eq:rep-9}
\end{align}
where the expectation is taken over $\bxx$ with respect to $\rmp_x$. The average energy and entropy at the inverse temperature $\upbeta$ are further found using \eqref{eq:int-5.1a} and \eqref{eq:int-5.1b}.
\item For the system at thermal equilibrium, the replicas' average distortion regarding the distortion function $\sfd(\cdot,\cdot)$ at the inverse temperature $\upbeta$ is
\begin{align}
\sfD^\sfR(\upbeta,m)=\frac{1}{m} \sfE \sum_{a=1}^m \sfd(\vv_a,x), \label{eq:rep-10}
\end{align}
with expectation being taken over $\bxx$ and $\bvv$ with respect to $\rmp_\bxx\rmp_{\bvv|\bxx}$.
\end{itemize}
\end{definition}

Considering Definition \ref{def:replica_spin}, the evaluation of the system parameters such as the replicas' average distortion $\sfD^\sfR(\upbeta,m)$ or the normalized free energy $\sfF^\sfR(\upbeta,m)$ needs the replica correlation matrix $\mQ$ to be explicitly calculated first. In fact, \eqref{eq:rep-6} describes a fixed point equation in terms of $\mQ$ when one writes out the expectation using the conditional distribution in \eqref{eq:rep-7}. The solution can then be substituted in the distribution and the parameters of the system can be calculated via \eqref{eq:rep-8}-\eqref{eq:rep-10}. The fixed point equation, however, may have several solutions and thus result in multiple outputs for the system. Nevertheless, we express the asymptotic distortion of the \ac{map} estimator in terms of a single output of the spin glass of replicas for which the limits exist and the free energy is minimized.

\begin{proposition}
\label{proposition:1}
Let the linear system \eqref{eq:sys-1} fulfill the constraints of Section \ref{sec:problem_formulation}. Suppose Assumptions \ref{asp:1} and \ref{asp:2} hold, and consider the spin glass of replicas as defined in Definition \ref{def:replica_spin} for a finite integer $m$. Then, the free energy of the corresponding spin glass defined in Definition \ref{def:cors_spin} is given by
\begin{align}
\sfF(\upbeta) = \lim_{m\downarrow 0} \frac{1}{m} \left[ \int_0^1 \tr{\mT \mQ \rmR_{\mJ}(-2\upbeta\omega\mT\mQ)} \dif \omega - \tr{\mQ^\trp \mT \rmR_{\mJ}(-2\upbeta \mT \mQ)} \right] +\sfF^\sfR (\upbeta,m) \label{eq:rep-11}
\end{align}
where $\mQ$ is the replica correlation matrix satisfying \eqref{eq:rep-6}. The asymptotic distortion of the \ac{map} estimator regarding the distortion function $\sfd(\cdot;\cdot)$ is then determined as
\begin{align}
\sfD^{\setW}(\bhx;\bx)= \lim_{\upbeta\uparrow\infty}\lim_{m\downarrow 0} \sfD^\sfR(\upbeta,m). \label{eq:rep-12}
\end{align}
In case that multiple solutions are available for the replica correlation matrix, the replica's average distortion in \eqref{eq:rep-12} is evaluated via a correlation matrix which minimizes the free energy at zero temperature, i.e., $\upbeta \uparrow \infty$.
\end{proposition}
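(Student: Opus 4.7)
The plan is to evaluate the $m$-th moment $\sfZ(m) = \sfE [\maz(\upbeta,h)]^m$ for positive integer $m$, obtain an analytic expression in $m$, and then invoke Assumption \ref{asp:2} to continue it to a right neighborhood of $m=0$ and take the limits in \eqref{eq:sm-8} in the stated order. Starting from \eqref{eq:sm-9c}, I would first substitute the Hamiltonian \eqref{eq:int-7} and integrate out the Gaussian noise $\bz$. Writing the replica errors $\be_a = \bx - \bv_a$ and collecting the $m+1$ \emph{copies} $\{\bx, \bv_1,\ldots,\bv_m\}$ into a matrix, the noise integral produces a quadratic form $\exp\!\bigl(-\sum_{a,b} \be_a^\trp \mA^\trp \mA \be_b \, T_{ab}\bigr)$ where $\mT$ is precisely the $m\times m$ matrix defined in \eqref{eq:rep-5}; this step is straightforward because $\bz$ is independent and Gaussian.

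Next, I would exploit the rotational invariance of $\mA$: using the decomposition $\mJ = \mU\mD\mU^\trp$ with $\mU$ Haar-distributed, the expectation over $\mA$ reduces to an integral over the orthogonal group, i.e.\ a \emph{spherical integral}. The asymptotics of such integrals in the large-$n$ limit are controlled by the Harish-Chandra-Itzykson-Zuber-type formulas of \cite{guionnet2002large,guionnet2005fourier}, and yield, to leading order in $n$, an expression in terms of the $\rmR$-transform $\rmR_\mJ$ of the limiting spectral law $\rmF_\mJ$. This is the step in which the $\rmR$-transform enters. To then decouple the remaining integrand into a single-letter problem, I would insert the definition of the replica correlation matrix $\mQ = \frac{1}{n}\sum_{i=1}^n (\bxx_i - \bvv_{a,i})(\bxx_i - \bvv_{b,i})^\trp$ via delta functions (or Fourier representation), producing auxiliary Lagrange-multiplier matrices whose stationarity conditions force the fixed-point equation \eqref{eq:rep-6}.

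With the constraint enforced, the remaining $n$-site integrand factorizes across sites and can be recognized as the partition function of the finite-dimensional spin glass of Definition \ref{def:replica_spin}. A saddle-point evaluation in $n$ then splits the free energy \eqref{eq:rep-11} into two contributions: the trace bracket, which collects the interaction energy coming from the spherical-integral asymptotics (the $\int_0^1 \rmR_\mJ(\cdot)\dif\omega$ term arises as the potential whose derivative is the $\rmR$-transform appearing in the saddle equation), and the single-site free energy $\sfF^\sfR(\upbeta,m)$ defined in \eqref{eq:rep-9}. The asymptotic distortion is then obtained by differentiating $\log\sfZ(m)$ with respect to $h$ at $h=0$ as in \eqref{eq:sm-4}; since $h$ couples only to the distortion term which is already a single-site quantity, the derivative extracts precisely $\sfD^\sfR(\upbeta,m)$, and taking $m\downarrow 0$ followed by $\upbeta\uparrow\infty$ yields \eqref{eq:rep-12}. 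When multiple stationary $\mQ$ exist, thermodynamic convention selects the minimizer of the free energy at zero temperature.

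The main obstacle is the rotational-invariance step and the attendant spherical-integral asymptotics: while the derivation is formally clean once one trusts the leading-order expansion producing $\rmR_\mJ$, the subsequent \emph{interchange} of $n\uparrow\infty$ with $m\downarrow 0$, together with the need to analytically continue an expression obtained for integer $m$ to a real neighborhood of zero, is exactly where the replica method is nonrigorous; this is precisely why Assumption \ref{asp:2} is imposed. A secondary subtlety is the saddle-point selection: the structure of the replica correlation matrix $\mQ$ can be non-unique and its parametrization (symmetric versus broken) matters when later specializing to \ac{rs} or $b$\ac{rsb} in Sections \ref{sec:rs} and \ref{sec:rsb}, but at the level of Proposition \ref{proposition:1} this ambiguity is absorbed into the free-energy minimization prescription.
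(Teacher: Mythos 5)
Your proposal follows essentially the same route as the paper's proof in Appendix A: Gaussian integration over $\bz$ producing $\mT$, the Guionnet--Ma\"ida spherical-integral asymptotics yielding $\rmR_\mJ$, delta-function/Laplace enforcement of the overlap matrix $\mQ$ with conjugate frequency matrix $\mS$, a saddle-point evaluation in $n$, and the $h$-derivative to extract the distortion, with the free-energy minimization for selecting among multiple saddles. The paper additionally makes explicit two technical points you leave implicit, namely the rank condition $\mathrm{rank}(\mG)=\mao(\sqrt{n})$ needed to invoke the Guionnet--Ma\"ida theorems (Lemma~\ref{lem:a-1}) and a law-of-large-numbers argument (Lemma~\ref{lem:a-2}) justifying that the quenched average over $\bx$ may be dropped in favor of a deterministic single-letter expectation, but these are supporting details rather than a different approach.
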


\begin{proof}
The proof is given in Appendix \ref{app:a}. However, we explain briefly the strategy in the following.\\
Starting from \eqref{eq:sm-9c}, the expectation with respect to the noise term is straightforwardly taken. Using the results from \cite{guionnet2005fourier}, the expectation with respect to the system matrix is further taken as discussed in Appendix \ref{app:d}. Then, by considering the following variable exchange,
\begin{align}
[\mQ]_{ab} = \frac{1}{n} (\bx-\bv_a)^\trp (\bx-\bv_b). \label{eq:rep-13}
\end{align}
$\sfZ(m)$ is determined in terms of the replica correlation matrix $\mQ$. Finally, by employing the law of large numbers, the $m$th moment of the partition function is given as
\begin{align}
\sfZ(m)=\sfE_{\bx} \int e^{-n \left[ \mg(\mT \mQ) - \mai(\mQ)\right] +\epsilon_n} \dif \mQ \label{eq:rep-14}
\end{align}
where $\dif \mQ \coloneqq \prod_{a,b=1}^{m} \dif [\mQ]_{ab}$ with the integral being taken over $\setR^{m\times m}$, $\epsilon_n$ tends to zero as $n \uparrow \infty$ and $\mT$ is given by \eqref{eq:rep-5}. Moreover, $e^{n \mai(\mQ)}$ denotes the non-normalized probability weight of the vectors of replicas with a same correlation matrix and is explicitly determined in \eqref{eq:a-20b} in Appendix \ref{app:a}, and $\mg(\cdot)$ reads
\begin{align}
\mg(\mM) =  \int_{0}^{{\upbeta}} \Tr \{\mM \mathrm{R}_{\mJ}(-2\omega\mM)\} \dif \omega \label{eq:rep-16}
\end{align}
for some diagonal matrix $\mM$ with $\tr{\cdot}$ denoting the trace operator, and $\mathrm{R}_{\mJ}(\cdot)$ being the $\mathrm{R}$-transform with respect to $\mathrm{F}_{\mJ}$. In \eqref{eq:rep-14}, the term $e^{n \mai(\mQ)} \dif \mQ$ is a probability measure which satisfies the large deviations property. Using results from large deviations \cite{dembo2009large}, the integral in \eqref{eq:rep-14} for large values of $n$ can be written as the integrand at the saddle point $\tilde{\mQ}$ multiplied by some bounded coefficient $\mathsf{K}_n$ which results in
\begin{align}
\sfZ(m) \doteq \mathsf{K}_n e^{-n \left[ \mg (\mT\tilde{\mQ})-\mai (\tilde{\mQ}) \right]} \label{eq:rep-17}
\end{align}
with $\doteq$ denoting the asymptotic equivalence in exponential scale\footnote{$a(\cdot)$ and $b(\cdot)$ are asymptotically equivalent in exponential scale over a non-bounded set $\setX$, if for $x \in \setX$ we have $\displaystyle \lim_{x \uparrow \infty} \log |\frac{a(x)}{b(x)}| =0$.}. Consequently, by substituting $\sfZ(m)$ in \eqref{eq:sm-8}, and exchanging the limits with respect to $n$ and $m$, as suggested in Assumption \ref{asp:2}, the asymptotic distortion is found as in Proposition \ref{proposition:1} where \eqref{eq:rep-6} determines the saddle point of the integrand function in \eqref{eq:rep-14}. The free energy is further determined as in \eqref{eq:rep-11} by substituting \eqref{eq:rep-17} in \eqref{eq:sm-0.4}. Finally by noting the fact that the free energy is minimized at the equilibrium, the proof is concluded.
\end{proof}

Proposition \ref{proposition:1} introduces a feasible way to determine the asymptotics of the \ac{map} estimator; its validity depends only on Assumptions \ref{asp:1} and \ref{asp:2} and has no further restriction. To pursue the analysis, one needs to solve the fixed point equation \eqref{eq:rep-6} for the replica correlation matrix $\mQ$ and calculate the parameters of the spin glass of replicas explicitly. The direct approach to find $\mQ$, however, raises both complexity and analyticity issues. In fact, finding the saddle point by searching over the set of all possible choices of $\mQ$ is a hard task to do; moreover, several solutions may not be of use since they do not lead to analytic $\sfF^\sfR(\upbeta,m)$ and $\sfD^\sfR(\upbeta,m)$ in $m$, and thus, they cannot be continued analytically to the real axis via Assumption \ref{asp:2}.

To overcome both the issues, the approach is to restrict our search into a parameterized set of replica correlation matrices and find the solution within this set. Clearly, the asymptotics found via this approach may fail as several other solutions are missed by restriction. The result, in this case, becomes more trustable by extending the restricted set of replica correlation matrices. Several procedures of restrictions can be considered. The procedures introduced in the literature are roughly divided into \ac{rs} and \ac{rsb} schemes. The former considers the $m$ replicas to interact symmetrically while the latter recursively breaks this symmetry in a systematic manner. In fact, the \ac{rs} scheme was first introduced due to some symmetric properties observed in the analysis of the spin glasses \cite{mezard2009information}. The properties, however, do not force the correlation matrix to have a symmetric structure, and later several examples were found showing that \ac{rs} leads to wrong conclusions. For these examples, the \ac{rsb} scheme was further considered as an extension of the symmetric structure of the correlation matrix to a larger set. We consider both the \ac{rs} and \ac{rsb} schemes in this manuscript; however, before pursuing our study, let us first investigate the general decoupling property of the estimator which can be concluded from Proposition \ref{proposition:1}.

\subsection{General Decoupling Property of \ac{map} Estimator}
Regardless of any restriction on $\mQ$, the general ansatz leads to the decoupling property of the \ac{map} estimator. In fact by using Proposition \ref{proposition:1}, it can be shown that for almost any tuple of input-output entries, the marginal joint distribution converges to a deterministic joint distribution which does not depend on the entries' index. The explicit term for the joint distribution, however, depends on the assumptions imposed on the correlation matrix. 

\begin{proposition}[General Decoupling Principle]
\label{proposition:2}
Let the linear system \eqref{eq:sys-1} fulfill the constraints of Section \ref{sec:problem_formulation}. Suppose Assumptions \ref{asp:1} and \ref{asp:2} hold, and consider the spin glass of replicas as defined in Definition \ref{def:replica_spin} with the replica correlation matrix $\mQ$. Then, for $j \in [1:n]$, the asymptotic conditional distribution of the \ac{map} estimator $\rmp_{\hx|x}^{j}$ defined in Definition \ref{def:asymp_distribution} is, almost sure in $\mA$, independent of $j$, namely
\begin{align}
\rmp_{\hx|x}^{j} (\hv|v) = \rmp_{\hxx|\xx} (\hv|v) \label{eq:rep-18}
\end{align}
for some conditional distribution $\rmp_{\hxx|\xx} $ at the mass point $(\hv,v) \in \setX \times \setX$. Consequently, the marginal joint distribution of the entries $x_j$ and $\hx_j$ is described by the input-output joint distribution of the single-user system specified by the conditional distribution $\rmp_{\hxx|\xx}$ and the input $\xx \sim \rmp_x$. The explicit form of $\rmp_{\hxx|\xx}$ depends on $\mQ$.
\end{proposition}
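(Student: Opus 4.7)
The plan is to treat Proposition \ref{proposition:1} as a computational black box that, for any distortion function $\sfd(\cdot;\cdot)$, delivers the limit of $\frac{1}{n}\sum_j \sfd(\hx_j;x_j)$ as $n\uparrow\infty$. By letting $\sfd$ range over a separating family of test functions on $\setX\times\setX$, namely polynomial monomials $\hv^p v^q$ when the joint law is moment-determinate, or two-point indicators $\mathbf{1}\{\hv=\hv_0,v=v_0\}$ in the discrete case, the proposition pins down every ``moment'' of the marginal joint distribution of $(\hx_j,x_j)$, and the classical moment problem then reconstructs that distribution uniquely.

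First, I would establish that for every finite $n$ the conditional law $\rmp_{\hx|x}^{j(n)}$ is independent of $j$. This is a pure symmetry argument: the i.i.d.\ property of $\bx$ renders its joint distribution permutation-invariant on $\setX^n$, while rotational invariance of $\mA$ implies, in particular, invariance of its law under right multiplication by permutation matrices, since permutations are orthogonal. Because the Hamiltonian in \eqref{eq:int-7} is equivariant under simultaneous permutation of the columns of $\mA$ and the entries of $\bv$, the minimiser $\bgg(\by|\mA)$ transforms equivariantly, so the marginal law of $(\hx_j,x_j)$ coincides for every $j$. Consequently the limit $\rmp_{\hx|x}^{j}$ in Definition \ref{def:asymp_distribution} is automatically index independent, and it suffices to identify it.

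Second, I would fix $\sfd(\hv;v)=\varphi(\hv,v)$ for a test function $\varphi$, take $\setW(n)=[1:n]$ so that $\eta=1$, and invoke Proposition \ref{proposition:1} together with Assumption \ref{asp:1}. The empirical average $\frac{1}{n}\sum_j\varphi(\hx_j,x_j)$ converges self-averagingly to a functional $\sfM[\varphi]=\lim_{\upbeta\uparrow\infty}\lim_{m\downarrow 0}\sfD^\sfR(\upbeta,m)$ which, for each $\varphi$, is read off the spin glass of replicas of Definition \ref{def:replica_spin} and hence depends only on the solution $\mQ$ of the saddle-point equation \eqref{eq:rep-6}. Combined with the index independence from the first step, this identifies $\sfM[\varphi]$ with the single-index expectation of $\varphi(\hx_j,x_j)$ for every $j$. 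Since $\sfM[\cdot]$ is a positive, normalisable linear functional on a separating family of test functions, Riesz representation (or Hamburger/Hausdorff moment uniqueness in the polynomial variant) realises it as integration against a unique probability measure on $\setX\times\setX$ whose $\xx$-marginal is $\rmp_x$; disintegration yields the desired conditional law $\rmp_{\hxx|\xx}$, manifestly determined by $\mQ$.

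The main obstacle will be the choice of test-function family: it must be broad enough to separate probability measures on $\setX\times\setX$ (bounded continuous functions in general, moment-determinate polynomials in the unbounded real case, two-point indicators in the discrete case) yet tractable enough that Proposition \ref{proposition:1} applies without regularity mismatches in the Laplace argument behind \eqref{eq:rep-17}. A secondary concern is commuting the limits $\upbeta\uparrow\infty$, $m\downarrow 0$, $n\uparrow\infty$ with the functional limit in $\varphi$; within the level of rigour granted by Assumptions \ref{asp:1} and \ref{asp:2}, this commutation is implicit, but a fully rigorous treatment would require uniformity estimates that lie beyond the scope of the replica method itself.
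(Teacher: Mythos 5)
Your proof is correct and ends up in the same place as the paper's (moment method on top of Proposition~\ref{proposition:1}), but it routes the index-independence through a genuinely different device. The paper never invokes a finite-$n$ symmetry argument; instead it plugs the shifted window $\setW(n)=[j:j+\eta n]$ directly into Proposition~\ref{proposition:1}, observes that the resulting $\sfD^{\setW}$ is insensitive to both the anchor $j$ and the window fraction $\eta$ (because the replica ansatz depends on $\setW(n)$ only through $\eta$, which then cancels under $h\downarrow 0$), and then sends $\eta\downarrow 0$ to read off the single-index moment. You instead fix $\setW(n)=[1:n]$ and extract the single-index law from an explicit finite-$n$ exchangeability argument: $\bx$ is i.i.d., $u$ decouples, and the Hamiltonian in \eqref{eq:int-7} is equivariant under the joint map $\bv\mapsto\mathbf{P}^{\trp}\bv$, $\mA\mapsto\mA\mathbf{P}$, so the marginal joint law of $(\hx_j,x_j)$ cannot depend on $j$. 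This is more transparent about \emph{why} index-independence holds and avoids the somewhat slippery $\eta\downarrow0$ step in the paper's sketch (which, as stated, never truly shrinks the window to a single index once $n\uparrow\infty$ is taken first). The cost is that your symmetry argument needs the law of $\mA$ to be invariant under right multiplication by permutation (indeed orthogonal) matrices, whereas the paper's assumption (b) literally constrains only the Gramian $\mJ=\mA^{\trp}\mA$ to have Haar eigenvectors; the two coincide when the right singular vectors of $\mA$ are Haar and independent of the spectrum and the left factor, which is what the paper tacitly uses in Appendix~\ref{app:a}, so your argument is in the spirit of the ensemble but technically leans on slightly more than what is written in the problem formulation. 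Your broadening of the test-function family to bounded continuous functions or two-point indicators in the discrete case, with Riesz representation in place of the Hamburger/Hausdorff argument, is a more general framing than the paper's monomials-plus-uniform-boundedness route and makes the moment-determinacy caveat explicit rather than merely asserted.
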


\begin{proof}
The proof follows from Proposition \ref{proposition:1} and the moment method \cite{akhiezer1965classical,simon1998classical}. From the classical moment problem, we know that the joint probability of the tuple of random variables $(t_1,t_2)$ are uniquely specified with the sequence of integer joint moments, if the joint moments are uniformly bounded. More precisely, by defining the $(k,\ell)$-joint moment of the tuple $(t_1,t_2)$ as
\begin{align}
\sfM_{k,\ell}\coloneqq \E t_1^k t_2^\ell, \label{eq:rep-19}
\end{align}
the sequence of $\left\lbrace \sfM_{k,\ell} \right\rbrace$ for $(k,\ell ) \in \setZ^+ \times \setZ^+$ is uniquely mapped to the probability distribution $\rmp_{t_1,t_2}$, if $\sfM_{k,\ell}$ is uniformly bounded for all integers $k$ and $\ell$. Consequently, one can infer that any two tuples of the random variables $(t_1,t_2)$ and $(\htt_1,\htt_2)$ with the same sequences of the joint moments are identical in distribution.

To determine the joint moment of input and output entries, consider the distortion function
\begin{align}
\sfd(\hx,x)= \hx^k x^\ell \label{eq:rep-20}
\end{align}
in Proposition \ref{proposition:1}, and evaluate the asymptotic distortion over the limit of $\setW(n)=[j:j+\eta n]$ for some $\eta$ in a right neighborhood of zero. The $(k,\ell)$-joint moment of $(\hx_j,x_j)$ is then determined by taking the limit $\eta\downarrow 0$.

Substituting the distortion function and the index set in Proposition \ref{proposition:1}, it is clear that the asymptotic distortion is independent of $\eta$ and $j$, and therefore, the limit with respect to $\eta$ exists and is independent of $j$ as well. Noting that the evaluated moments are uniformly bounded, it is inferred that the asymptotic joint distribution of $(\hx_j,x_j)$ is uniquely specified and does not depend on the index $j$. Finally, by using the fact that the source vector is \ac{iid} and the distribution of the entry $j$ is independent of the index, we conclude that the asymptotic conditional distribution $\rmp_{\hx|x}^{j}$ is independent of $j$. The exact expression for $\rmp_{\hx|x}^{j}$ is then found by determining the solution $\mQ$ to the fixed point equation and determining the joint moments.
\end{proof}

Proposition \ref{proposition:2} is a generalized form of the \ac{rs} decoupling principle for the \ac{map} estimators studied in \cite{rangan2012asymptotic}. In fact, Proposition \ref{proposition:2} indicates that a vector system followed by a \ac{map} estimator always decouples into a bank of identical single-user systems regardless of any restriction on the replica correlation matrix. 

\subsection{Consistency Test}
\label{subsec:cons_test}
If one restricts the replica correlation matrix $\mQ$ to be of a special form, the asymptotics determined under the assumed structure do not necessarily approximate true asymptotics accurately. Several methods were introduced in the literature to check the consistency of the solution. A primary method is based on calculating the entropy of the corresponding spin glass at zero temperature. As the temperature tends to zero, the distribution of the microstate tends to an indicator function at the point of the estimated vector, and consequently, the entropy of the corresponding spin glass converges to zero\footnote{Note that the entropy of the spin glass at temperature $\upbeta^{-1}$ denotes either the conditional entropy (for discrete supports) or the conditional differential entropy (for continuous supports) of a random vector $\bv$ distributed conditioned to $\by$ and $\mA$ with~Boltzmann-Gibbs~distribution~$\rmp^{\upbeta}_{\bv|\by,\mA}$.}. One consistency check is therefore the zero temperature entropy of a given solution.

Several works invoked this consistency test and showed that for the settings in which the \ac{rs} ansatz fails to give a tight bound on the exact solution, the zero temperature entropy determined from the \ac{rs} ansatz does not converge to zero; see for example \cite{zaidel2012vector}. 
This observation illustrates the invalidity of the \ac{rs} assumption and hints at \ac{rsb} ans\"atz giving better bounds on the true solution. Inspired by the aforementioned results, we evaluate the zero temperature entropy of the corresponding spin glass as a measure of consistency.

In order to determine the zero temperature entropy, we invoke \eqref{eq:sm-0.5} which determines the entropy in terms of the free energy at inverse temperature $\upbeta$. Considering the free energy of the corresponding spin glass as given in Proposition \ref{proposition:1}, the entropy $\rmH(\upbeta)$ reads
\begin{align}
\rmH(\upbeta)= \lim_{m\downarrow 0} \frac{\upbeta^2}{m}  \frac{\partial}{\partial \upbeta} \left[ \int_0^1 \tr{\mT \mQ \rmR_{\mJ}(-2\upbeta\omega\mT\mQ)} \dif \omega - \tr{\mQ^\trp \mT \rmR_{\mJ}(-2\upbeta \mT \mQ)}  \right] +\rmH^\sfR (\upbeta,m) \label{eq:rep-21}
\end{align}
where $\rmH^\sfR (\upbeta,m)$ denotes the normalized entropy of the spin glass of replicas. As $\rmH^\sfR (\upbeta,m)$ determines the entropy of a thermodynamic system, for any $m \in \setR^+$ we have
\begin{align}
\lim_{\upbeta \uparrow \infty} \rmH^\sfR (\upbeta,m) = 0, \label{eq:rep-22}
\end{align}
and therefore, the zero temperature entropy is given by
\begin{align}
\rmH^0= \lim_{\upbeta \uparrow \infty} \lim_{m\downarrow 0} \frac{\upbeta^2}{m}  \frac{\partial}{\partial \upbeta} \left[ \int_0^1 \tr{\mT \mQ \rmR_{\mJ}(-2\upbeta\omega\mT\mQ)} \dif \omega - \tr{\mQ^\trp \mT \rmR_{\mJ}(-2\upbeta \mT \mQ)}  \right], \label{eq:rep-23}
\end{align}
which obviously depends on the structure of the replica correlation matrix. In \cite{zaidel2012vector}, the authors determined the zero temperature entropy for the spin glass which corresponds to the vector precoding problem considering the \ac{rs} and 1\ac{rsb} assumptions, and observed that it takes the same form under both assumptions. They then conjectured that the zero temperature entropy is of a similar form for the general \ac{rsb} structure regardless of the number of breaking steps\footnote{In fact in this case, the dependence of the zero temperature entropy on the correlation matrix is completely described via the scalar $\chi$ which corresponds to the diagonal entries of the correlation matrix. See Assumption \ref{asp:3}-\ref{asp:5} for more illustrations.}. Using \eqref{eq:rep-23}, we later show that the conjecture in \cite{zaidel2012vector} is true.

\section{RS Ansatz and RS Decoupling Principle}
\label{sec:rs}
The most elementary structure which can be imposed on the replica correlation matrix is \ac{rs}. Here, one assumes the correlation matrix to be of a symmetric form which means that the replicas of the spin glass defined in Definition~\ref{def:replica_spin} are invariant under any permutation of indices. Using the definition of the replica correlation matrix as given in \eqref{eq:rep-6}, it consequently reads that
\begin{equation}
(x-\vv_a)(x-\vv_b)=
\begin{cases}
q_0 & \text{if}\ a \neq b \\
q_1 & \text{if}\ a=b.
\end{cases} \label{eq:rs-1}
\end{equation}

\begin{assumption}[\ac{rs} Structure]
\label{asp:3}
\normalfont
Considering the spin glass of replicas as defined in Definition \ref{def:replica_spin}, the replica correlation matrix is of the form
\begin{align}
\mQ=\frac{\chi}{\upbeta} \mI_m + q \mone_m \label{eq:rs-2}
\end{align}
where $\chi$ and $q$ are some non-negative real scalars.
\end{assumption}

Considering \eqref{eq:rs-1}, Assumption \ref{asp:3} supposes $q_0=q$ and $q_1={\chi}{\upbeta}^{-1}+q$. Substituting \eqref{eq:rs-2} in Definition \ref{def:replica_spin}, the spin glass of replicas is then specified by the scalars $\chi$ and $q$. The scalars are moreover related via a set of saddle point equations which are obtained from \eqref{eq:rep-6}. Finally, using Proposition \ref{proposition:1}, the asymptotics~of~the~system~are~found.

\begin{proposition}[RS Ansatz]
\label{proposition:3}
Let the linear system \eqref{eq:sys-1} fulfill the constraints of Section \ref{sec:problem_formulation}. Suppose Assumptions \ref{asp:1} and \ref{asp:2}, as well as Assumption \ref{asp:3} hold. Let $x\sim\rmp_x$, and
\begin{align}
\rmg = \arg \min_{v} \left[ \frac{1}{2\lams}(x+ \sqrt{\lams_0} z -v)^2 +  u(v) \right] \label{eq:rs-3}
\end{align}
with $v \in \setX$, and $\lams_0$ and $\lams$ being defined as
\begin{subequations}
\begin{align}
\lams_0 &\coloneqq \left[ \rmR_{\mJ}(-\frac{\chi}{\lambda})\right]^{-2} \frac{\partial}{\partial \chi} \left\lbrace \left[\lambda_0 \chi - \lambda q\right] \rmR_{\mJ}(-\frac{\chi}{\lambda}) \right\rbrace \label{eq:rs-4b}\\
\lams &\coloneqq \left[ \rmR_{\mJ}(-\frac{\chi}{\lambda}) \right]^{-1} \lambda \label{eq:rs-4a}
\end{align}
\end{subequations}
for some non-negative real $\chi$ and $q$ and the real variable $z$. Then, the asymptotic distortion defined in \eqref{eq:sys-9} reads
\begin{align}
\sfD^{\setW}= \E \int \sfd(\rmg,x) \md z, \label{eq:rs-5}
\end{align}
for $\chi$ and $q$ which satisfy
\begin{subequations}
\begin{align}
\sqrt{\lams_0} \ \chi&= \lams \ \E \int (\rmg-x) z \ \md z \label{eq:rs-6a} \\
q &= \E \int (\rmg-x)^2 \ \md z \label{eq:rs-6b}
\end{align}
\end{subequations}
and minimize the zero temperature free energy $\sfF^0_{\mathsf{rs}}$ which is given by
\begin{align}
\sfF^0_{\mathsf{rs}}=\frac{1}{2\lambda} \left[ \int_0^1 \rmF(\omega) \dif \omega -  \rmF (1) \right] + \E \int \frac{1}{2\lams} \left[ (x+\sqrt{\lams_0}z-\rmg)^2 - \lams_0 z^2 \right] + u(\rmg) \ \md z \label{eq:rs-7}
\end{align}
with $\rmF(\cdot)$ being defined as
\begin{align}
\rmF(\omega)=\left[q-\frac{\lambda_0}{\lambda} \chi \right] \frac{\dif}{\dif \omega} \left[ \omega \rmR_{\mJ}(-\frac{\chi}{\lambda} \omega) \right]. \label{eq:rs-8}
\end{align}
\end{proposition}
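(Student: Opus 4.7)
The plan is to specialize Proposition \ref{proposition:1} by plugging in the RS ansatz for $\mQ$, decouple the $m$ replicas via a Hubbard--Stratonovich transformation, and then take the replica limit $m \downarrow 0$ followed by the zero-temperature limit $\upbeta \uparrow \infty$. The key structural observation is that $\mT$ and $\mQ$ under Assumption \ref{asp:3} share the same invariant subspace decomposition. Writing $\mathbf{P}_1 = m^{-1}\mone_m$ and $\mathbf{P}_2 = \mI_m - \mathbf{P}_1$, both $\mT$ and $\mQ$ diagonalize simultaneously in $\{\mathbf{P}_1, \mathbf{P}_2\}$, with eigenvalues on the one-dimensional $\mathbf{P}_1$-eigenspace equal to $\frac{1}{2(\lambda + m\upbeta\lambda_0)}$ and $\frac{\chi}{\upbeta}+mq$ respectively, and on the $(m-1)$-dimensional $\mathbf{P}_2$-eigenspace equal to $\frac{1}{2\lambda}$ and $\frac{\chi}{\upbeta}$. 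Applying the $\rmR$-transform entrywise as in \eqref{eq:rep-3} yields $\mT\rmR_{\mJ}(-2\upbeta\mT\mQ) = \alpha_1 \mathbf{P}_1 + \alpha_2 \mathbf{P}_2$ for scalars $\alpha_1, \alpha_2$ that are explicit in $\chi$, $q$, $\upbeta$, and $m$.

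Next I would rewrite the RS Hamiltonian as
\begin{align}
\mae^\sfR(\bvv|\bxx) = \frac{\alpha_1}{m}\Bigl(\sum_{a=1}^m (x-\vv_a)\Bigr)^2 + \alpha_2 \sum_{a=1}^m (x-\vv_a)^2 - \frac{\alpha_2}{m}\Bigl(\sum_{a=1}^m (x-\vv_a)\Bigr)^2 + u(\bvv).
\end{align}
The cross-replica term $(\sum_a(x-\vv_a))^2$ is then linearized via a Hubbard--Stratonovich auxiliary Gaussian $z$, after which the partition function $\maz^\sfR(\upbeta|\bxx)$ factorizes into an $m$-fold power of a scalar partition function of a single replica coupled to $x+\sqrt{\lams_0}z$ through the quadratic cost $\frac{1}{2\lams}(x+\sqrt{\lams_0}z-v)^2$. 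Matching the coefficients of the resulting single-replica action to those induced by $\alpha_1, \alpha_2$, and carefully expanding in small $m$, identifies $\lams$ and $\lams_0$ as in \eqref{eq:rs-4a}--\eqref{eq:rs-4b}; in particular, the form \eqref{eq:rs-4b} arises from a first-order Taylor expansion of $\rmR_{\mJ}(-2\upbeta\mu_1\nu_1)$ about its $m=0$ value, where the derivative with respect to $\chi$ appears naturally.

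Passing to $m \downarrow 0$ using the Riesz identity \eqref{eq:sm-7} collapses the $m$-fold product into a single logarithm (the "replica trick"), and Laplace's method as $\upbeta \uparrow \infty$ reduces the scalar partition function to the minimization defining $\rmg$ in \eqref{eq:rs-3}. Substituting into the distortion representation \eqref{eq:rep-12} immediately yields \eqref{eq:rs-5}. The fixed-point equations \eqref{eq:rs-6a}--\eqref{eq:rs-6b} are then derived by specializing the self-consistency condition \eqref{eq:rep-6} to the two scalar parameters $(q, \chi)$: equation \eqref{eq:rs-6b} reads off from the off-diagonal block (i.e.\ $q = \E[(\rmg - x)^2]$ averaged over $z$), while \eqref{eq:rs-6a} follows from the diagonal block together with Gaussian integration by parts, $\E[(\rmg-x)z] = \sqrt{\lams_0}\,\E[\partial_y \rmg|_{y=x+\sqrt{\lams_0}z}]$, relating $\chi$ (the diagonal excess over $q$) to the linear response of the scalar MAP map. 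Finally, the zero-temperature free energy \eqref{eq:rs-7}--\eqref{eq:rs-8} is obtained by inserting the RS $\mQ$ into the free-energy expression \eqref{eq:rep-11} of Proposition \ref{proposition:1} and using the identity $\int_0^1 \tr{\mT\mQ\rmR_{\mJ}(-2\upbeta\omega\mT\mQ)}\dif\omega = \frac{1}{2\upbeta}\int_0^1 \rmF(\omega)\dif\omega$ that emerges from the spectral decomposition above; the minimization selects the thermodynamically stable root when \eqref{eq:rs-6a}--\eqref{eq:rs-6b} admit several solutions.

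The main obstacle is the careful analytic continuation in $m$: for integer $m$ the operator $\mT\mQ$ has an eigenvalue of multiplicity one and another of multiplicity $m-1$, so the $m \downarrow 0$ limit cannot be performed naively and must be executed at the level of the $\log\maz^\sfR$ expansion. Correctly extracting the $O(m)$ term in the $\mathbf{P}_1$-contribution is exactly what produces the derivative structure in \eqref{eq:rs-4b} and is the place where matching with the scalar Gaussian-noise channel with effective variance $\lams_0$ must be verified rigorously within the replica-trick formalism.
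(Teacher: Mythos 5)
Your plan coincides with the paper's Appendix~\ref{app:b} proof in all essential respects: you exploit the joint eigenstructure of the RS $\mT$ and $\mQ$ to reduce $\mT\rmR_{\mJ}(-2\upbeta\mT\mQ)$ to the same two-parameter form $e\,\mI_m - \upbeta\tfrac{f^2}{2}\mone_m$, decouple the rank-one cross-replica term by a Hubbard--Stratonovich Gaussian, take $m\downarrow 0$ via the Riesz identity and $\upbeta\uparrow\infty$ via Laplace's method, and read off the free energy from Proposition~\ref{proposition:1}. The only cosmetic deviations are your projector notation ($\mathbf{P}_1,\mathbf{P}_2$) in place of the explicit $\{\mI_m,\mone_m\}$ basis, and your Stein/linear-response reading of \eqref{eq:rs-6a}; the paper instead derives both saddle equations by differentiating $\sfF^\sfR(\upbeta,m)$ with respect to the effective couplings $e$ and $f$, which is an equivalent and arguably more systematic route to the same self-consistency.
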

\begin{proof}
See Appendix \ref{app:b}.
\end{proof}

The asymptotic distortion under the \ac{rs} ansatz is equivalent to the average distortion of a scalar \ac{awgn} channel followed by a single-user estimator as shown in Fig. \ref{fig:1}. In this block diagram, the single-user estimator $\mathrm{g}_{\mathsf{map}}[(\cdot);\lams,u]$ maximizes the posterior probability over a postulated scalar \ac{awgn} channel. We refer to this estimator as the ``decoupled \ac{map} estimator'' and define it as follows.

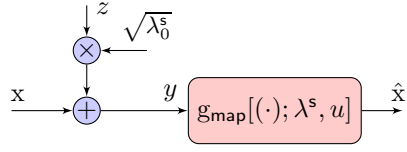
\begin{figure}[t]
\begin{center}
\begin{tikzpicture}[auto, node distance=2.5cm,>=latex']
    \node [input, name=input] {};
    \node [sum, right of=input,fill=blue!20] (sum) {$+$};
    \node [sum, above of=sum, node distance=.8cm,fill=blue!20] (times) {$\times$};
    \node [input, above of=times, node distance=.6cm] (noise) {};
    \node [input, right of=times, node distance=.8cm] (cof) {};
    \node [block, right of=sum,fill=red!20] (estimator) { $\mathrm{g}_{\mathsf{map}}[(\cdot);\lams,u]$ };
    \node [output, right of=estimator, node distance=1.7cm] (output) {};

    \draw [draw,->] (input) -- node[pos=.1] {$\xx$} (sum) ;
    \draw [->] (sum) -- node[pos=.8] {$y$} (estimator);
    \draw [->] (estimator) -- node[pos=.9] [name=x_h] {$\hxx$}(output);
	\draw [draw,->] (noise) -- node[pos=.1] {$z$}(times);
	\draw [draw,->] (times) -- node[pos=.1] {}(sum);	
	\draw [draw,->] (cof) -- node[pos=.01,above] {\small $\sqrt{\lams_0}$}(times);
\end{tikzpicture}
\end{center}
	\caption{The decoupled single-user system under the \ac{rs} ansatz.}
	\label{fig:1}
\end{figure}

\begin{definition}[Decoupled \ac{map} estimator]
\label{def:single_user_estimator}
The decoupled \ac{map} estimator $\mathrm{g}_{\mathsf{map}}[(\cdot);\lams,u]$ with the estimation parameter $\lams$ and the utility function $u(\cdot)$ is defined as
\begin{subequations}
\begin{align}
\mathrm{g}_{\mathsf{map}}[(y);\lams,u] &\coloneqq \arg \max_{v} \ \poster{v}{y} \label{eq:rs-10.1} \\
&=\arg \min_{v} \left[ \frac{1}{2\lams} (y -v)^2+u(v) \right], \label{eq:rs-10}
\end{align}
\end{subequations}
where $\poster{v}{y}$ denotes the ``decoupled posterior distribution'' postulated by the estimator which reads
\begin{align}
\poster{v}{y} =\mathsf{K} \ e^{-\left[ \tfrac{1}{2\lams} (y -v)^2+u(v) \right]} \label{eq:rs-10.2}
\end{align}
for some real constant $\mathsf{K}$.
\end{definition}

Using the definition of the decoupled \ac{map} estimator, the \ac{rs} decoupled system is defined next.

\begin{definition}[\ac{rs} decoupled system]
\label{def:rs_single_user}
Define the \ac{rs} decoupled system to be consistent with the block diagram in Fig. \ref{fig:1} in which
\begin{itemize}
\item the source symbol $\xx$ is distributed with $\rmp_{\xx}$ over the support $\setX$,
\item $z$ is a zero-mean and unit-variance Gaussian random variable,
\item $\xx$ and $z$ are independent,
\item $\hxx$ is estimated from the observation $y \coloneqq \xx+ \sqrt{\lams_0} z$ as
\begin{align}
\hxx=\mathrm{g}_{\mathsf{map}}[(y);\lams,u]. \label{eq:rs-9}
\end{align}
\item $\mathrm{g}_{\mathsf{map}}[(\cdot);\lams,u]$ is the decoupled \ac{map} estimator with the estimation parameter $\lams$ and the utility function $u(\cdot)$ as defined in Definition \ref{def:single_user_estimator}.
\item $\lams_0$ and $\lams$ are defined as in Proposition \ref{proposition:3}.
\end{itemize} 
\end{definition}

Using Proposition \ref{proposition:2}, the equivalency in the asymptotic distortion can be extended to the asymptotic conditional distribution as well. In fact, by considering the decoupling principle, Definition \ref{def:rs_single_user} describes the structure of the decoupled single-user system under the \ac{rs} assumption.

\begin{proposition}[\ac{rs} Decoupling Principle]
\label{proposition:4}
Let the linear system \eqref{eq:sys-1} fulfill the constraints of Section \ref{sec:problem_formulation} and be estimated via the \ac{map} estimator in \eqref{eq:int-2}. Consider the \ac{rs} decoupled system as defined in Definition \ref{def:rs_single_user}, and suppose Assumptions \ref{asp:1}, \ref{asp:2} and \ref{asp:3} hold. Then, for any $j\in [1:n]$, the tuple $(\hx_j,x_j)$ converges in distribution to $(\hxx,\xx)$ if $\rmp_x=\rmp_\xx$, almost sure in $\mA$.
\end{proposition}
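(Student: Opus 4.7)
The plan is to establish Proposition \ref{proposition:4} by combining the general decoupling result in Proposition \ref{proposition:2}, which already guarantees existence and index-independence of the asymptotic joint distribution, with the explicit \ac{rs} formula of Proposition \ref{proposition:3}, which allows us to identify this joint distribution via moment matching. Concretely, by Proposition \ref{proposition:2}, almost surely in $\mA$ the asymptotic conditional distribution $\rmp_{\hx|x}^{j}$ is independent of $j$; call its common value $\rmp_{\hxx|\xx}^{\star}$. It then suffices to show that $\rmp_{\hxx|\xx}^{\star}$ coincides with the conditional distribution $\rmp_{\hxx|\xx}$ of $\hxx$ given $\xx$ in the \ac{rs} decoupled system of Definition \ref{def:rs_single_user}.

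The identification I would carry out via the moment method, mimicking the argument already used in the proof of Proposition \ref{proposition:2}. For arbitrary nonnegative integers $k$ and $\ell$, choose the distortion function $\sfd(\hx;x) = \hx^{k} x^{\ell}$ and the index set $\setW(n) = [j : j + \lfloor \eta n \rfloor]$. Applying Proposition \ref{proposition:3} under Assumption \ref{asp:3} and then letting $\eta \downarrow 0$ yields
\begin{align}
\lim_{\eta \downarrow 0} \sfD^{\setW} \;=\; \E \int \rmg^{k}\, x^{\ell}\, \md z ,
\end{align}
where $\rmg$ is the minimizer in \eqref{eq:rs-3}. By construction, the left-hand side is the $(k,\ell)$-joint moment of $(\hx_{j}, x_{j})$ in the large-system limit, computed under $\rmp_{x}\,\rmp_{\hxx|\xx}^{\star}$. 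On the other hand, by direct comparison of \eqref{eq:rs-3} with Definition \ref{def:single_user_estimator} and Definition \ref{def:rs_single_user}, evaluating the decoupled \ac{map} estimator $\rmg_{\mathsf{map}}[(y);\lams,u]$ at $y = \xx + \sqrt{\lams_{0}}\,z$ with $\xx \sim \rmp_{x}$ reproduces exactly the variable $\rmg$. Hence the right-hand side above is the $(k,\ell)$-joint moment of $(\hxx, \xx)$ under $\rmp_{x}\,\rmp_{\hxx|\xx}$.

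Since these integer joint moments agree for every $(k,\ell) \in \setZ^{+} \times \setZ^{+}$, the classical moment problem (already invoked in the proof of Proposition \ref{proposition:2}) gives $\rmp_{x}\,\rmp_{\hxx|\xx}^{\star} = \rmp_{x}\,\rmp_{\hxx|\xx}$, which, upon conditioning on $\xx$, yields the asserted convergence in distribution $(\hx_{j},x_{j}) \to (\hxx,\xx)$, almost sure in $\mA$, whenever $\rmp_{x} = \rmp_{\xx}$.

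The main obstacle I anticipate is the justification of moment determinacy when the source support $\setX$ is unbounded, as in the sparse Gaussian case used later in the compressive sensing application. For bounded $\setX$ the Hausdorff moment problem gives uniqueness immediately and uniformly in $j$, but in the unbounded case one must verify a Carleman-type tail condition on the joint moments. This in turn requires showing that the minimizer $\rmg$ of \eqref{eq:rs-3} inherits sufficiently controlled tails from $\xx$ and the Gaussian factor $z$, which typically follows from coercivity of $u(\cdot)$ together with the quadratic penalty in the objective, but is the step where uniform quantitative control must be supplied rather than being purely formal.
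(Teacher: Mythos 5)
Your proof is correct and follows essentially the same route as the paper's: invoke Proposition \ref{proposition:2} for index-independence, compute the $(k,\ell)$-joint moments by setting $\sfd(\hx;x)=\hx^k x^\ell$ in Proposition \ref{proposition:3}, identify these with the joint moments of $(\hxx,\xx)$ from Definition \ref{def:rs_single_user}, and conclude by moment matching (the paper averages over the full index set $\setW(n)=[1:n]$ rather than taking $\eta\downarrow 0$ on $[j:j+\eta n]$, but given Proposition \ref{proposition:2} this is a cosmetic difference). Your closing caveat about moment determinacy for unbounded $\setX$ is a genuine point of rigor that the paper's one-line assertion that the joint moments are ``uniformly bounded'' does not actually resolve either, so you have accurately located the weakest link in the argument rather than introduced a new gap.
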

\begin{proof}
Using Proposition \ref{proposition:2}, for any two different indices $j,q \in [1:n]$ we have
\begin{align}
\rmp^j_{\hx|x}(\hv|v)=\rmp^q_{\hx|x}(\hv|v) \label{eq:rs-11}
\end{align}
at the mass point $(\hv,v)$. Therefore for any index $j$, we have
\begin{align}
\E \hx_j^k x_j^\ell=\lim_{n\uparrow\infty} \frac{1}{n} \E \sum_{i=1}^n \hx_i^k x_i^\ell. \label{eq:rs-12}
\end{align}
Consequently, the asymptotic $(k,\ell)$-joint moment of $(\hx_j,x_j)$ under the \ac{rs} assumption is determined by letting $\setW(n)=[1:n]$ and the distortion function in the \ac{rs} ansatz
\begin{align}
\sfd(\hx;x)= \hx^k x^\ell \label{eq:rs-13}
\end{align}
and determining the asymptotic distortion. Substituting in Proposition \ref{proposition:3}, the asymptotic joint moment reads
\begin{align}
\sfM_{k,\ell}^j= \E \int \rmg^k x^\ell \md z \label{eq:rs-14}
\end{align}
where $\rmg$ is defined in \eqref{eq:rs-3}. Considering Definition \ref{def:rs_single_user} and assuming $\rmp_\xx=\rmp_x$, \eqref{eq:rs-14} describes the $(k,\ell)$-joint moment of $(\hxx,\xx)$ as well. Noting that $\sfM_{k,\ell}^j$ is uniformly bounded for any pair of integers $k$ and $\ell$, it is concluded that the asymptotic joint distribution of $(\hx_j,x_j)$ and the joint distribution of $(\hxx,\xx)$ are equivalent.
\end{proof}

Proposition \ref{proposition:4} gives a more general form of the \ac{rs} decoupling principles investigated in \cite{rangan2012asymptotic} and \cite{tulino2013support}. In fact, by restricting the system matrix and source distribution as in \cite{rangan2012asymptotic} and \cite{tulino2013support}, one can recover the formerly studied \ac{rs} decoupling principles.

\subsection*{RS Zero Temperature}
To have a basic measure of \ac{rs} ansatz's consistency, we evaluate the zero temperature entropy under the \ac{rs} assumption following the discussion in Section \ref{subsec:cons_test}. Substituting \eqref{eq:rs-2} in \eqref{eq:rep-23} and taking the same steps as in Appendix \ref{app:b}, the \ac{rs} zero temperature entropy is determined as
\begin{align}
\rmH^0_{\mathsf{rs}}=\lim_{\upbeta\uparrow\infty} \frac{\upbeta^2}{2\lambda} \frac{\partial}{\partial \upbeta} \left[ \int_0^1 \rmF^{\upbeta}(\omega) \dif \omega -  \rmF^{\upbeta} (1) \right] \label{eq:rs-15}
\end{align}
where the function $\rmF^{\upbeta}(\cdot)$ is defined as
\begin{align}
\rmF^{\upbeta}(\omega) = \frac{\chi}{\upbeta} \rmR_{\mJ}(-\frac{\chi}{\lambda} \omega) + \left[q-\frac{\lambda_0}{\lambda} \chi \right] \frac{\dif}{\dif \omega} \left[ \omega \rmR_{\mJ}(-\frac{\chi}{\lambda} \omega) \right]. \label{eq:rs-16}
\end{align}
Taking the derivative first and then the limit, it finally reads
\begin{align}
\rmH^0_{\mathsf{rs}}= \frac{\chi}{2\lambda}\left[ \rmR_{\mJ}(-\frac{\chi}{\lambda}) - \int_0^1 \rmR_{\mJ}(-\frac{\chi}{\lambda} \omega) \dif \omega  \right]. \label{eq:rs-17}
\end{align}
We later on see that the zero temperature entropy takes the same form under the \ac{rsb} assumptions.

\section{RSB Ans\"atze and RSB Decoupling Principle}
\label{sec:rsb}
In \cite{parisi1980sequence}, Parisi introduced a breaking scheme to broaden the restricted set of correlation matrices. The scheme recursively extends the set of matrices to larger sets. The breaking scheme was then employed to broaden the \ac{rs} structure of the correlation matrices, and therefore, the obtained structure was identified as the broken \ac{rs} or, \ac{rsb} structure. The key feature of Parisi's breaking scheme is that, by starting from the \ac{rs} structure, the new structure after breaking can be reduced to the structure before breaking. Thus, the set of fixed point solutions found by assuming the broken structure includes the solutions of the previous structure as well.

\begin{definition}[Parisi's breaking scheme]
\label{def:breaking_scheme}
Let $m$ be a multiple of the integer $\xi$, and $\mQ^{[\ell]}$ represent an $\frac{m}{\xi}\times\frac{m}{\xi}$ correlation matrix. Parisi's breaking scheme then constructs a new $m \times m$ correlation matrix $\mQ^{[\ell+1]}$ as
\begin{align}
\mQ^{[\ell+1]}=  \mI_\xi \otimes \mQ^{[\ell]} + \kappa \mone_m \label{eq:rsb-1}
\end{align}
for some real scalar $\kappa$.
\end{definition}

By choosing $\mQ^{[0]}$ to be an \ac{rs} correlation matrix in Definition \ref{def:breaking_scheme}, the matrix $\mQ^{[1]}$ finds the \ac{rsb} structure with one step of breaking (1\ac{rsb}). The steps of breaking can be further increased recursively by inserting $\mQ^{[1]}$ in the next breaking scheme, determining the new correlation matrix $\mQ^{[2]}$, and repeating the procedure. We start by the 1\ac{rsb} correlation matrix, and then, extend the result to higher \ac{rsb} ans\"atze with more steps of breaking.

\begin{assumption}[1\ac{rsb} Structure]
\label{asp:4}
\normalfont
Considering the spin glass of replicas as defined in Definition \ref{def:replica_spin}, the replica correlation matrix is of the form
\begin{align}
\mQ=\frac{\chi}{\upbeta} \mI_m + p \mI_{\frac{m\upbeta}{\mu}} \otimes \mone_{\frac{\mu}{\upbeta}} + q \mone_m \label{eq:rsb-2}
\end{align}
where $\chi$, $p$, $q$ and $\mu$ are some non-negative real scalars.
\end{assumption}

Regarding Parisi's breaking scheme, Assumption \ref{asp:4} considers $\mQ=\mQ^{[1]}$ by letting $\mQ^{[0]}$ have the \ac{rs} structure with parameters $\chi{\upbeta}^{-1}$ and $p$, $\xi={\mu}{\upbeta}^{-1}$ and $\kappa=q$. Here, the 1\ac{rsb} structure reduces to \ac{rs} by setting $p=0$. Therefore, the set of 1\ac{rsb} correlation matrices contains the one considered in Assumption \ref{asp:3}. 

\begin{proposition}[1RSB Ansatz]
\label{proposition:5}
Let the linear system \eqref{eq:sys-1} fulfill the constraints of Section \ref{sec:problem_formulation}. Suppose Assumptions \ref{asp:1} and \ref{asp:2}, as well as Assumption \ref{asp:4} hold. Let $x\sim\rmp_x$ and
\begin{align}
\rmg = \arg \min_{v} \left[ \frac{1}{2\lams}(x+ \sqrt{\lams_0} z_0 + \sqrt{\lams_1} z_1 -v)^2 +  u(v) \right] \label{eq:rsb-3}
\end{align}
with $v \in \setX$ and $\lams_0$, $\lams_1$ and $\lams$ being defined by
\begin{subequations}
\begin{align}
\lams_0 &= \left[ \mathrm{R}_{\mJ}(-\frac{\chi}{\lambda}) \right]^{-2} \frac{\partial}{\partial \chi} \left\lbrace \left[\lambda_0(\chi + \mu p)-\lambda q \right] \mathrm{R}_{\mJ}(- \frac{\chi+\mu p}{\lambda})\right\rbrace, \label{eq:rsb-3a} \\
\lams_1 &=\left[ \mathrm{R}_{\mJ}(-\frac{\chi}{\lambda}) \right]^{-2} \left[ \mathrm{R}_{\mJ}(- \frac{\chi}{\lambda}) - \mathrm{R}_{\mJ}(- \frac{\chi+\mu p}{\lambda}) \right] \lambda \mu^{-1}, \label{eq:rsb-3b} \\
\lams&= \left[ \mathrm{R}_{\mJ}(-\frac{\chi}{\lambda}) \right]^{-1} \lambda \label{eq:rsb-3c}
\end{align}
\end{subequations}
for some non-negative real $\chi$, $p$, $q$ and $\mu$ and real variables $z_0$ and $z_1$. Then, the asymptotic distortion~in~\eqref{eq:sys-9}~reads
\begin{align}
\sfD^{\setW}= \E \int \sfd(\rmg,x) \ \tilde{\Lambda}{(z_1|z_0,x)} \md z_1 \md z_0, \label{eq:rsb-4}
\end{align}
with $\tilde{\Lambda}{(z_1|z_0,x)} \coloneqq \left[ \int \Lambda{(z_1,z_0,x)} \md z_1 \right]^{-1} \Lambda{(z_1,z_0,x)}$ and $\Lambda {(z_1,z_0,x)}$ being defined by
\begin{align}
\Lambda {(z_1,z_0,x)} = e^{- \mu \left[ \tfrac{1}{2 \lams}(x+ \sqrt{\lams_0} z_0 + \sqrt{\lams_1} z_1 -\rmg)^2 - \tfrac{1}{2\lams} (\sqrt{\lams_0} z_0 + \sqrt{\lams_1} z_1 )^2 + u(\mathrm{g})\right]}. \label{eq:rsb-5}
\end{align}
The scalars $\chi$, $p$ and $q$ satisfy
\begin{subequations}
\begin{align}
\sqrt{\lams_0}\left[ \chi+\mu p \right] &= \lams \E \int (\mathrm{g}-x) z_0 \ \tilde{\Lambda}{(z_1|z_0,x)} \md z_1 \md z_0 \label{eq:rsb-6a}\\
\sqrt{\lams_1}\left[ \chi+\mu q + \mu p \right] &= \lams \E \int (\mathrm{g}-x) z_1 \ \tilde{\Lambda}{(z_1|z_0,x)} \md z_1 \md z_0 \label{eq:rsb-6b}\\
q+p &= \E \int (\rmg-x)^2 \ \tilde{\Lambda}{(z_1|z_0,x)} \md z_1 \md z_0  \label{eq:rsb-6c}
\end{align}
\end{subequations}
and $\mu$ is a solution of the fixed point equation
\begin{align}
\frac{\mu}{2\lams}\left[\mu\frac{\lams_1}{\lams}q+p\right]-\frac{1}{2\lambda}\int_0^{\mu p} \mathrm{R}_{\mJ}(-\frac{\chi+\omega}{\lambda}) \dif \omega = \E \int \log\tilde{\Lambda}{(z_1|z_0,x)} \ \tilde{\Lambda}{(z_1|z_0,x)} \md z_1 \md z_0. \label{eq:rsb-7}
\end{align}
Moreover, $\chi$, $p$, $q$ and $\mu$ minimize the zero temperature free energy $\sfF^{0[1]}_{\mathsf{rsb}}$ which reads
\begin{align}
\sfF^{0[1]}_{\mathsf{rsb}}=\frac{1}{2\lambda} \left[ \int_0^1 \rmF(\omega) \dif \omega -  \rmF (1) \right] - \frac{1}{\mu} \E \int \log \left[ \int \Lambda{(z_1,z_0,x)} \md z_1 \right] \ \md z_0 \label{eq:rsb-8}
\end{align}
with $\rmF(\cdot)$ being defined as
\begin{align}
\rmF(\omega)=\frac{1}{\mu} \frac{\dif}{\dif \omega} \left[\int_{\chi \omega}^{\left[\chi+\mu p\right] \omega} \rmR_{\mJ}(-\frac{t}{\lambda} ) \dif t \right] + \left[q-\lambda_0 \frac{\chi+\mu p}{\lambda} \right] \frac{\dif}{\dif \omega} \left[ \omega \rmR_{\mJ}(-\frac{\chi+\mu p}{\lambda} \omega) \right]. \label{eq:rsb-9}
\end{align}
\end{proposition}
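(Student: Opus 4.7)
The plan is to specialize Proposition~\ref{proposition:1} to the 1\ac{rsb} replica correlation matrix and then carry out a Hubbard--Stratonovich-style decoupling of the resulting finite-dimensional spin glass of replicas. The main conceptual point is that Assumption~\ref{asp:4} forces $\mQ$ to share the eigenspaces of $\mone_m$ and of $\mI_{m\upbeta/\mu}\otimes\mone_{\mu/\upbeta}$, which are exactly the same eigenspaces as $\mT$ in \eqref{eq:rep-5}. Thus $\mT\mQ$ is simultaneously diagonalizable, and its spectrum splits into three groups: the ``fully symmetric'' direction $\mathbf{1}_m$, the $(m\upbeta/\mu-1)$-dimensional block-aligned subspace, and the $(m-m\upbeta/\mu)$-dimensional intra-block subspace. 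I would evaluate the corresponding three eigenvalues (which are $O(1/\upbeta)$, giving finite limits after multiplication by $\upbeta$) and insert them into the $\rmR$-transform trace terms appearing in \eqref{eq:rep-11} and in the free-energy correction of Proposition~\ref{proposition:1}.

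Next I would treat the Hamiltonian of the spin glass of replicas, \eqref{eq:rep-4}. Writing the quadratic form $(\bxx-\bvv)^\trp \mT \rmR_{\mJ}(-2\upbeta\mT\mQ)(\bxx-\bvv)$ in the eigenbasis above yields three independent quadratic terms weighted by the three spectral values. Each quadratic term is linearised by a Gaussian (Hubbard--Stratonovich) integral: a scalar $z_0$ absorbs the all-replica direction, a set of $\mu/\upbeta$-tuples of Gaussians $z_1$ absorbs the intra-block direction, and the remaining diagonal piece gives a per-replica Gaussian that ultimately combines with the utility $u(\cdot)$. After this linearisation the replicas become conditionally independent given $z_0$ and $z_1$, so the partition function $\maz^\sfR(\upbeta|\bxx)$ factorises into a product of scalar partition sums whose exponents are proportional to $\upbeta$. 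In the zero-temperature limit, Laplace's method collapses each such scalar sum to the minimiser \eqref{eq:rsb-3}, and raising it to the power $\mu/\upbeta$ per block produces the tilted measure $\Lambda(z_1,z_0,x)$ in \eqref{eq:rsb-5}. The exponents $\lams_0,\lams_1,\lams$ appear naturally once the three eigenvalues of $\mT\mQ$ are re-expressed through the $\rmR$-transform identity $\rmR_{\mJ}(-\chi/\lambda)=\lambda/\lams$ and its derivative with respect to $\chi$.

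To arrive at \eqref{eq:rsb-4}--\eqref{eq:rsb-9} I would then apply Proposition~\ref{proposition:2}: the general decoupling principle together with the factorised representation above shows that the asymptotic distortion equals the average of $\sfd(\rmg,x)$ under the tilted measure $\tilde\Lambda(z_1|z_0,x)$. The fixed-point equations \eqref{eq:rsb-6a}--\eqref{eq:rsb-6c} are obtained by enforcing the saddle-point condition \eqref{eq:rep-6} componentwise on the three independent entries $q_0=q$, $q_0+p$, and $q_0+p+\chi/\upbeta$ of the 1\ac{rsb} correlation matrix. Extremising with respect to the Parisi parameter $\mu$ delivers the extra equation \eqref{eq:rsb-7}; here one must carefully handle the $\upbeta\uparrow\infty$ limit inside the integral $\int_0^{\mu p}\rmR_{\mJ}(-(\chi+\omega)/\lambda)\dif\omega$, which arises from $\int_0^1 \tr{\mT\mQ\rmR_{\mJ}(-2\upbeta\omega\mT\mQ)}\dif\omega$ after the change of variables $\omega\mapsto\upbeta\omega$ on the block-aligned eigenvalue and an application of the fundamental theorem of calculus. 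Finally, substituting the explicit free energy of the decoupled Gaussian integrals into \eqref{eq:rep-11} and passing to $\upbeta\uparrow\infty$ gives the zero-temperature expression \eqref{eq:rsb-8} with $\rmF(\omega)$ as in \eqref{eq:rsb-9}.

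The main obstacle I expect is bookkeeping in the $m\downarrow 0$ limit combined with $\upbeta\uparrow\infty$: several terms in the trace of $\rmR_{\mJ}(-2\upbeta\mT\mQ)$ carry factors $m\upbeta/\mu$ or $1/\mu$ that must be expanded to first order in $m$ before the limit can be taken, and the all-symmetric eigenvalue contributes through $\mT$'s Sherman--Morrison correction which must be expanded to the right order so that $\lams_0$ comes out as the derivative in \eqref{eq:rsb-3a}. Checking the $p\to 0$ reduction to the \ac{rs} result of Proposition~\ref{proposition:3} gives a useful internal consistency test and confirms that all combinatorial factors have been tracked correctly.
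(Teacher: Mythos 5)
Your proposal follows the same route as the paper's Appendix~C: exploit simultaneous diagonalizability of $\mT$ and the 1RSB $\mQ$ to split $\mR=\mT\rmR_{\mJ}(-2\upbeta\mT\mQ)$ into the same three eigenspaces (as in Appendix~E), read off $e$, $g$, $f$ by matching eigenvalues in the $m\downarrow 0$ limit, linearize the two off-diagonal quadratic forms by Hubbard--Stratonovich with one $z_0$ for $\mone_m$ and one $z_1$ per $\mu/\upbeta$-block, take the Laplace/zero-temperature limit to obtain $\rmg$ and $\Lambda$, derive the saddle-point equations from the self-consistency condition \eqref{eq:rep-6}, and extremize the free energy in $\mu$. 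Two small bookkeeping remarks: the intra-block HS introduces $m\upbeta/\mu$ independent copies of $z_1$ (one per block), not $\mu/\upbeta$-tuples, and the diagonal piece $e\,\mI_m$ is not decoupled by an auxiliary Gaussian at all --- it stays as $e(x-v)^2$ in the scalar exponent; also, the paper obtains the fixed-point equations \eqref{eq:rsb-6a}--\eqref{eq:rsb-6c} most cleanly by writing the required moments as $\partial\sfF^\sfR/\partial e$, $\partial\sfF^\sfR/\partial g$, $\partial\sfF^\sfR/\partial f$ rather than matching entries of $\mQ$ directly, but these are equivalent.
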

\begin{proof}
See Appendix \ref{app:c}.
\end{proof}

Similar to our approach under the \ac{rs} ansatz, we employ Proposition \ref{proposition:5} to introduce the decoupled 1\ac{rsb} single-user system which describes the statistical behavior of the \ac{map} estimator's input-output entries under 1\ac{rsb} assumption. The decoupled system under 1\ac{rsb} differs from \ac{rs} within a new impairment term which is correlated with the source and noise symbols through a joint distribution. The impairment term intuitively plays the role of a correction factor which compensates the possible inaccuracy of the \ac{rs} ansatz. The decoupled \ac{map} estimator, however, follows the same structure as for \ac{rs}.

\begin{figure}[t]
\begin{center}
\begin{tikzpicture}[auto, node distance=2.5cm,>=latex']
    \node [input, name=input] {};
    \node [margin, right of=input, node distance=1.6cm,minimum height=2.2em,minimum width=3.5em,fill=yellow!20] (margin) {};
    \node [sum, right of=input,node distance=1.6cm, fill=blue!20] (sum0) {$+$};
    \node [sum, above of=sum0, node distance=.8cm,fill=blue!20] (times0) {$\times$};
    \node [input, right of=times0, node distance=.9cm] (cof0) {};
    \node [sum, right of=sum0, node distance=1.8cm,fill=blue!20] (sum1) {$+$};
    \node [sum, above of=sum1, node distance=.8cm,fill=blue!20] (times1) {$\times$};
    \node [input, right of=times1, node distance=.7cm] (cof1) {};
    \node [block, above of=times0, node distance=1.2cm,minimum height=1.7em, minimum width=5em,fill=red!20] (cond) { $\rmp_{z_1|z_0,\xx}$ };
    \node [input, above of=times1, node distance=.8cm] (noise1) {};    
    \node [block, right of=sum1,fill=red!20] (estimator) { $\mathrm{g}_{\mathsf{map}}[(\cdot);\lams,u]$ };
    \node [output, right of=estimator, node distance=1.7cm] (output) {};

    \draw [draw,->] (input) -- node[pos=.1] {$\xx$} (sum0) ;
    \draw [->] (sum0) -- node { } (sum1);
    \draw [->] (sum1) -- node[pos=.8] {$y$} (estimator);
    \draw [->] (estimator) -- node[pos=.9] [name=x_h] {$\hxx$}(output);
	\draw [draw,->] (times0) -- node[pos=.1] {}(sum0);
	\draw [draw,->] (times1) -- node[pos=.1] {}(sum1);
	\draw [draw,->] (cof0) -- node[pos=.01,above] {\small $\sqrt{\lams_1}$}(times0);
	\draw [draw,->] (cof1) -- node[pos=.01,above] {\small $\sqrt{\lams_0}$}(times1);
	
	\draw [draw,->] (cond) -- node[pos=.3] {$z_1$}(times0);
	\draw [draw,->] (noise1) -- node[pos=.1] {$z_0$}(times1);	
\end{tikzpicture}
\end{center}
	\caption{The decoupled scalar system under the 1\ac{rsb} ansatz.}
	\label{fig:2}
\end{figure}

\begin{definition}[1\ac{rsb} decoupled system]
\label{def:1rsb_single_user}
Fig. \ref{fig:2} defines the 1\ac{rsb} decoupled system in which
\begin{itemize}
\item the source symbol $\xx$ is distributed with $\rmp_{\xx}$ over the support $\setX$,
\item $z_0$ is a zero-mean and unit-variance Gaussian random variable,
\item $z_1$ is a random variable correlated with $\xx$ and $z_0$,
\item $\xx$ and $z_0$ are independent, and 
\begin{align}
\rmp_{z_1|\xx,z_0}=\tilde{\Lambda} {(z_1|z_0,x)} \phi(z_1) \label{eq:rsb-10}
\end{align}
with $\tilde{\Lambda}$ defined in Proposition \ref{proposition:5}.
\item $\hxx$ is estimated from the observation $y \coloneqq \xx+ \sqrt{\lams_0} z_0+ \sqrt{\lams_1} z_1$ as
\begin{align}
\hxx=\mathrm{g}_{\mathsf{map}}[(y);\lams,u]. \label{eq:rsb-11}
\end{align}
\item $\mathrm{g}_{\mathsf{map}}[(\cdot);\lams,u]$ is the decoupled \ac{map} estimator with the estimation parameter $\lams$ and the utility function $u(\cdot)$ as defined in Definition \ref{def:single_user_estimator}.
\item $\lams_0$, $\lams_1$ and $\lams$ are defined as in Proposition \ref{proposition:5}.
\end{itemize} 
\end{definition}

\begin{proposition}[1\ac{rsb} Decoupling Principle]
\label{proposition:6}
Let the linear system \eqref{eq:sys-1} fulfill the constraints of Section \ref{sec:problem_formulation} and be estimated via the \ac{map} estimator in \eqref{eq:int-2}. Consider the 1\ac{rsb} decoupled system as defined in Definition \ref{def:1rsb_single_user}, and suppose Assumptions \ref{asp:1}, \ref{asp:2} and \ref{asp:4} hold. Then, for all $j\in [1:n]$, the tuple $(\hx_j,x_j)$ converges in distribution to $(\hxx,\xx)$  if $\rmp_x=\rmp_\xx$.
\end{proposition}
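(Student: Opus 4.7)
The plan is to mirror the argument used for Proposition \ref{proposition:4} (the \ac{rs} decoupling principle), but now invoking the 1\ac{rsb} ansatz in Proposition \ref{proposition:5} wherever the \ac{rs} ansatz of Proposition \ref{proposition:3} was used. First, I would invoke Proposition \ref{proposition:2} (the general decoupling principle) to conclude that the asymptotic conditional distribution $\rmp_{\hx|x}^{j}$ does not depend on $j\in[1:n]$, so it suffices to identify this common conditional law. Since the source vector is \ac{iid}, establishing that $\rmp^{j}_{\hx,x}$ coincides with the joint law of $(\hxx,\xx)$ of Definition \ref{def:1rsb_single_user} is equivalent to showing $\rmp^{j}_{\hx|x} = \rmp_{\hxx|\xx}$.

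Next I would apply the moment method, exactly as in the proof of Proposition \ref{proposition:4}. Taking the polynomial distortion $\sfd(\hx;x) = \hx^{k} x^{\ell}$ and the index set $\setW(n)=[1:n]$ (so $\eta = 1$), the averaged distortion coincides with the empirical $(k,\ell)$-joint moment of the input–output pairs, and by Proposition \ref{proposition:2} its limit equals the asymptotic $(k,\ell)$-joint moment $\sfM^{j}_{k,\ell}$ of $(\hx_j,x_j)$. Substituting the polynomial distortion into \eqref{eq:rsb-4} of Proposition \ref{proposition:5} yields
\begin{align}
\sfM^{j}_{k,\ell} = \E \int \rmg^{k}\, x^{\ell}\, \tilde{\Lambda}(z_1|z_0,x)\, \md z_1\, \md z_0,
\end{align}
with $\rmg$ defined in \eqref{eq:rsb-3} and with $\lams_0$, $\lams_1$, $\lams$, and $\tilde{\Lambda}$ as prescribed there (and with $\chi,p,q,\mu$ the solutions of the fixed-point system \eqref{eq:rsb-6a}--\eqref{eq:rsb-7}).

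I would then compute the $(k,\ell)$-joint moment of $(\hxx,\xx)$ directly from Definition \ref{def:1rsb_single_user}. Because $y = \xx + \sqrt{\lams_0}\, z_0 + \sqrt{\lams_1}\, z_1$ and $\hxx = \mathrm{g}_{\mathsf{map}}[(y);\lams,u]$, the decoupled estimator output coincides with $\rmg$ of \eqref{eq:rsb-3} after the substitution $x\mapsto \xx$. Using the prescribed conditional law $\rmp_{z_1|\xx,z_0} = \tilde{\Lambda}(z_1|z_0,\xx)\,\phi(z_1)$ from \eqref{eq:rsb-10}, independence of $\xx$ and $z_0$, and $\rmp_\xx = \rmp_x$, the joint moment of $(\hxx,\xx)$ matches $\sfM^{j}_{k,\ell}$ term by term (the Gaussian averaging conventions $\md z_i=\phi(z_i)\, dz_i$ absorb the marginal density of $z_0$ and the factor $\phi(z_1)$ in \eqref{eq:rsb-10}).

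Finally I would conclude using the classical moment problem: the joint moments $\sfM^{j}_{k,\ell}$ are uniformly bounded in $(k,\ell)$, hence they uniquely determine the joint distribution, so $(\hx_j,x_j)$ converges in distribution to $(\hxx,\xx)$. The main bookkeeping obstacle, and the only place where the argument is more delicate than in the \ac{rs} case, is checking that the inner measure $\tilde{\Lambda}(z_1|z_0,x)\md z_1\md z_0$ appearing in \eqref{eq:rsb-4} is genuinely a probability measure that matches $\rmp_{z_0,z_1|\xx}$ of the decoupled system; once this identification is in place (which follows from the definition of $\tilde{\Lambda}$ and the Gaussian averaging convention), the remainder of the argument is a routine parallel of the proof of Proposition \ref{proposition:4}.
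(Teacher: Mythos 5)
Your proposal is correct and follows exactly the approach the paper intends: the paper's own proof of Proposition \ref{proposition:6} simply says to repeat the steps of Proposition \ref{proposition:4} with Proposition \ref{proposition:5} replacing Proposition \ref{proposition:3}, and your argument is a careful unfolding of precisely those steps (general decoupling via Proposition \ref{proposition:2}, polynomial distortion, moment matching against Definition \ref{def:1rsb_single_user}, and the moment-problem conclusion). The one extra remark you add—verifying that $\tilde{\Lambda}(z_1|z_0,x)\,\md z_1\,\md z_0$ is a probability measure matching $\rmp_{z_0,z_1|\xx}$—is a harmless and sensible bookkeeping check that the paper leaves implicit.
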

\begin{proof}
The proof takes exactly same steps as for the \ac{rs} decoupling principle using Proposition \ref{proposition:5}.
\end{proof}

The 1\ac{rsb} decoupled system, in general, provides a more accurate approximation of the estimator's asymptotics by searching over a larger set of solutions which include the \ac{rs} ansatz. To investigate the latter statement, consider the case of $p=0$. In this case, the 1\ac{rsb} structure reduces to \ac{rs}. Setting $p=0$ in Proposition \ref{proposition:5}, $\lambda_1$ becomes zero, and consequently $\tilde{\Lambda}(z_1|z_0,x)=1$. Moreover, the fixed point equations in \eqref{eq:rsb-7} hold for any choice of $\mu$, and the scalars $\chi$ and $q$ couple through the same set of equations as in the \ac{rs} ansatz. The zero temperature free energy of the system, furthermore, reduces to its \ac{rs} form under the assumption of $p=0$. Denoting the parameters of the \ac{rs} ansatz by $\left[ \chi_{\mathsf{rs}}, q_{\mathsf{rs}} \right]$, it is then concluded that $\left[ \chi,p, q, \mu \right]=\left[ \chi_{\mathsf{rs}},0, q_{\mathsf{rs}},0 \right]$ is a solution to the 1\ac{rsb} fixed point equations, when an stable solution to the \ac{rs} fixed point exists. The solution, however, does not give necessarily the 1\ac{rsb} ansatz, the stable solution to the 1\ac{rsb} fixed point equations with minimum free energy may occur at some other point. We investigate the impact of replica breaking for some examples later through numerical results.

Parisi's breaking scheme can be employed to extend the structure of the correlation matrix to the \ac{rsb} structure with more steps of breaking by recursively repeating the scheme. In fact, one can start from an \ac{rs} structured $\mQ^{[0]}$ and employ the breaking scheme for $b$ steps to determine the correlation matrix $\mQ^{[b]}$. In this case, the replica correlation matrix is referred to as the $b$\ac{rsb} correlation matrix.

\begin{assumption}[$b$\ac{rsb} Structure]
\label{asp:5}
\normalfont
Considering the spin glass of replicas as defined in Definition \ref{def:replica_spin}, the replica correlation matrix is of the form
\begin{align}
\mQ=\frac{\chi}{\upbeta} \mI_m + \sum_{\kappa=1}^b p_\kappa \mI_{\frac{m\upbeta}{\mu_\kappa}} \otimes \mone_{\frac{\mu_\kappa}{\upbeta}} + q \mone_m \label{eq:rsb-12}
\end{align}
where scalars $\chi$ and $q$, and the sequences $\{p_\kappa\}$ and $\{\mu_\kappa\}$ with $\kappa\in[1:b]$ are non-negative reals, and $\{\mu_\kappa\}$  satisfies the following constraint
\begin{align}
\frac{\mu_{\kappa+1}}{\mu_{\kappa}} \in \setZ^+ \label{eq:rsb-13}
\end{align}
for $\kappa\in[1:b-1]$.
\end{assumption}
Considering the correlation matrix in Proposition \ref{proposition:1} to be of the form indicated in Assumption \ref{asp:5} the previous ans\"atze are extended to a more general ansatz which can reduce to the 1\ac{rsb} as well as \ac{rs} ansatz. Proposition~\ref{proposition:7} expresses the replica ansatz under the $b$\ac{rsb} assumption.

\begin{proposition}[$b$RSB Ansatz]
\label{proposition:7}
Let the linear system \eqref{eq:sys-1} fulfill the constraints of Section \ref{sec:problem_formulation}. Suppose Assumptions \ref{asp:1} and \ref{asp:2}, as well as Assumption \ref{asp:5} hold. For $\kappa \in [0:b]$, define the sequence $\{ \tilde{\chi}_\kappa \}$, such that $\tilde{\chi}_0=\chi$ and
\begin{align}
\tilde{\chi}_\kappa \coloneqq \chi+\sum_{\varsigma=1}^{\kappa} \mu_\varsigma p_\varsigma \label{eq:rsb-14}
\end{align}
for $\kappa \in [1:b]$. Let $x\sim\rmp_x$, and
\begin{align}
\rmg = \arg \min_{v} \left[ \frac{1}{2\lams}(x+ \sum_{\kappa=0}^b \sqrt{\lams_\kappa} z_\kappa-v)^2 +  u(v) \right] \label{eq:rsb-15}
\end{align}
with $v \in \setX$ and $\lams_0$, $\{ \lams_\kappa \}$ and $\lams$ being defined as
\begin{subequations}
\begin{align}
\lams_0 &= \left[ \mathrm{R}_{\mJ}(-\frac{\tilde{\chi}_0}{\lambda}) \right]^{-2} \frac{\partial}{\partial \tilde{\chi}_{b}} \left\lbrace \left[ \lambda_0 \tilde{\chi}_{b} - \lambda q \right] \rmR_{\mJ}(- \frac{\tilde{\chi}_{b}}{\lambda}) \right\rbrace, \label{eq:rsb-16a} \\
\lams_\kappa &=\left[ \mathrm{R}_{\mJ}(-\frac{\tilde{\chi}_0}{\lambda}) \right]^{-2} \left[ \rmR_{\mJ}(- \frac{\tilde{\chi}_{\kappa-1}}{\lambda}) - \rmR_{\mJ}(-\frac{\tilde{\chi}_{\kappa}}{\lambda}) \right] \lambda \mu_\kappa^{-1}, \label{eq:rsb-16b} \\
\lams&= \left[ \mathrm{R}_{\mJ}(-\frac{\tilde{\chi}_0}{\lambda}) \right]^{-1} \lambda \label{eq:rsb-16c}
\end{align}
\end{subequations}
for some non-negative real scalar $q$, sequences $\{ \tilde{\chi}_\kappa \}$ and $\{\mu_\kappa\}$ and sequence of real variables $\{z_\kappa\}$. Then, the asymptotic distortion defined in \eqref{eq:sys-9} reads
\begin{align}
\sfD^{\setW}&= \E \int \sfd(\rmg;x) \prod_{\kappa=1}^b \tilde{\Lambda}_\kappa{(z_\kappa| \{z_\varsigma\}_{\kappa+1}^b, z_0,x)} \ \md z_\kappa \md z_0, \label{eq:rsb-17}
\end{align}
where $ {\{z_\varsigma\}_{\kappa}^b\coloneqq \{z_\kappa,\ldots,z_b\}}$ and $\tilde{\Lambda}_\kappa {(z_\kappa| \{z_\varsigma\}_{\kappa+1}^b, z_0,x)} = \left[ \int \Lambda_\kappa {(\{z_\varsigma\}_{\kappa}^b, z_0,x)} \md z_\kappa\right]^{-1} \Lambda_\kappa {(\{z_\varsigma\}_{\kappa}^b, z_0,x)}$ with
\begin{align}
\Lambda_{1}  {(\{z_\varsigma\}_{1}^b, z_0,x)} \coloneqq  e^{-\mu_1  \left[  \tfrac{1}{2 \lams}(x+ \sum\limits_{\kappa=0}^b \sqrt{\lams_\kappa} z_\kappa -\rmg)^2 - \tfrac{1}{2\lams} (\sum\limits_{\kappa=0}^b \sqrt{\lams_\kappa} z_\kappa )^2 + u(\mathrm{g}) \right]} \label{eq:rsb-18}
\end{align}
and $\{\Lambda_\kappa {(\{z_\varsigma\}_{\kappa}^b, z_0,x)}\}$ for $\kappa\in[2:b]$ being recursively determined by
\begin{align}
\Lambda_{\kappa} {(\{z_\varsigma\}_{\kappa}^b, z_0,x)} \coloneqq \left[ \int \Lambda_{\kappa-1} {(\{z_\varsigma\}_{\kappa-1}^b, z_0,x)} \ \md z_{\kappa-1} \right]^{\tfrac{\mu_{\kappa}}{\mu_{\kappa-1}}}. \label{eq:rsb-19}
\end{align}
The scalar $q$ and sequences $\{ \tilde{\chi}_\kappa \}$ and $\{p_\kappa\}$ satisfy
\begin{subequations}
\begin{align}
\sum_{\kappa=1}^b p_\kappa + q &= \E \int (\rmg-x)^2 \prod_{\kappa=1}^b \tilde{\Lambda}_\kappa  {(z_\kappa| \{z_\varsigma\}_{\kappa+1}^b, z_0,x)} \ \md z_\kappa \md z_0,\label{eq:rsb-20a} \\
\tilde{\chi}_{\kappa-1}+\mu_\kappa \left(\sum_{\varsigma=\kappa}^b p_\varsigma + q \right) &= \frac{\lams}{\sqrt{\lams_\kappa}} \E \int (\rmg-x) z _\kappa \prod_{\kappa=1}^b \tilde{\Lambda}_\kappa {(z_\kappa| \{z_\varsigma\}_{\kappa+1}^b, z_0,x)} \ \md z_\kappa \md z_0 , \label{eq:rsb-20b} \\
\tilde{\chi}_{b} &= \frac{\lams}{\sqrt{\lams_0}} \E \int (\rmg-x) z _0  \prod_{\kappa=1}^b \tilde{\Lambda}_\kappa  {(z_\kappa| \{z_\varsigma\}_{\kappa+1}^b, z_0,x)} \ \md z_\kappa \md z_0, \label{eq:rsb-20c}
\end{align}
\end{subequations}
and the sequence $\{ \mu_\kappa\}$ is given by
\begin{align}
\{ \mu_\kappa \} =\arg\min_{\{ \tilde{\mu}_\kappa \} } \left[ \frac{1}{2\lambda} \int_0^1 \rmF(\omega;\{ \tilde{\mu}_\kappa \}) \dif \omega - \frac{1}{\tilde{\mu}_b} \E \int \log \left[ \int \Lambda_b {(z_b, z_0,x)} \md z_b \right] \ \md z_0 - \frac{1}{2\lams} \Delta(\{ \tilde{\mu}_\kappa \}) \right] \label{eq:rsb-20.1}
\end{align}
where the function $\rmF(\cdot;\{ \mu_\kappa \})$ is defined as
\begin{align}
\rmF(\omega;\{ \mu_\kappa \}) \coloneqq \sum\limits_{\kappa=1}^b \frac{1}{\mu_\kappa} \frac{\dif}{\dif \omega} \int_{\tilde{\chi}_{\kappa-1} \omega}^{\tilde{\chi}_{\kappa} \omega} \rmR_{\mJ}(-\frac{t}{\lambda} ) \dif t + \left[q- \frac{\lambda_0}{\lambda} \tilde{\chi}_b \right] \frac{\dif}{\dif \omega} \left[ \omega \rmR_{\mJ}(-\frac{\tilde{\chi}_b}{\lambda} \omega) \right],
\end{align}
$\Delta(\cdot)$ is defined as
\begin{align}
\Delta(\{ \mu_\kappa \})=\sum_{\kappa=1}^b \frac{1}{\mu_\kappa} \left[\zeta_\kappa\tilde{\chi}_\kappa -\zeta_{\kappa-1} \tilde{\chi}_{\kappa-1} \right]+\zeta_b q -\frac{\lams_0}{\lams} \tilde{\chi}_b \label{eq:rsb-21}
\end{align}
with $\zeta_0=1$, and $\zeta_\kappa$ for $\kappa \in [1:b]$ denoting
\begin{align}
\zeta_\kappa \coloneqq 1 - \sum_{\varsigma=1}^{\kappa} \mu_\varsigma \frac{\lams_\varsigma}{\lams}, \label{eq:rsb-22}
\end{align}
and $\{ \tilde{\mu}_\kappa \} \in \setS_{\bmu}$ in which
\begin{align}
\setS_{\bmu} \coloneqq \left\lbrace \{ \mu_1, \ldots, \mu_b \} \ni \mu_\kappa\in\setR^+ \ \wedge \ \frac{\mu_{\kappa+1}}{\mu_{\kappa}} \in \setZ^+ \ \forall \kappa \in [1:b-1] \right\rbrace. \label{eq:rsb-23}
\end{align}
In the case of multiple solutions for $\chi$, $q$, $\{ p_\kappa \}$ and $\{ \mu_\kappa \}$, the ansatz is chosen such that the free energy at zero temperature
\begin{align}
\sfF^{0[b]}_{\mathsf{rsb}}=\frac{1}{2\lambda} \left[ \int_0^1 \rmF(\omega;\{\mu_\kappa\}) \dif \omega -  \rmF (1;\{\mu_\kappa\}) \right] - \frac{1}{\mu_b} \E \int \log \left[ \int \Lambda_b  {(z_b, z_0,x)} \md z_b \right] \ \md z_0 \label{eq:rsb-24}
\end{align}
is minimized.
\end{proposition}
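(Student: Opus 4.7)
The plan is to specialize the general replica ansatz of Proposition~\ref{proposition:1} to the $b$\ac{rsb} structure of Assumption~\ref{asp:5}, generalizing the 1\ac{rsb} derivation (which presumably appears in Appendix~\ref{app:c}) by induction on the number of breaking steps $b$. The starting point is the expression for the free energy and asymptotic distortion in Proposition~\ref{proposition:1}, into which we substitute the nested block structure $\mQ^{[b]}$ produced by iterating Parisi's breaking scheme of Definition~\ref{def:breaking_scheme} $b$ times starting from an \ac{rs} matrix.

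First, I would diagonalize $\mT\mQ$ under the $b$\ac{rsb} ansatz. Since $\mT$ in \eqref{eq:rep-5} is an affine combination of $\mI_m$ and $\mone_m$, and $\mQ$ in \eqref{eq:rsb-12} is built from $\mI_m$, $\mone_m$, and Kronecker products $\mI_{m\upbeta/\mu_\kappa}\otimes\mone_{\mu_\kappa/\upbeta}$, the product $\mT\mQ$ has a nested block-hierarchical eigenstructure whose distinct eigenvalues are expressible in terms of the cumulative sums $\tilde{\chi}_\kappa$ defined in \eqref{eq:rsb-14}. Using this eigenstructure, the trace terms $\tr{\mT\mQ\rmR_{\mJ}(-2\upbeta\omega\mT\mQ)}$ and $\tr{\mQ^\trp\mT\rmR_{\mJ}(-2\upbeta\mT\mQ)}$ in \eqref{eq:rep-11} split into $b+1$ contributions indexed by the breaking levels, yielding, after the limit $m\downarrow 0$, the function $\rmF(\omega;\{\mu_\kappa\})$ as written in the proposition.

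Second, I would rewrite the Hamiltonian of the spin glass of replicas, $(\bxx-\bvv)^\trp\mT\rmR_{\mJ}(-2\upbeta\mT\mQ)(\bxx-\bvv)+u(\bvv)$, under the $b$\ac{rsb} structure and decouple its quadratic form by $b+1$ nested Hubbard-Stratonovich transformations, introducing independent standard Gaussian variables $z_0,z_1,\ldots,z_b$ with variances $\lams_0,\lams_1,\ldots,\lams_b$ chosen as in \eqref{eq:rsb-16a}-\eqref{eq:rsb-16b}. Because the Parisi structure at level $\kappa$ groups replicas into blocks of size $\mu_\kappa/\upbeta$ within blocks of size $\mu_{\kappa+1}/\upbeta$, etc., the partition function factorizes in a hierarchical manner, and taking the limit $m\downarrow 0$ produces the recursive family of weight functions $\Lambda_\kappa$ defined in \eqref{eq:rsb-18}-\eqref{eq:rsb-19}. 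The asymptotic distortion formula \eqref{eq:rsb-17} then follows from applying the distortion functional $\sfd(\cdot;\cdot)$ replica-wise and identifying the resulting measure as the normalized $\tilde{\Lambda}_\kappa$.

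Third, the saddle-point equation \eqref{eq:rep-6} for $\mQ$, when projected onto each distinct entry of the $b$\ac{rsb} structure (diagonal, plus one type of off-diagonal per breaking level), yields the system \eqref{eq:rsb-20a}-\eqref{eq:rsb-20c}; the variational conditions with respect to the Parisi parameters $\{\mu_\kappa\}$ are obtained by differentiating the zero-temperature free energy, giving \eqref{eq:rsb-20.1}. Among the (possibly many) solutions of these coupled equations, the physically admissible ansatz is selected by minimizing $\sfF^{0[b]}_{\mathsf{rsb}}$ in \eqref{eq:rsb-24}, consistent with the selection rule stated in Proposition~\ref{proposition:1}.

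The main obstacle I anticipate is the bookkeeping for the nested Parisi structure: correctly diagonalizing $\mT\mQ$ and tracking how the eigenvalues $\tilde{\chi}_\kappa$ appear in the $\rmR_{\mJ}$-transform through the telescoping integrals $\int_{\tilde{\chi}_{\kappa-1}\omega}^{\tilde{\chi}_\kappa\omega}\rmR_{\mJ}(-t/\lambda)\dif t$, and carrying the limit $m\downarrow 0$ through the iterated Gaussian integrals without disturbing the recursion \eqref{eq:rsb-19}. A clean way to handle this is to prove the statement by induction on $b$: the base case $b=1$ coincides with Proposition~\ref{proposition:5}, and the induction step amounts to applying one additional breaking to the ``inner'' \ac{rs}-like block, which by the self-similarity of Parisi's scheme reproduces the same functional form with one extra layer of nested Gaussian averaging and one extra cumulative variance $\lams_b$. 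The constraint $\mu_{\kappa+1}/\mu_\kappa\in\setZ^+$ in \eqref{eq:rsb-13} is exactly what guarantees that the recursion is well-defined at each induction step before the $m\downarrow 0$ analytic continuation is invoked.
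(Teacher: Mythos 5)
Your first three paragraphs follow essentially the same route as the paper's Appendix~\ref{app:d}: diagonalize $\mT\mQ$ under the nested Parisi block structure (which, as Appendix~\ref{app:e} shows, has the same hierarchical form as $\mQ$); obtain the $b+2$ distinct eigenvalues of $\mR=\mT\rmR_{\mJ}(-2\upbeta\mT\mQ)$ and hence the coefficients $e$, $\{g_\kappa\}$, $f$; linearize the resulting quadratic Hamiltonian via Gaussian integrals applied block-by-block; and read off the saddle-point equations and the free energy by projecting \eqref{eq:rep-6} onto each distinct entry type. One small imprecision: the Gaussian variables $z_\kappa$ are introduced with \emph{unit} variance (they are the $\md z_\kappa$ in the Gaussian averaging), while the $\lams_\kappa$ enter as squared coupling constants in the linearized exponent; saying ``standard Gaussian variables with variances $\lams_0,\ldots,\lams_b$'' conflates the two, and the dependence between the $z_\kappa$'s in the decoupled system emerges only from the $\tilde{\Lambda}_\kappa$ reweighting, not from the Hubbard--Stratonovich step itself, as you correctly note afterwards. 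A second point worth mentioning: at breaking level $\kappa$ the linearization introduces $m\upbeta/\mu_\kappa$ \emph{separate} Gaussian variables, one per block, not a single $z_\kappa$; the collapse to a single effective variable per level happens only after the $m\downarrow0$ limit and the Riesz identity.

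Your proposed inductive organization is a genuine alternative to the paper's direct computation, but it needs more care than ``apply one more breaking to the inner block''. The obstacle is that $\rmR_{\mJ}$ acts globally on the full hierarchical spectrum: from \eqref{eq:rsb-16a}, $\lams_0$ depends on $\tilde{\chi}_b$ and $q$, so passing from $b$ to $b+1$ layers \emph{changes} the variance of the outermost impairment $z_0$ (through $\tilde{\chi}_b\to\tilde{\chi}_{b+1}$) rather than merely appending a new term. Likewise, the $\mu$-optimization \eqref{eq:rsb-20.1} is a joint minimization over the full vector $\{\mu_\kappa\}$ subject to the divisibility constraint in $\setS_{\bmu}$, and it is not obvious that the optimizer at step $b+1$ restricts to the optimizer at step $b$ on the first $b$ coordinates. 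So the self-similarity of Parisi's scheme gives a clean recursion for the \emph{eigenvalue bookkeeping} and the nested $\Lambda_\kappa$ factors (which is precisely the part the paper handles by direct telescoping), but it does not trivially give a recursion for the full set of saddle-point parameters. The paper avoids this issue entirely by doing the spectral computation once for general $b$, which is why the direct route is the one it takes.
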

\begin{proof}
See Appendix \ref{app:d}.
\end{proof}

One can simply observe that Proposition \ref{proposition:7} reduces to Propositions \ref{proposition:5} and \ref{proposition:3} by letting $b=1$ and $p_\kappa=0$ for $\kappa\in[1:b]$, respectively. The ansatz, moreover, extends the corresponding decoupled single-user system of the estimator considering the general decoupling principle investigated in Proposition \ref{proposition:2}. By taking the same steps as in Proposition \ref{proposition:6}, the decoupled $b$\ac{rsb} single-user system is found which represents the extended version of the 1\ac{rsb} system with $b$ additive impairment taps. In fact, considering the impairment terms to intuitively play the role of correction factors, the $b$\ac{rsb} ansatz takes more steps of correction into account. The decoupled \ac{map} estimator, moreover, remains unchanged.

\begin{figure}[t]
\begin{center}
\begin{tikzpicture}[auto, node distance=2.5cm,>=latex']
    \node [input, name=input] {};
    \node [margin, right of=input, node distance=3.6cm,minimum width=15em, fill=yellow!20] (margin) {};
    \node [sum, right of=input,node distance=1.9cm, fill=blue!20] (sum0) {$+$};
    \node [sum, above of=sum0, node distance=.8cm,fill=blue!20] (times0) {$\times$};
    \node [input, right of=times0, node distance=.9cm] (cof0) {};
    \node [input, right of=sum0,node distance=1.2cm] (inter1) {};
    \node [input, right of=inter1,node distance=1cm] (inter2) {};
	
	\node [sum, right of=inter2,node distance=1.2cm,fill=blue!20] (sumb) {$+$};
    \node [sum, above of=sumb, node distance=.8cm,fill=blue!20] (timesb) {$\times$};
    \node [input, right of=timesb, node distance=.9cm] (cofb) {};

    \node [sum, right of=sumb, node distance=1.8cm, fill=blue!20] (sum1) {$+$};
    \node [sum, above of=sum1, node distance=.8cm,fill=blue!20] (times1) {$\times$};
    \node [input, right of=times1, node distance=.7cm,fill=red!20] (cof1) {};
    
    \node [block, above of=times0, node distance=1.2cm,minimum height=1.7em, minimum width=8em,fill=red!20] (cond) { $\rmp_{z_1|z_2,\ldots,z_b,z_0,\xx}$ };
     \node [block, above of=timesb, node distance=1.2cm,minimum height=1.7em, minimum width=5em,fill=red!20] (condb) { $\rmp_{z_b|z_0,\xx}$ };
    \node [input, above of=times1, node distance=.8cm] (noise1) {};    
    \node [block, right of=sum1,fill=red!20] (estimator) { $\mathrm{g}_{\mathsf{map}}[(\cdot);\lams,u]$ };
    \node [output, right of=estimator, node distance=1.7cm] (output) {};

    \draw [draw,->] (input) -- node[pos=.1] {$\xx$} (sum0) ;
    \draw [->] (sumb) -- node { } (sum1);
    \draw [->] (sum1) -- node[pos=.8] {$y$} (estimator);
    \draw [->] (estimator) -- node[pos=.9] [name=x_h] {$\hxx$}(output);
	\draw [draw,->] (times0) -- node[pos=.1] {}(sum0);
	\draw [draw,->] (times1) -- node[pos=.1] {}(sum1);
	\draw [draw,->] (timesb) -- node[pos=.1] {}(sumb);
	\draw [-,dotted] (inter1) -- node { } (inter2);
	\draw [draw,->] (cof0) -- node[pos=.01,above] {\small $\sqrt{\lams_1}$}(times0);
	\draw [draw,->] (cofb) -- node[pos=.01,above] {\small $\sqrt{\lams_b}$}(timesb);
	\draw [draw,->] (cof1) -- node[pos=.01,above] {\small $\sqrt{\lams_0}$}(times1);
	
	\draw [draw,->] (cond) -- node[pos=.3] {$z_1$}(times0);
	\draw [draw,->] (condb) -- node[pos=.3] {$z_b$}(timesb);
	\draw [draw,->] (noise1) -- node[pos=.1] {$z_0$}(times1);	
\end{tikzpicture}
\end{center}
	\caption{The decoupled scalar system under the \ac{brsb} ansatz.}
	\label{fig:3}
\end{figure}

\begin{definition}[$b$\ac{rsb} decoupled system]
\label{def:brsb_single_user}
Define the $b$\ac{rsb} decoupled system as a single-user system illustrated in Fig. \ref{fig:3} in which
\begin{itemize}
\item the source symbol $\xx$ is distributed with $\rmp_{\xx}$ over the support $\setX$,
\item $z_0$ is a zero-mean and unit-variance Gaussian random variable,
\item $z_\kappa$ is a random variable correlated with $\xx$, $z_0$ and $\{ z_{\kappa+1}, \ldots, z_b \}$
\item $\xx$ and $z_0$ are independent, and 
\begin{align}
\rmp_{z_\kappa|z_{\kappa+1}, \ldots, z_b,z_0, \xx}=\tilde{\Lambda}_\kappa  {(z_\kappa| \{z_\varsigma\}_{\kappa+1}^b, z_0,x)} \phi(z_\kappa) \label{eq:rsb-25}
\end{align}
with $\tilde{\Lambda}$ defined in Proposition \ref{proposition:7}.
\item $\hxx$ is estimated from the observation $y \coloneqq \xx+ \sum\limits_{\kappa=0}^b \sqrt{\lams_\kappa} z_\kappa$ as
\begin{align}
\hxx=\mathrm{g}_{\mathsf{map}}[(y);\lams,u]. \label{eq:rsb-26}
\end{align}
\item $\mathrm{g}_{\mathsf{map}}[(\cdot);\lams,u]$ is the decoupled \ac{map} estimator with the estimation parameter $\lams$ and the utility function $u(\cdot)$ as defined in Definition \ref{def:single_user_estimator}, and
\item $\lams_0$, $\{\lams_\kappa\}$ and $\lams$ for $\kappa\in[1:b]$ are as in Proposition \ref{proposition:7}.
\end{itemize} 
\end{definition}

\begin{proposition}[$b$\ac{rsb} Decoupling Principle]
\label{proposition:8}
Let the linear system \eqref{eq:sys-1} fulfill the constraints of Section \ref{sec:problem_formulation} and be estimated via the \ac{map} estimator in \eqref{eq:int-2}. Consider the $b$\ac{rsb} decoupled system as defined in Definition \ref{def:brsb_single_user}, and suppose Assumptions \ref{asp:1}, \ref{asp:2} and \ref{asp:5} hold. Then, for all $j\in [1:n]$, the tuple $(\hx_j,x_j)$ converges in distribution to $(\hxx,\xx)$  if $\rmp_x=\rmp_\xx$.
\end{proposition}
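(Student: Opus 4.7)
The plan is to mirror the strategy used in the proof of Proposition \ref{proposition:4} (the \ac{rs} decoupling principle), replacing Proposition \ref{proposition:3} with its $b$\ac{rsb} counterpart, Proposition \ref{proposition:7}. First, I would invoke the general decoupling principle of Proposition \ref{proposition:2}, which holds without any structural restriction on the replica correlation matrix, and conclude that the asymptotic conditional distribution $\rmp^{j}_{\hx|x}$ is almost surely independent of the index $j$. Combined with the \ac{iid} assumption on the source vector, this reduces the claim to showing that the asymptotic joint law of any representative tuple $(\hx_j, x_j)$ coincides with the joint law of $(\hxx,\xx)$ induced by the $b$\ac{rsb} decoupled system of Definition \ref{def:brsb_single_user}.

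To identify the two joint laws, I would proceed via the moment method. Specifically, for non-negative integers $k$ and $\ell$, I would select the distortion function $\sfd(\hx;x) = \hx^k x^\ell$ and the index set $\setW(n) = [1:n]$. Substituting these choices into Proposition \ref{proposition:7}, the asymptotic $(k,\ell)$-joint moment of $(\hx_j, x_j)$ under Assumption \ref{asp:5} is expressed as
\begin{align}
\sfM^j_{k,\ell} = \E \int \rmg^k x^\ell \prod_{\kappa=1}^b \tilde{\Lambda}_\kappa (z_\kappa | \{z_\varsigma\}_{\kappa+1}^b, z_0, x) \ \md z_\kappa \md z_0,
\end{align}
where $\rmg$ is the decoupled minimizer defined in \eqref{eq:rsb-15}. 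On the other hand, consulting Definition \ref{def:brsb_single_user}, the joint distribution of $(\xx, z_0, z_1, \ldots, z_b)$ has density proportional to $\rmp_\xx(x)\phi(z_0) \prod_{\kappa=1}^b \tilde{\Lambda}_\kappa(z_\kappa|\{z_\varsigma\}_{\kappa+1}^b,z_0,x) \phi(z_\kappa)$, and since $\hxx = \mathrm{g}_{\mathsf{map}}[y;\lams,u]$ coincides with $\rmg$ under $\rmp_\xx = \rmp_x$, direct evaluation gives $\E \hxx^k \xx^\ell$ as exactly the same expression as $\sfM^j_{k,\ell}$.

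With the two moment sequences matched, the final step is to invoke the classical Hamburger/Stieltjes moment problem, exactly as in the proof of Proposition \ref{proposition:4}: provided the sequence $\{\sfM^j_{k,\ell}\}_{(k,\ell)\in\setZ^+\times\setZ^+}$ is uniformly bounded, it uniquely determines the underlying joint distribution, so equality of moments implies equality in distribution. Uniform boundedness follows from the implicit assumption that the estimation problem is well-posed on $\setX^n$ together with the exponential tail control inherent in the Gibbs weights $\tilde{\Lambda}_\kappa$; this closes the argument.

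The step that requires the most care is the consistency check between the replica formula \eqref{eq:rsb-17} and the joint density postulated in \eqref{eq:rsb-25}, since the nested conditional densities $\tilde{\Lambda}_\kappa$ are defined through a recursion \eqref{eq:rsb-19} that couples all hierarchical levels. The main obstacle is verifying that these conditional densities consistently combine via Bayes' rule into a bona fide joint distribution that reproduces the replica moments for every $(k,\ell)$; once this identification of measures is made explicit, the remainder of the argument is a transcription of the \ac{rs} proof, with no further structural obstacles.
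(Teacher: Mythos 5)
Your proposal is correct and follows essentially the same route as the paper: invoke the general decoupling principle (Proposition \ref{proposition:2}) to reduce to a single index, compute joint moments via the monomial distortion $\sfd(\hx;x)=\hx^k x^\ell$ substituted into the $b$\ac{rsb} ansatz of Proposition \ref{proposition:7}, match them against the moments of the decoupled system of Definition \ref{def:brsb_single_user}, and conclude by the classical moment problem — which is exactly the chain the paper compresses into "same steps as for Proposition \ref{proposition:4}." The consistency concern you flag at the end is, in fact, vacuous here: the conditional densities in Definition \ref{def:brsb_single_user} are \emph{defined} to be $\tilde{\Lambda}_\kappa\phi$ precisely so that chaining them reproduces the integrand $\prod_\kappa \tilde{\Lambda}_\kappa \,\md z_\kappa \md z_0$ of \eqref{eq:rsb-17}, so the two measures coincide by construction rather than by a separate verification.
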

\begin{proof}
Using Proposition \ref{proposition:5}, it takes same steps as for Proposition \ref{proposition:4}.
\end{proof}

\subsection*{RSB Zero Temperature}
In Appendix \ref{app:d}, it is shown that under the $b$\ac{rsb} assumption on the replica correlation matrix the free energy of the corresponding spin glass at the inverse temperature $\upbeta$ reads 
\begin{align}
\sfF(\upbeta)=\frac{1}{2\lambda} \left[ \int_0^1 \rmF^{\upbeta}(\omega) \dif \omega -  \rmF^{\upbeta} (1) \right] +\sfF^{\sfR}(\upbeta). \label{eq:rsb-27}
\end{align}
Here, $\sfF^{\sfR}(\upbeta)$ denotes the normalized free energy of the spin glass of replicas defined in \eqref{eq:rep-9} in the limit $m\downarrow0$, and the function $\rmF^{\upbeta}(\cdot)$ is defined as
\begin{align}
\rmF^{\upbeta}(\omega) = \sum\limits_{\kappa=1}^b \frac{1}{\mu_\kappa} \frac{\dif}{\dif \omega} \int_{\tilde{\chi}_{\kappa-1} \omega}^{\tilde{\chi}_{\kappa} \omega} \rmR_{\mJ}(-\frac{t}{\lambda} ) \dif t + \frac{\chi}{\upbeta} \rmR_{\mJ}(-\frac{\chi}{\lambda} \omega) + \left[q- \frac{\lambda_0}{\lambda} \tilde{\chi}_b \right] \frac{\dif}{\dif \omega} \left[ \omega \rmR_{\mJ}(-\frac{\tilde{\chi}_b}{\lambda} \omega) \right].  \label{eq:rsb-28}
\end{align}
Following the discussion in Section \ref{subsec:cons_test}, the entropy at the zero temperature reads
\begin{align}
\rmH^{0[b]}_{\mathsf{rsb}}=\lim_{\upbeta\uparrow\infty} \frac{\upbeta^2}{2\lambda} \frac{\partial}{\partial \upbeta} \left[ \int_0^1 \rmF^{\upbeta}(\omega) \dif \omega -  \rmF^{\upbeta} (1) \right] \label{eq:rs-29}
\end{align}
which reduces to
\begin{align}
\rmH^{0[b]}_{\mathsf{rsb}}= \frac{\chi}{2\lambda}\left[ \rmR_{\mJ}(-\frac{\chi}{\lambda}) - \int_0^1 \rmR_{\mJ}(-\frac{\chi}{\lambda} \omega) \dif \omega  \right]. \label{eq:rsb-30}
\end{align}
\eqref{eq:rsb-30} justifies the conjecture in \cite{zaidel2012vector} and states that the zero temperature entropy under any number of breaking steps, including the \ac{rs} case, is of the similar form and only depends on the scalar $\chi$. In fact, the Hamiltonian in \eqref{eq:int-7} reduces to the one considered in vector precoding by considering $\bx$ to be the deterministic vector of zeros, $\lambda_0=0$, $\lambda=1$ and $u(\bv)=0$. Substituting in \eqref{eq:rsb-30}, the zero temperature entropy reduces to the one determined in \cite{zaidel2012vector} within a factor of $2$. The factor comes from the difference in the prior assumption on the support~of~microstate\footnote{In \cite{zaidel2012vector}, the authors considered $\bv$ to be a complex vector.}.

\section{Replica Simulator: Characterization via the Single-User Representation}
\label{sec:rep_sim}
The general $b$\ac{rsb} decoupling principle determines an equivalent single-user system which describes the input-output statistics of the \ac{map} estimator under the $b$\ac{rsb} ansatz. In order to specify the exact parameters of the decoupled single-user system, the set of fixed point equations needs to be solved explicitly. In this section, we propose an alternative approach which describes an ansatz in terms of the corresponding decoupled system's input-output statistics. We define the exact form of the decoupled system as the ``steady state'' of a transition system named ``replica simulator''. The proposed approach enables us to investigate the properties of the \ac{rs} and \ac{rsb} ans\"atze by studying the replica simulator. In order to clarify the idea of the replica simulator, let us define a set of input-output statistics regarding the $b$\ac{rsb} decoupled system.

\begin{definition}
\label{def:single_user_parameters}
Consider the single-user system consistent with the block diagram in Fig. \ref{fig:3}. Denote the joint distribution of the source and impairment terms with $\rmp_{\xx,z_0,\ldots,z_b}$. For this system,
\begin{itemize}
\item the $\kappa$th noise-error correlation is defined as
\begin{align}
\sfC_\kappa = \frac{1}{\sqrt{\lams_\kappa}} \E (\hxx-\xx) z_\kappa \label{eq:sim-1}
\end{align}
for $\kappa\in[0:b]$, and
\item the \ac{mse} is denoted by
\begin{align}
\mse = \E (\hxx-\xx)^2. \label{eq:sim-2}
\end{align}
\end{itemize}
\end{definition}

Invoking Definition \ref{def:single_user_parameters}, the $b$\ac{rsb} ansatz can be completely represented in terms of the input-output statistics of the decoupled system. In fact, by means of Definition \ref{def:single_user_parameters}, the fixed point equations in \eqref{eq:rsb-20a}-\eqref{eq:rsb-20c} can be expressed as
\begin{subequations}
\begin{align}
\sum_{\kappa=1}^b p_\kappa + q &= \mse,\label{eq:sim-3} \\
\tilde{\chi}_{\kappa-1}+\mu_\kappa \left(\sum_{\varsigma=\kappa}^b p_\varsigma + q \right) &=\lams \sfC_\kappa, \label{eq:sim-4} \\
\tilde{\chi}_{b} &= \lams \sfC_0, \label{eq:sim-5}
\end{align}
\end{subequations}
for $\kappa\in[1:b]$; moreover, the factor $\Lambda_1$ is given as
\begin{align}
\Lambda_{1} {(\{z_\varsigma\}_{1}^b, z_0,\xx)} =  e^{-\mu_1  \left[  \tfrac{1}{2 \lams}(y -\hxx)^2 - \tfrac{1}{2\lams} (y-\xx)^2 + u(\hxx) \right]}, \label{eq:sim-6}
\end{align}
which reduces to
\begin{align}
\Lambda_{1}  {(\{z_\varsigma\}_{1}^b, z_0,\xx)} = \rmp_\xx(\xx)^{\mu_1} \left[ \frac{\poster{\hxx}{y}}{\poster{\xx}{y}} \right]^{\mu_1} \label{eq:sim-6.1}
\end{align}
with $\poster{\cdot}{y}$ indicating the decoupled posterior distribution defined in Definition \ref{def:single_user_estimator}. The second term on the right hand side of \eqref{eq:sim-6.1} is an extended form of the likelihood ratio. By defining
\begin{align}
\Gamma_{1}  {(\{z_\varsigma\}_{1}^b, z_0,\xx)} =\left[ \frac{\poster{\hxx}{y}}{\poster{\xx}{y}} \right]^{\mu_1}, \label{eq:sim-6.2}
\end{align}
\eqref{eq:sim-6.1} reads
\begin{align}
\Lambda_{1}  {(\{z_\varsigma\}_{1}^b, z_0,\xx)} = \rmp_\xx(\xx)^{\mu_1} \Gamma_{1}  {(\{z_\varsigma\}_{1}^b, z_0,\xx)} , \label{eq:sim-6.3}
\end{align}
and $\Lambda_\kappa {(\{z_\varsigma\}_{\kappa}^b, z_0,\xx)}$ for $\kappa\in[2:b]$ are determined by
\begin{align}
\Lambda_{\kappa}  {(\{z_\varsigma\}_{\kappa}^b, z_0,\xx)} = \rmp_\xx(\xx)^{\mu_\kappa} \Gamma_\kappa  {(\{z_\varsigma\}_{\kappa}^b, z_0,\xx)} \label{eq:sim-6.4}
\end{align}
where $\Gamma_\kappa {(\{z_\varsigma\}_{\kappa}^b, z_0,\xx)}$ are recursively defined as
\begin{align}
\Gamma_{\kappa} {(\{z_\varsigma\}_{\kappa}^b, z_0,\xx)} = \left[ \int \Gamma_{\kappa-1}  {(\{z_\varsigma\}_{\kappa-1}^b, z_0,\xx)} \ \md z_{\kappa-1} \right]^{\tfrac{\mu_{\kappa}}{\mu_{\kappa-1}}}. \label{eq:sim-6.4}
\end{align}
The fixed point in \eqref{eq:rsb-20.1} is therefore rewritten accordingly.

The above alternative representation of the $b$\ac{rsb} ansatz leads us to a new interpretation. In fact, one can define a transition system in which the vector of replica parameters denotes the state, and the decoupled single-user system defines the transition rule \cite{hansen2003algorithms,finkel1998fundamental}. We refer to this transition system as the ``replica simulator'', and define it formally as the following.

\begin{definition}[Replica simulator]
\label{def:rep_sim}
Let $b$ be a non-negative integer. Consider the vector $\bss$ as
\begin{align}
\bss\coloneqq \left[ \chi, \mu_1, \ldots, \mu_b, p_1, \ldots, p_b,q\right] \label{eq:sim-7}
\end{align}
with entries satisfying the corresponding constraints in Proposition \ref{proposition:7}, and denote its support by $\setS_b$. The transition rule $\sfT^\sfR_b: \setS_b \mapsto \setS_b$ maps the prior state $\bss_i\in \setS_b$ to the posterior state $\bss_f\in\setS_b$ in the following way:

\textit{$\sfT^\sfR_b$ realizes the $b$\ac{rsb} decoupled system considering the entries of $\bss_i$ as the replica parameters. It then constructs the entries of $\bss_f$ by determining a new set of replica parameters from the statistics of the decoupled system using the equivalent representation of the fixed point equations given in \eqref{eq:sim-3}-\eqref{eq:sim-6.4}.}

The replica simulator of breaking order $b$ is then defined as the transition system $\rsim[b] \coloneqq\left( \setS_b, \sfT^\sfR_b \right)$.
\end{definition}

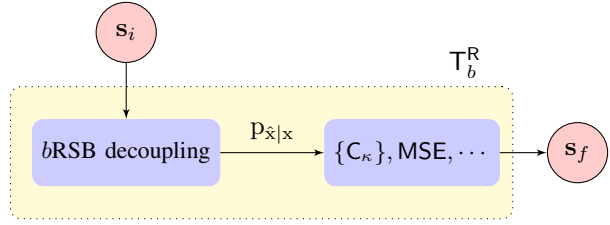
\begin{figure}[t]
\begin{center}
\begin{tikzpicture}[auto,>=latex']
    \node [state,fill=red!20] (s_i) {$\bss_i$};
    \node [input, right of=s_i, node distance=1.8cm] (help) {};
    \node [margin, below of=help, fill=yellow!20, node distance=1.6cm,minimum width=19em, minimum height=5em] (margin) {};
    \node [block, below of=s_i, fill=blue!20, blue!20,node distance=1.6cm] (sys) {\textcolor{black}{\small $b$\ac{rsb} decoupling}};
    \node [input, right of=sys, node distance=2cm] (mid) {};
    \node [block, right of=mid, node distance=1.8cm, fill=blue!20,blue!20] (stat) {\textcolor{black}{\small $\{ \sfC_\kappa \}, \mse, \cdots$}};
    \node [state, right of=stat,  node distance=2.2cm,fill=red!20] (s_f) {$\bss_f$};
    \node [input, right of=help, node distance=2.7cm] (h2) {};
    \node [draw=none,fill=none, below of=h2, node distance=.4cm] {$\sfT^\sfR_b$};

    \draw [draw,->] (s_i) -- node {} (sys) ;
    \draw [draw,->] (sys) -- node {$\rmp_{\hxx|\xx}$} (stat) ;
    \draw [draw,->] (stat) -- node {} (s_f) ;
\end{tikzpicture}
\end{center}
	\caption{Replica Simulator of breaking order $b$}
	\label{fig:4}
\end{figure}

The structure of the replica simulator is illustrated in Fig. \ref{fig:4}. For the replica simulator of breaking order $b$, a sequence of states $\left\lbrace \bss_t \right\rbrace$ is considered to be a ``process'', if for $t \in [1:\infty]$
\begin{align}
\bss_{t} \stackrel{\sfT^\sfR_b}{\longrightarrow} \bss_{t+1}. \label{eq:sim-8}
\end{align}
The state $\bss^\star$ is then called the ``steady state'', if setting $\bss_t=\bss^\star$ results in $\bss_{t+1}=\bss^\star$. Regarding Proposition \ref{proposition:7}, the $b$\ac{rsb} ansatz is in fact the steady state of the replica simulator which minimizes the free energy function. Our conclusion also extends to the \ac{rs} case, if we set $b=0$.

Considering Definition \ref{def:rep_sim}, as well as the above discussions, the $b$\ac{rsb} ansatz can be numerically investigated using the methods developed in the literature of transition systems. This approach may reduce the complexity of numerical analysis; however, it does not impact the computational complexity\footnote{We conjecture that in some cases, our $b$\ac{rsb} decoupled system represents the asymptotic of a decision-feedback system. The validity of this conjecture can further reduce the computation complexity.}. In fact, assuming that one realizes the $b$\ac{rsb} decoupled system for any desired state vector denoted in \eqref{eq:sim-7} via some methods of realization, e.g., Monte Carlo simulation, the $b$\ac{rsb} ansatz can be found by means of an iterative algorithm which has been designed to find the steady state of a transition system. The latter statement can be clarified as in Scheme \ref{scheme:1}.

\begin{algorithm}[h]
\floatname{algorithm}{Scheme}
\algblock[Name]{Start}{End}
\algblockdefx[NAME]{START}{END}%
[2][Unknown]{Start #1(#2)}%
{Ending}
\algblockdefx[NAME]{}{OTHEREND}%
[1]{Until (#1)}
\begin{algorithmic}
\BEGIN
\Set replica simulator of breaking order $b$, $\rsim[b]$
	\If{$b=0$}
	\State $\sfT^\sfR_b$ corresponds to the \ac{rs} decoupled system
    \Else
    \State $\sfT^\sfR_b$ corresponds to the $b$\ac{rsb} decoupled system
    \EndIf

\Initiate initial state $\bss^0 \in \setS_b$
\Evaluate $\bss^0 \stackrel{\sfT^\sfR_b}{\longrightarrow} \bss$
\Comment{\textsf{A}}
	\If{$\bss=\bss^0$}
	\State $\bss^\star \gets \bss$
	\State \textbf{break}
	\Else
	\State \textbf{consider} mapping rule $\sfI\sfM(\cdot,\cdot): \setS_b\times\setS_b \mapsto \setS_b$
	\Comment{\textsf{B}}
	\State $\bss^0 \gets \sfI \sfM(\bss, \bss^0)$
	\State \textbf{return} \textsf{A}
	\EndIf
\Output $\bss^{\star}$
\ENDall
\end{algorithmic}
\caption{Analysis via Replica Simulator}
\label{scheme:1}
\end{algorithm}

In Scheme \ref{scheme:1}, $\sfT^\sfR_b$ in step \textsf{A} can be realized via different methods. One may determine the input-output distribution of the single-user system analytically or simulate the system by generating impairment and source samples numerically via Monte Carlo technique. Another degree of freedom is in step \textsf{B} where different mapping rules with different convergence speeds can be employed. For algorithms designed based on Scheme \ref{scheme:1}, the computational complexity depends on the realization method while the convergence speed is mainly restricted with some given mapping rule $\sfI\sfM(\cdot,\cdot)$.

The replica simulator introduces a systematic approach for investigating the replica an\"atze based on the decoupling principle. Moreover, it gives an intuition about the impact of symmetry breaking. To clarify the latter statement, let us consider an example. 

\textbf{Example.} (\ac{rs} vs. 1\ac{rsb} ansatz) Let $b=0$; thus, the \ac{rs} fixed point equations read
\begin{subequations}
\begin{align}
q&=\mse, \\
\chi &= \lams \sfC_0.
\end{align}
\end{subequations}
The equations under the 1\ac{rsb} assumption are moreover given by
\begin{subequations}
\begin{align}
q+p&=\mse, \\
\chi + \mu p &= \lams \sfC_0, \\
\chi + \mu p+ \mu q &= \lams \sfC_1, \label{eq:fix2}
\end{align}
\end{subequations}
and $\mu$ is determined through the fixed point equation
\begin{align}
\frac{\mu}{2\lams}\left[\mu\frac{\lams_1}{\lams}q+p\right]-\frac{1}{2\lambda}\int_0^{\mu p} \mathrm{R}_{\mJ}(-\frac{\chi+\omega}{\lambda}) \dif \omega &= \rmI(z_1;z_0,\xx ) + \sfD_{\sfK \sfL} ( \rmp_{z_1} \Vert \phi ) \label{eq:fix1}
\end{align}
where $\rmI(\cdot;\cdot )$ and $\sfD_{\sfK \sfL} ( \cdot \parallel \cdot )$ denote the mutual information and Kullback-Leibler divergence, respectively. Assuming the system matrix to be \ac{iid} and setting $z_1$ to be independent of $z_0$ and $\xx$, the right hand side of \eqref{eq:fix1} tends to zero, and therefore, the solutions $\mu=0$ and $p=0$ are concluded. Consequently, \eqref{eq:fix2} becomes ineffective, and the fixed point equations reduce to \ac{rs}. The latter observation can be interpreted in terms of the ``state evolution'' of the replica simulator. More precisely, assume that the initial state of the replica simulator with breaking order one is chosen such that in the corresponding decoupled system, $z_1$ is sufficiently correlated with the source and noise symbols. In this case, by assuming the mapping rule $\mathsf{IM}(\cdot,\cdot)$ to be converging, the correlation in each iteration of Scheme \ref{scheme:1} reduces, and thus, at the steady state, $z_1$ becomes independent of $z_0$ and $\xx$.

The above discussion can be extended to replica simulators with larger breaking orders. Moreover, further properties of the \ac{rsb} ans\"atze could be studied using methods developed in the literature of transition systems\footnote{The concept of replica simulator may further clarify the connection between \ac{rsb} ans\"atze and message passing based algorithms. Such investigations however are skipped here and can be considered as a possible future work.}. We leave the further investigations as a possible future work.

\section{Large Compressive Sensing Systems}
\label{sec:cs}
Considering the setting represented in Section \ref{sec:problem_formulation}, a large compressive sensing system can be studied through our results by restricting the source's \ac{cdf} to be of the form
\begin{align}
\rmF_x(x)=(1-\alpha)\mone\left\lbrace x\geq 0 \right\rbrace+\alpha \breve{\rmF}_x(x). \label{eq:cs-1}
\end{align}
In the large limit, the source vector distributed as \eqref{eq:cs-1} has $(1-\alpha)n$ entries equal to zero while the remaining $\alpha n$ entries are distributed with $\breve{\rmF}_x$. In this case, $\bx$ is an $\alpha n$-sparse vector, and thus, \eqref{eq:sys-1} is considered to represent a large compressive sensing system with the sensing matrix $\mA$.

Considering the prior as in \eqref{eq:cs-1}, different recovery schemes are then investigated by restricting the prior setup of the system, correspondingly. In this section, we study the asymptotics of several recovery schemes using our $b$\ac{rsb} decoupling principle for both the cases of continuous and finite alphabet sources.

\subsection{Continuous Sources}
Assuming $\setX=\setR$, \eqref{eq:cs-1} describes a continuous random variable multiplied by an $\alpha$-Bernoulli random variable. In this case, by varying the utility function $u(\cdot)$, different reconstruction schemes are considered. Here, we address the linear, LASSO and $\ell_0$ norm recovery schemes. The results can however be employed to investigate a general $\ell_p$ norm recovery scheme \cite{zheng2017does}.\\

\exmpl{ex:1}
(linear recovery scheme) The \ac{map} estimation is reduced to the linear recovery scheme when the utility function is set to be
\begin{align}
u(v)=\frac{v^2}{2}. \label{eq:cs-2}
\end{align}
In fact, in this case, the \ac{map} estimator postulates the prior distribution to be a zero-mean and unit-variance Gaussian and performs considerably inefficient when the source is sparse. Using the $b$\ac{rsb} decoupling principle, we conclude that in the large-system limit the source entry $x_j$ and the estimated entry $\hx_j$, for any $j\in[1:n]$, converge in probability to a sparse random variable $\xx$ distributed as in \eqref{eq:cs-1} and the estimated symbol $\hxx\coloneqq\mathrm{g}_{\mathsf{map}}[(y);\lams,u]$ where the decoupled system reduces to
\begin{align}
\mathrm{g}_{\mathsf{map}}[(y);\lams,u]= \frac{y}{1+\lams}, \label{eq:cs-3}
\end{align}
with $y$ being given by
\begin{align}
y=\xx+\sum_{\kappa=0}^b \sqrt{\lams_\kappa} z_\kappa, \label{eq:cs-4}
\end{align}
and the scalars $\lams$ and $\left\lbrace \lams_\kappa \right\rbrace$ for $\kappa\in [0:b]$ are defined as in Proposition \ref{proposition:7}. By letting $b=0$, the result reduces to the \ac{rs} decoupling principle reported in the literature, see \cite{rangan2012asymptotic,tulino2013support}; however, the result here holds for a wider set of sensing matrices and source distributions.\\

\exmpl{ex:2}
(LASSO recovery scheme) To study the LASSO recovery scheme, we set 
\begin{align}
u(v)=\abs{v}. \label{eq:cs-5}
\end{align}
Regarding the $b$\ac{rsb} decoupling principle, the prior distribution of the decoupled system's input $\xx$ is postulated to be ``Laplacian'' or ``double exponential'' with unit variance. This postulation results in a better performance of the recovery scheme in many cases, since the Laplacian \ac{pdf} could be a more realistic approximation of the sparse source distribution. Consequently, the decoupled system's output is found as $\hxx=\mathrm{g}_{\mathsf{map}}[(y);\lams,u]$ with
\begin{align}
\mathrm{g}_{\mathsf{map}}[(y);\lams,u]= [y-\lams \ \mathsf{sgn}(y)] \ w_{\lams}(y) \label{eq:cs-6}
\end{align}
where $y$ is denoted as in \eqref{eq:cs-4}, $w_{\lams}(\cdot)$ is the null window function with window width $\lams$ defined as
\begin{align}
w_{\lams}(y) = \left\{ \begin{array}{rl}
1 &\mbox{ $\abs{y} > \lams$} \\
0 &\mbox{ $\abs{y} \leq \lams$}
\end{array} \right. , \label{eq:cs-7}
\end{align}
and $\mathsf{sgn}(y)$ is the sign indicator. The decoupled single-user estimator in \eqref{eq:cs-7} which is often referred to as the soft-thresholding operator recovers the earlier \ac{rs} results by setting $b=0$ and the sensing matrix to be \ac{iid} \cite{rangan2012asymptotic}.\\

\exmpl{ex:3}
($\ell_0$ norm recovery scheme) The $\ell_0$ norm recovery scheme which considers
\begin{align}
u(v)= \mone \left\lbrace v\neq 0\right\rbrace \label{eq:cs-8}
\end{align}
can perform significantly better that the latter schemes when the sparsity increases. In this case, the prior distribution of the $b$\ac{rsb} decoupled system's input $\xx$ is found as the limit of
\begin{align}
\rmp_{\xx}^{\uptheta,\updelta}(\xx) = \frac{1}{2\uptheta + 2 \updelta\left(e-1\right) }\left\{ \begin{array}{rl}
e &\mbox{ $\phantom{0 \leq}\hspace*{1mm} \abs{\xx} \leq \updelta$} \\
1 &\mbox{ $\updelta <\abs{\xx} \leq \uptheta$}
\end{array} \right. , \label{eq:cs-9}
\end{align}
when $\uptheta \uparrow \infty$ and $\updelta\downarrow 0$. For finite values of $\uptheta$ and $\updelta$, $\rmp_{\xx}^{\uptheta,\updelta}$ can be considered as a sparse distribution in which non-zero symbols occur uniformly. The prior explains the better performance of the $\ell_0$ norm recovery scheme compared to the linear and LASSO schemes. For this case, the output of the decoupled single-user system reads $\hxx=\mathrm{g}_{\mathsf{map}}[(y);\lams,u]$, such that
\begin{align}
\mathrm{g}_{\mathsf{map}}[(y);\lams,u]= y w_{\vartheta}(y) \label{eq:cs-10}
\end{align}
where $y$ is denoted as in \eqref{eq:cs-4}, and $w_{\vartheta}(\cdot)$ is the null window function with $\vartheta=\sqrt{2 \lams}$. Here, $\mathrm{g}_{\mathsf{map}}[(\cdot);\lams,u]$ is the hard-thresholding operator and recovers the analysis in \cite{rangan2012asymptotic} for a wider class of settings.

The above examples have been also studied in earlier replica based studies, e.g., \cite{rangan2012asymptotic}, \cite{vehkapera2014analysis}. The given results can be analytically derived from the above expressions by considering the \ac{rs} ansatz and properly substituting the corresponding $\rmR$-transforms. We address the results of two important cases reported in the literature in following.

\subsubsection*{Special Case 1}
In \cite{rangan2012asymptotic}, the authors addressed the case of which an \ac{iid} sparse source is sampled by an \ac{iid} sensing matrix where the matrix entries are zero-mean random variables with the variance vanishing proportional to $k^{-1}$. The asymptotic performance of the estimator was then addressed when the linear, LASSO, and $\ell_0$ norm recovery schemes are employed using the \ac{rs} \ac{map} decoupling principle. The results reported in \cite{rangan2012asymptotic} can be derived directly by setting the $\mathrm{R}$-transform in Proposition \ref{proposition:4} to be 
\begin{align}
\mathrm{R}_{\mJ} (\omega) = \frac{1}{1-\sfr \omega}.
\end{align}

\subsubsection*{Special Case 2}
The results of \cite{rangan2012asymptotic} extended in \cite{vehkapera2014analysis} to a larger set of sensing matrices, and the \ac{rs} prediction of the asymptotic \ac{mse} was determined for sparse Gaussian sources. The given results can be recovered by considering the distortion function
\begin{align}
\sfd(\bhx;\bx)=\norm{\bhx-\bx}^2,
\end{align}
and the source distribution to be \eqref{eq:cs-1} with $\breve{\rmF}_x$ representing the zero-mean and unit-variance Gaussian \ac{cdf}.

\subsection{Finite Alphabet Sources}
Our result can be further employed to study the sampling problem of finite alphabet sources. Considering
\begin{align}
\setX = \left\lbrace 0, t_1, \ldots, t_{\ell-1} \right\rbrace, \label{eq:cs-11}
\end{align}
in which the symbol $0$ occurs with probability $1-\alpha$, and other $\ell-1$ outcomes are distributed due to $\rmp_t$. Consequently, the source distribution reads
\begin{align}
\rmp_x(x)=\left(1-\alpha\right) \mone \{ x=0\} + \alpha  \mone \{ x\neq 0\} \rmp_t(x) \label{eq:cs-12}
\end{align}
which can be interpreted as the multiplication of the non-zero discrete random variable $t$ distributed with $\rmp_t$ and an $\alpha$-Bernoulli random variable. For sake of brevity, we denote the sorted version of the symbols in the support $\setX$ by $\rmc_1, \ldots, \rmc_\ell$ in which
\begin{align}
-\infty < \rmc_1 < \rmc_2 < \ldots < \rmc_\ell < + \infty. \label{eq:cs-13}
\end{align}
For notational compactness, we further define $\rmc_0$ and $\rmc_{\ell+1}$ to be $-\infty$ and $+\infty$, respectively. Similar to the continuous case, different choices of the utility function $u(\cdot)$ address different types of reconstruction schemes which we investigate in the sequel.\\

\exmpl{ex:4}
(linear recovery scheme) We consider the case in which the finite alphabet source is reconstructed via the linear recovery scheme as introduced in Example \ref{ex:1}. Using the $b$\ac{rsb} decoupling principle, the source and estimated symbols $(x_j, \hx_j)$ converge to the random variables $\xx$ and $\hxx$ for all $j\in[1:n]$ where $\xx$ is distributed with $\rmp_x$ defined in \eqref{eq:cs-12}, and $\hxx$ is found as $\hxx=\mathrm{g}_{\mathsf{map}}[(y);\lams,u]$ with
\begin{align}
\mathrm{g}_{\mathsf{map}}[(y);\lams,u]= \rmc_k \qquad \text{if} \qquad  y \in \left( v^{\ell_2}_k,v^{\ell_2}_{k+1} \right] \label{eq:cs-14}
\end{align}
for $k \in [1:\ell]$. The scalar $y$ indicates the observation symbol in the equivalent decoupled system, and the boundary point $v^{\ell_2}_k$ is defined as
\begin{align}
v^{\ell_2}_k \coloneqq \frac{ 1+\lams }{2} \left( \rmc_{k-1}+\rmc_k \right). \label{eq:cs-15} \\ \nonumber
\end{align}

\exmpl{ex:5}
(LASSO recovery scheme)  Replacing the reconstruction scheme in Example \ref{ex:4} with LASSO, the single-user estimator of the $b$\ac{rsb} decoupled system is of the form
\begin{align}
\mathrm{g}_{\mathsf{map}}[(y);\lams,u]= \rmc_k \qquad \text{if} \qquad  y \in \left( v^{\ell_1}_k,v^{\ell_1}_{k+1} \right] \label{eq:cs-16}
\end{align}
for $k \in [1:\ell]$ where the boundary point $v^{\ell_l}_k$ reads
\begin{align}
v^{\ell_1}_k \coloneqq \frac{1}{2} \left( \rmc_{k-1}+\rmc_k \right) + \lams \frac{\abs{\rmc_k}-\abs{\rmc_{k-1}}}{\rmc_k-\rmc_{k-1}}. \label{eq:cs-17} \\ \nonumber
\end{align}

\exmpl{ex:6}
($\ell_0$ norm recovery scheme)  For finite alphabet sources, the $\ell_0$ norm recovery scheme is optimal in terms of symbol error rate, since it realizes the sparse uniform distribution. In fact, for the case of which the non-zero symbols of the source are uniformly distributed, the $\ell_0$ norm utility function exactly models the source's true prior, and therefore, can be considered as the optimal scheme. Under the $\ell_0$ norm recovery scheme, the $b$\ac{rsb} decoupled system reduces to
\begin{align}
\mathrm{g}_{\mathsf{map}}[(y);\lams,u]= \rmc_k \qquad \text{if} \qquad  y \in \left( v^{\ell_0}_k,v^{\ell_0}_{k+1} \right] \label{eq:cs-18}
\end{align}
with the boundary point $v^{\ell_0}_k$ being defined as
\begin{align}
v^{\ell_0}_k \coloneqq \frac{1}{2} \left( \rmc_{k-1}+\rmc_k \right) + \lams \frac{\mone\{\rmc_k\neq 0\}-\mone\{\rmc_{k-1}\neq 0\}}{\rmc_k-\rmc_{k-1}}. \label{eq:cs-19}
\end{align}
for $k \in [1:\ell]$.

\section{Numerical Results for Large Compressive Sensing Systems}
\label{sec:numerics}
In this section, we numerically investigate the examples of large compressive sensing systems for some known setups. For this purpose, we simulate the decoupled systems by setting the source distribution and sensing matrix to a specific form and determine the expected distortion of the equivalent scalar system. We, then, discuss the validity of \ac{rs} and \ac{rsb} assumptions for these examples.

\subsection{Simulation Setups}
The settings and distortion functions being considered in the numerical investigations are as follows.
\subsubsection{Sensing Matrices}{
Throughout the numerical investigations, we consider the two important cases of random ``\ac{iid}'' and ``projector'' matrices.
\begin{itemize}
\item \textbf{\ac{iid} Random Matrix:} In this case, the entries of $\mA_{k\times n}$ are supposed to be generated \ac{iid} from an arbitrary distribution $\rmp_a$. Without loss of generality, we assume that the entries are zero-mean random variables with variance $k^{-1}$. This structure is the most primary and also the most discussed case in random matrix theory. For this matrix, regardless of $\rmp_a$, it is well known that the asymptotic empirical eigenvalue \ac{cdf} of the Gramian $\mJ$ follows the Marcenko-Pastur law which states
\begin{align}
\rmF_\mJ(\lambda)=\left[ 1-\sfr^{-1}\right]^+ \mone\{\lambda>0\} + \int_{-\infty}^\lambda \frac{\sqrt{\sfr-(1-u)^2}}{2 \pi \sfr u} \dif u \label{eq:cs-20}
\end{align}
where $\left[x\right]^+$ returns $x$ when $x$ is non-negative and is zero otherwise \cite{muller2013applications,tulino2004random,couillet2011random}. Using the definition of $\mathrm{R}$-transform, it is straightforward to show that $\rmR_{\mJ} (\cdot)$ reads
\begin{align}
\rmR_{\mJ} (\omega) = \frac{1}{1-\sfr \omega}. \label{eq:cs-21}
\end{align}
\item \textbf{Projector Matrix:} Here, the only constraint on the sensing matrix is that the row vectors are orthogonal. The matrices are also referred to as the ``row orthogonal'' matrices. For the sensing matrix $\mA_{k \times n}$, we assume the case that the row vectors are normalized by the number of rows and $k\leq n$; thus, the~outer~product~$\mA \mA^\trp$~reads
\begin{align}
\mA \mA^{\trp}=\frac{n}{k} \mI_k. \label{eq:cs-22}
\end{align}
Consequently, the Gram matrix $\mJ$ takes two different eigenvalues: $\lambda=0$ with multiplicity $n-k$, and $\lambda=n k^{-1}$ with multiplicity $k$. Considering the definition of the factor $\sfr$ in \eqref{eq:eq:sys-1.3}, the asymptotic empirical \ac{cdf} of the eigenvalues read
\begin{align}
\sfF_\mJ(\lambda)=\left[ 1-\sfr^{-1} \right] \mone\{\lambda > 0 \} + \sfr^{-1} \mone \{ \lambda > \sfr \} \label{eq:cs-23}
\end{align}
which results in the $\mathrm{R}$-transform of the form
\begin{align}
\rmR_{\mJ} (\omega) = \frac{\sf \omega - 1 + \sqrt{(\sfr \omega -1)^2+4 \omega}}{2 \omega}. \label{cs:24}
\end{align}
\end{itemize}
}
\subsubsection{Source Model}{
We consider the continuous and finite alphabet sources to be distributed with ``sparse Gaussian'' and ``sparse uniform'' distributions, respectively. More precisely, we assume the entries of the continuous and finite alphabet sources to be generated from a Gaussian and uniform distribution, respectively, and multiplied by a Bernoulli random variable with probability $\alpha$ to take $1$ and $1-\alpha$ to take $0$. Moreover, we assume the nonzero outcomes of the finite alphabet source to be of the symmetric form
\begin{align}
\left\lbrace \pm a, \ldots, \pm  \kappa a \right\rbrace \label{eq:cs-source}
\end{align}
for some positive real $a$ and integer $\kappa$.
}
\subsubsection{Distortion Function}{
Regarding the continuous sources, we determine the performance of different estimators by considering the \ac{mse} as the distortion function, i.e.,
\begin{align}
\sfd(\bhx;\bx)=\norm{\bhx-\bx}^2. \label{cs:25}
\end{align}
For the finite alphabet sources, moreover, we determine the probability of the error as the measure for which
\begin{align}
\sfd(\bhx;\bx)=\sum_{i=1}^n \mone \left\lbrace\hx_i \neq x_i \right\rbrace \label{cs:26}
\end{align}
as well as the \ac{mse}. Considering either case, it is clear that the \ac{mse} obtained by any $\ell_p$ norm recovery scheme is bounded from below by the \ac{mmse} bound reported in the literature \cite{barbier2017mutual,reeves2016replica}.
}

\subsection{Numerical Results for Continuous Sources}

Considering Examples \ref{ex:1}-\ref{ex:3}, we consider the case in which a sparse Gaussian source with sparsity factor $\alpha=0.1$ is sampled via a random sensing matrix. Fig. \ref{fig:5} shows the \ac{rs} prediction of normalized \ac{mse}, defined as
\begin{align}
{\mse}^0=\frac{\mse}{\E \abs{x}^2}=\alpha^{-1} \mse, \label{cs:27}
\end{align}
\begin{figure}[t]
\centering
\resizebox{1\linewidth}{!}{
\pstool[width=.35\linewidth]{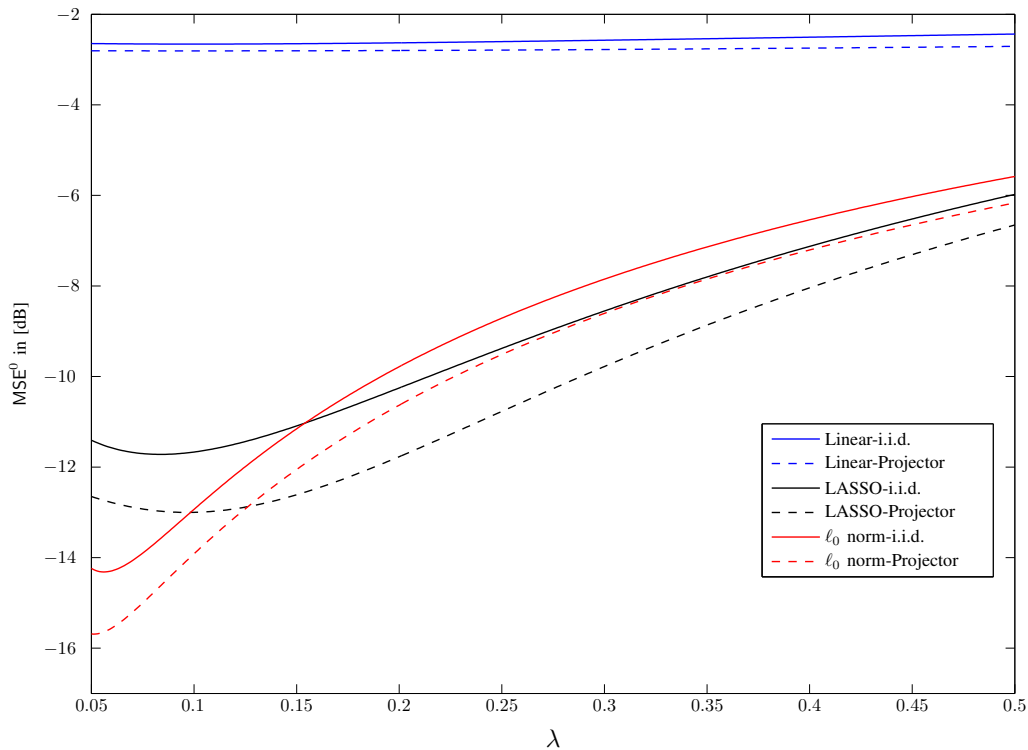}{
\psfrag{lambda}[c][c][0.35]{$\lambda$}
\psfrag{norm-mse}[c][c][0.25]{$\mse^0$ in [dB]}
\psfrag{AAABBBCCC001AA}[l][l][0.25]{Linear-\ac{iid}}
\psfrag{AAABBBCCC001BB}[l][l][0.25]{Linear-Projector}

\psfrag{AAABBBCCC002AA}[l][l][0.25]{LASSO-\ac{iid}}
\psfrag{AAABBBCCC002BB}[l][l][0.25]{LASSO-Projector}

\psfrag{AAABBBCCC003AA}[l][l][0.25]{$\ell_0$ norm-\ac{iid}}
\psfrag{AAABBBCCC003BB}[l][l][0.25]{$\ell_0$ norm-Projector}

\psfrag{0}[r][c][0.25]{$0$}
\psfrag{-2}[r][c][0.25]{$-2$}
\psfrag{-4}[r][c][0.25]{$-4$}
\psfrag{-6}[r][c][0.25]{$-6$}
\psfrag{-8}[r][c][0.25]{$-8$}
\psfrag{-10}[r][c][0.25]{$-10$}
\psfrag{-12}[r][c][0.25]{$-12$}
\psfrag{-14}[r][c][0.25]{$-14$}
\psfrag{-16}[r][c][0.25]{$-16$}
\psfrag{-18}[r][c][0.25]{$-18$}
\psfrag{-20}[r][c][0.25]{$-20$}

\psfrag{0.05}[c][c][0.25]{$0.05$}
\psfrag{0.1}[c][c][0.25]{$0.1$}
\psfrag{0.15}[c][c][0.25]{$0.15$}
\psfrag{0.2}[c][c][0.25]{$0.2$}
\psfrag{0.25}[c][c][0.25]{$0.25$}
\psfrag{0.3}[c][c][0.25]{$0.3$}
\psfrag{0.35}[c][c][0.25]{$0.35$}
\psfrag{0.4}[c][c][0.25]{$0.4$}
\psfrag{0.45}[c][c][0.25]{$0.45$}
\psfrag{0.5}[c][c][0.25]{$0.5$}

}}
\caption{\ac{rs} predicted normalized \ac{mse} versus the estimation parameter $\lambda$ for the linear, LASSO and $\ell_0$ norm recovery schemes considering the compression rate $\sfr=2$. The sparsity factor is set to be $\alpha=0.1$ and the noise variance $\lambda_0$ is set such that the source power to noise power ratio becomes $10$ dB. The dashed and solid lines respectively indicate the cases with random projector and \ac{iid} measurements. The curves match the numerical results reported in \cite{rangan2012asymptotic, vehkapera2014analysis}.}
\label{fig:5}
\end{figure}
as a function of the estimation parameter $\lambda$. The compression rate is set to be $\sfr=2$, and both the \ac{iid} random and projector sensing matrices are considered. The curves match the results reported in \cite{vehkapera2014analysis} and \cite{rangan2012asymptotic}. As it is seen, the $\ell_0$-norm recovery scheme with the optimal choice of estimation parameter outperforms the LASSO scheme; however, the non-optimal choice of the estimation parameter can make the $\ell_0$ norm's performance even worse than the LASSO. Moreover, in contrast to the noiseless case, the projector matrix is always outperforming the \ac{iid} matrix in the noisy case; this fact has been also reported in \cite{vehkapera2014analysis}.

\begin{figure}[t]
\centering
\resizebox{1\linewidth}{!}{
\pstool[width=.35\linewidth]{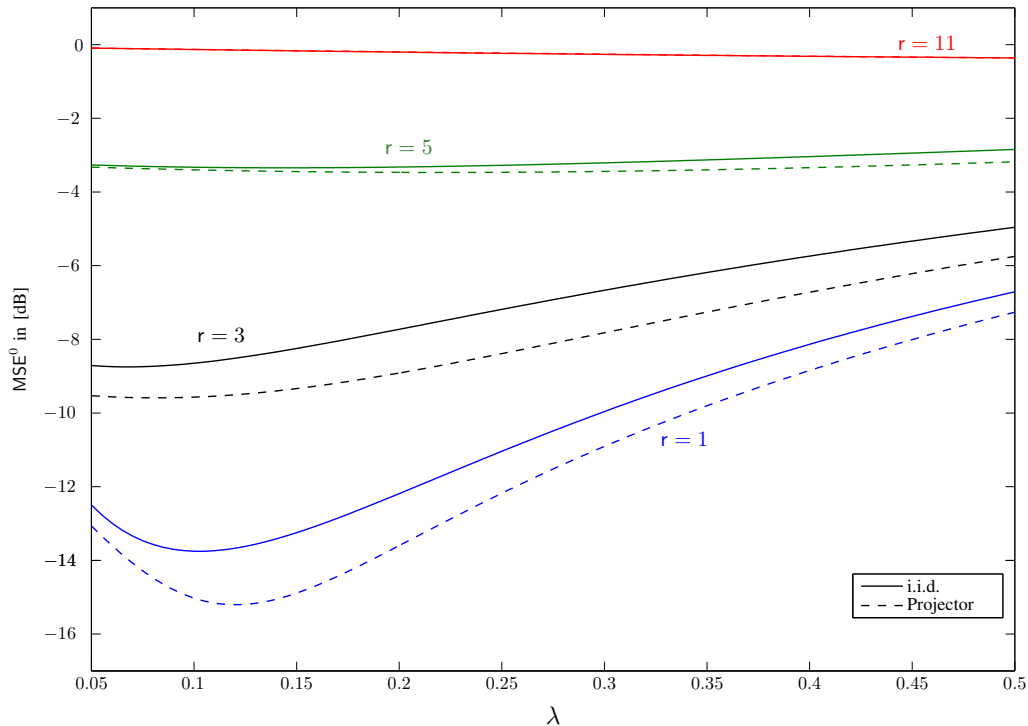}{
\psfrag{lambda}[c][c][0.35]{$\lambda$}
\psfrag{norm-mse}[c][c][0.25]{$\mse^0$ in [dB]}
\psfrag{AAABBBAA01}[l][l][0.25]{\ac{iid}}
\psfrag{AAABBBAA02}[l][l][0.25]{Projector}

\psfrag{0}[r][c][0.25]{$0$}
\psfrag{-2}[r][c][0.25]{$-2$}
\psfrag{-4}[r][c][0.25]{$-4$}
\psfrag{-6}[r][c][0.25]{$-6$}
\psfrag{-8}[r][c][0.25]{$-8$}
\psfrag{-10}[r][c][0.25]{$-10$}
\psfrag{-12}[r][c][0.25]{$-12$}
\psfrag{-14}[r][c][0.25]{$-14$}
\psfrag{-16}[r][c][0.25]{$-16$}
\psfrag{-18}[r][c][0.25]{$-18$}
\psfrag{-20}[r][c][0.25]{$-20$}

\psfrag{0.05}[c][c][0.25]{$0.05$}
\psfrag{0.1}[c][c][0.25]{$0.1$}
\psfrag{0.15}[c][c][0.25]{$0.15$}
\psfrag{0.2}[c][c][0.25]{$0.2$}
\psfrag{0.25}[c][c][0.25]{$0.25$}
\psfrag{0.3}[c][c][0.25]{$0.3$}
\psfrag{0.35}[c][c][0.25]{$0.35$}
\psfrag{0.4}[c][c][0.25]{$0.4$}
\psfrag{0.45}[c][c][0.25]{$0.45$}
\psfrag{0.5}[c][c][0.25]{$0.5$}

\psfrag{r=1}[l][c][0.3]{\textcolor{blue}{$\sfr=1$}}
\psfrag{r=3}[r][c][0.3]{\textcolor{black}{$\sfr=3$}}
\psfrag{r=5}[r][c][0.3]{\textcolor{OliveGreen}{$\sfr=5$}}
\psfrag{r=11}[r][c][0.3]{\textcolor{red}{$\sfr=11$}}

}}
\caption{\ac{rs} predicted normalized \ac{mse} versus the estimation parameter $\lambda$ for different compression rates considering LASSO recovery. The sparsity factor is set to be $\alpha=0.1$ and the true noise variance $\lambda_0$ is set to be $0.01$. The dashed and solid lines respectively indicate the random projector and \ac{iid} matrices. As the compression rate increases, $\mse^0$ converges to $0$ dB.} 
\label{fig:6}
\end{figure}

In \cite{kabashima2009typical}, the authors showed that in the noiseless sampling case with an \ac{iid} Gaussian matrix, the \ac{rs} ansatz for linear and LASSO recovery schemes is locally stable against perturbations that break the symmetry of the replica correlation matrix. This result in fact agrees with the general belief that convex optimization problems do not exhibit \ac{rsb}~\cite{moustakas2007outage}. The result in \cite{kabashima2009typical}, however, indicated that for the $\ell_0$ norm reconstruction, the \ac{rs} ansatz becomes unstable, and therefore the \ac{rsb} ans\"atze are needed for accurately assessing the performance. 

In order to investigate the observation of \cite{kabashima2009typical}, we have plotted the normalized \ac{mse} of the LASSO recovery scheme predicted by the \ac{rs} ansatz in terms of the estimation parameter $\lambda$ in Fig. \ref{fig:6} considering different compression rates. It is observed that for a given estimation parameter $\lambda$, the normalized \ac{mse} increases as the compression rate grows. For large compression rates, the normalized \ac{mse} converges to $0$ dB which agrees with the fact that for asymptotically large compression rates, the source and observation vectors are independent, and thus, the \ac{mse} converges to the source power. To investigate the $\ell_0$ norm recovery scheme, we plot the corresponding curves for the $\ell_0$ norm scheme considering Fig. \ref{fig:6} as the reference.

For $\ell_0$ norm recovery, Fig. \ref{fig:7} shows the normalized \ac{mse} predicted by the \ac{rs} ansatz as a function of the estimation parameter for two different compression rates. The system setup has been set to be similar to the one considered in Fig. \ref{fig:6}, and the curves have been plotted for both the \ac{iid} and projector measurements. In contrast to the LASSO recovery scheme, the \ac{rs} ansatz starts to give invalid predictions for the $\ell_0$ norm scheme as the compression rate increases. As it is observed, the normalized \ac{mse} drops unexpectedly down for an interval of the estimation parameters when the compression rate grows.

\begin{figure}[t]
\centering
\resizebox{1\linewidth}{!}{
\pstool[width=.35\linewidth]{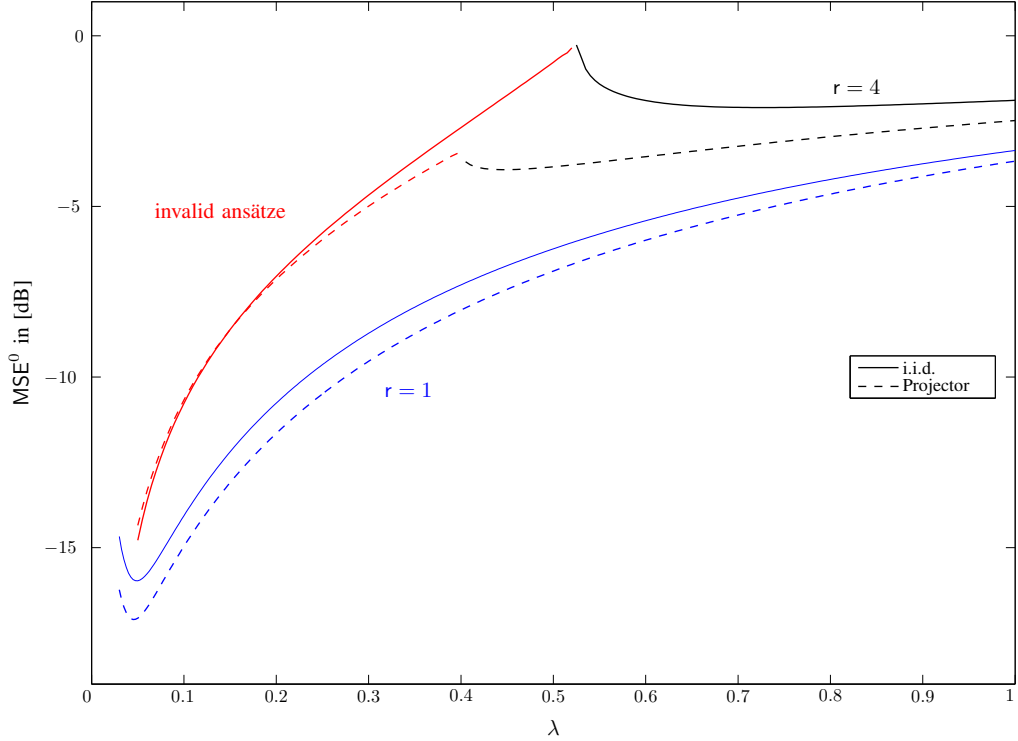}{
\psfrag{lambda}[c][c][0.35]{$\lambda$}
\psfrag{norm-mse}[c][c][0.25]{$\mse^0$ in [dB]}
\psfrag{AAABBBAA01}[l][l][0.25]{\ac{iid}}
\psfrag{AAABBBAA02}[l][l][0.25]{Projector}

\psfrag{0}[r][c][0.25]{$0$}
\psfrag{-5}[r][c][0.25]{$-5$}
\psfrag{5}[r][c][0.25]{$5$}
\psfrag{-10}[r][c][0.25]{$-10$}
\psfrag{-15}[r][c][0.25]{$-15$}
\psfrag{-20}[r][c][0.25]{$-20$}
\psfrag{-25}[r][c][0.25]{$-25$}

\psfrag{0.05}[c][c][0.25]{$0.05$}
\psfrag{0.1}[c][c][0.25]{$0.1$}
\psfrag{0.15}[c][c][0.25]{$0.15$}
\psfrag{0.2}[c][c][0.25]{$0.2$}
\psfrag{0.25}[c][c][0.25]{$0.25$}
\psfrag{0.3}[c][c][0.25]{$0.3$}
\psfrag{0.35}[c][c][0.25]{$0.35$}
\psfrag{0.4}[c][c][0.25]{$0.4$}
\psfrag{0.45}[c][c][0.25]{$0.45$}
\psfrag{0.5}[c][c][0.25]{$0.5$}
\psfrag{0.55}[c][c][0.25]{$0.55$}
\psfrag{0.6}[c][c][0.25]{$0.6$}
\psfrag{0.65}[c][c][0.25]{$0.65$}
\psfrag{0.7}[c][c][0.25]{$0.7$}
\psfrag{0.75}[c][c][0.25]{$0.75$}
\psfrag{0.8}[c][c][0.25]{$0.8$}
\psfrag{0.85}[c][c][0.25]{$0.85$}
\psfrag{0.9}[c][c][0.25]{$0.9$}
\psfrag{0.95}[c][c][0.25]{$0.95$}
\psfrag{1}[r][c][0.25]{$1$}

\psfrag{r=1}[l][c][0.3]{\textcolor{blue}{$\sfr=1$}}
\psfrag{r=4}[r][c][0.3]{\textcolor{black}{$\sfr=4$}}
\psfrag{invalid solution}[c][c][0.3]{\textcolor{red}{invalid ans\"atze}}
\psfrag{r=11}[r][c][0.3]{\textcolor{red}{$\sfr=11$}}

\psfrag{lambda}[c][c][0.3]{$\lambda$}
\psfrag{mse0}[c][c][0.3]{$\mse^0$ in [dB]}

}}
\caption{The normalized \ac{mse} as a function of the estimation parameter $\lambda$ determined via the \ac{rs} ansatz for different compression rates considering the $\ell_0$ norm recovery scheme. The sparsity factor and the noise variance are considered to be $\alpha=0.1$ and $\lambda_0=0.01$, respectively. The dashed and solid lines respectively indicate the random projector and \ac{iid} matrices. As the compression rate increases, the \ac{rs} ansatz starts to give invalid solutions at low estimation parameters. In fact, as $\sfr$ grows, the \ac{rs} fixed point equations have no valid solutions as $\lambda$ reduces. The interval in which the solution is invalid becomes larger as the compression rate increases.}
\label{fig:7}
\end{figure}

In fact, in this interval, the \ac{rs} fixed point equations have either an unstable solution or no solution. To illustrate this result further, let us consider Examples \ref{ex:2} and \ref{ex:3} under the \ac{rs} ansatz when an \ac{iid} sensing matrix is employed. In this case, the equivalent noise power and estimation parameter $\lams_0$ and $\lams$ read
\begin{subequations}
\begin{align}
\lams&=\lambda+\sfr \chi \label{eq:cs-28a} \\
\lams_0&=\lambda_0+\sfr q \label{eq:cs-28b}.
\end{align}
\end{subequations}
By increasing the compression rate, the interference increases, and thus, the \ac{mse} takes larger values. Therefore, for small $\lambda$ and $\lambda_0$, one can consider $\sfr \chi \gg \lambda$ and $\sfr q \gg \lambda_0$ as $\sfr$ takes large values and write
\begin{subequations}
\begin{align}
\lams&\approx \sfr \chi \label{eq:cs-29a} \\
\lams_0&\approx \sfr q \label{eq:cs-29b}.
\end{align}
\end{subequations}
Considering Example \ref{ex:2}, by substituting \eqref{eq:cs-29a} and \eqref{eq:cs-29b} in the \ac{rs} ansatz, the fixed point equations, as $\sfr$ grows large, is written in the following form
\begin{subequations}
\begin{align}
u{\sqrt{\sfr} \ \phi (u\sqrt{\sfr})} & \approx { \int_{u \sqrt{\sfr}}^\infty z^2 \ \md z} + \epsilon_r \label{eq:cs-30a} \\
q &\approx {2 \alpha} \int_0^{u \sqrt{\sfr}} \md z + \epsilon_r \label{eq:cs-30b}.
\end{align}
\end{subequations}
for some $\epsilon_r$ tending to zero as $r\uparrow\infty$, and the bounded real scalar $u$ defined as
\begin{align}
\displaystyle u \coloneqq \frac{\chi}{\sqrt{q}}. \label{eq:cs-31}
\end{align}
Taking the limit $\sfr\uparrow\infty$, \eqref{eq:cs-30a} is valid for any bounded real value of $u$ and \eqref{eq:cs-30b} reduces to $q \approx \alpha$ for large compression rates. Noting that for this setup $q=\mse$, one concludes that $\mse^0 \approx 1$ which agrees with the results given in Fig. \ref{fig:6}. A similar approach for the $\ell_0$ norm recovery scheme in Example \ref{ex:3}, however, results in the following contradicting equations
\begin{subequations}
\begin{align}
\int_{u}^\infty z^2 \ \md z &\approx \epsilon_r \label{eq:cs-32a} \\
\int_{0}^u \ \md z &\approx \epsilon_r \label{eq:cs-32b}.
\end{align}
\end{subequations}
for the scalar $u$ defined as
\begin{align}
\displaystyle u \coloneqq \sqrt{2\frac{\chi}{q}}. \label{eq:cs-33}
\end{align}
Clearly, the set of equations in \eqref{eq:cs-32a} and \eqref{eq:cs-32b} have no solution as $\sfr \uparrow \infty$. The approximated fixed point equations explain the invalidity of the \ac{rs} predicted performance of the $\ell_0$ norm recovery scheme for large compression rates.

\begin{figure}[t]
\centering
\resizebox{1\linewidth}{!}{
\pstool[width=.35\linewidth]{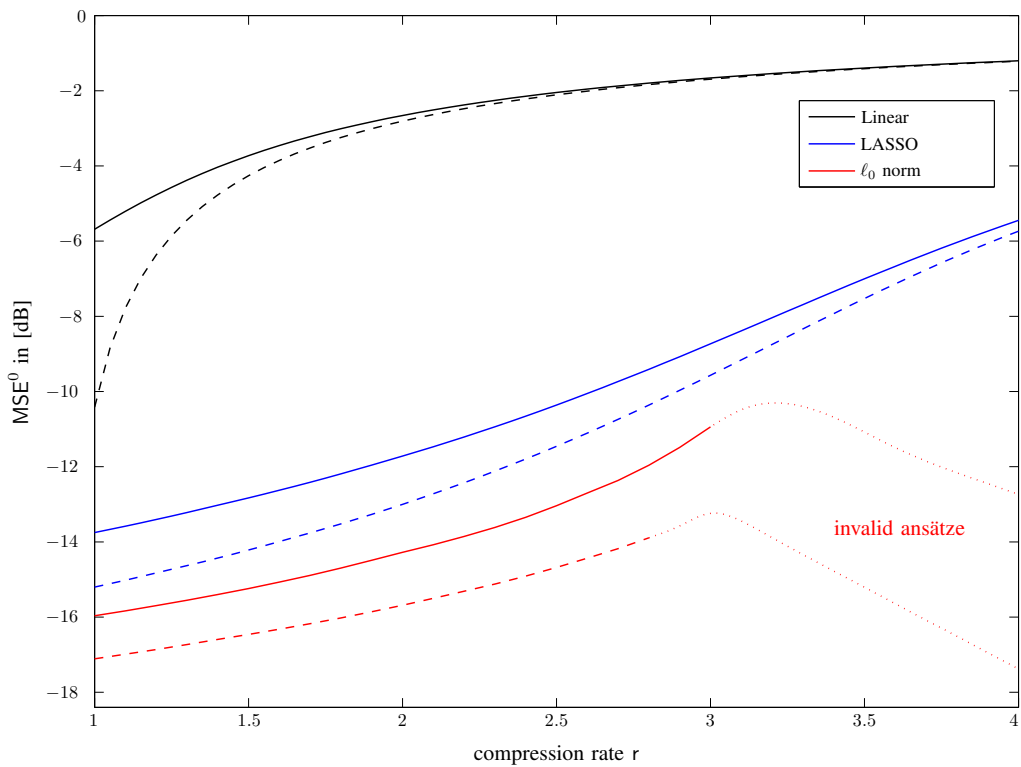}{

\psfrag{LINXXXAAA}[l][l][0.25]{Linear}
\psfrag{SINXXXAAA}[l][l][0.25]{LASSO}
\psfrag{ZINXXXAAA}[l][l][0.25]{$\ell_0$ norm}

\psfrag{invalid ansatze}[c][c][0.3]{\textcolor{red}{invalid ans\"atze}}
\psfrag{normalized mse}[c][c][0.3]{$\mse^0$ in [dB]}
\psfrag{0}[r][c][0.25]{$0$}
\psfrag{-2}[r][c][0.25]{$-2$}
\psfrag{-4}[r][c][0.25]{$-4$}
\psfrag{-6}[r][c][0.25]{$-6$}
\psfrag{-8}[r][c][0.25]{$-8$}
\psfrag{-10}[r][c][0.25]{$-10$}
\psfrag{-12}[r][c][0.25]{$-12$}
\psfrag{-14}[r][c][0.25]{$-14$}
\psfrag{-16}[r][c][0.25]{$-16$}
\psfrag{-18}[r][c][0.25]{$-18$}

\psfrag{0.5}[c][c][0.25]{$0.5$}
\psfrag{1}[c][c][0.25]{$1$}
\psfrag{1.5}[c][c][0.25]{$1.5$}
\psfrag{2}[c][c][0.25]{$2$}
\psfrag{2.5}[c][c][0.25]{$2.5$}
\psfrag{3}[c][c][0.25]{$3$}
\psfrag{3.5}[c][c][0.25]{$3.5$}
\psfrag{4}[r][c][0.25]{$4$}
\psfrag{4.5}[c][c][0.25]{$4.5$}
\psfrag{5}[c][c][0.25]{$5$}
\psfrag{5.5}[c][c][0.25]{$5.5$}
\psfrag{6}[c][c][0.25]{$6$}

\psfrag{rate}[c][c][0.3]{compression rate $\sfr$}

}}
\caption{Optimal normalized \ac{mse} versus the compression rate $\sfr$ under \ac{rs} assumption considering the linear, LASSO and $\ell_0$-norm recovery schemes. The sparsity factor and the source to noise power ratio are considered to be $\alpha=0.1$ and $10$ dB, respectively. $\mathsf{MSE}^0$ has been minimized numerically over the estimation parameter $\lambda$. The solid and dashed lines show the results for random \ac{iid} and orthogonal measurements, respectively. As $\sfr$ grows, the \ac{rs} ansatz for $\ell_0$ norm recovery starts to give invalid solutions.}
\label{fig:8}
\end{figure}

In order to further investigate the \ac{rs} ansatz, we plot the optimal normalized \ac{mse} as a function of the compression rate in Fig. \ref{fig:8}. Here, we consider the case with \ac{iid} sensing matrix when the sparsity factor is set to be $\alpha=0.1$ and source to noise power ratio to be $10$ dB. The normalized \ac{mse} is minimized numerically over the estimation parameter $\lambda$. As the figure illustrates, the \ac{mse} of the \ac{rs} ansatz starts to drop in moderate compression rates. The observation confirms the discussion on the stability of the \ac{rs} saddle point given in \cite{kabashima2009typical,yoshida2007statistical}. In fact, the unexpected drop of the \ac{rs} ansatz is caused by the limited stability region of the \ac{rs} fixed point solutions. More precisely, for a given source to noise power ratio and estimation parameter, the \ac{rs} fixed point equations have stable solutions within a certain interval of compression rates. The interval widens as $\lambda$ grows.
\begin{figure}[t]
\centering
\resizebox{1\linewidth}{!}{
\pstool[width=.35\linewidth]{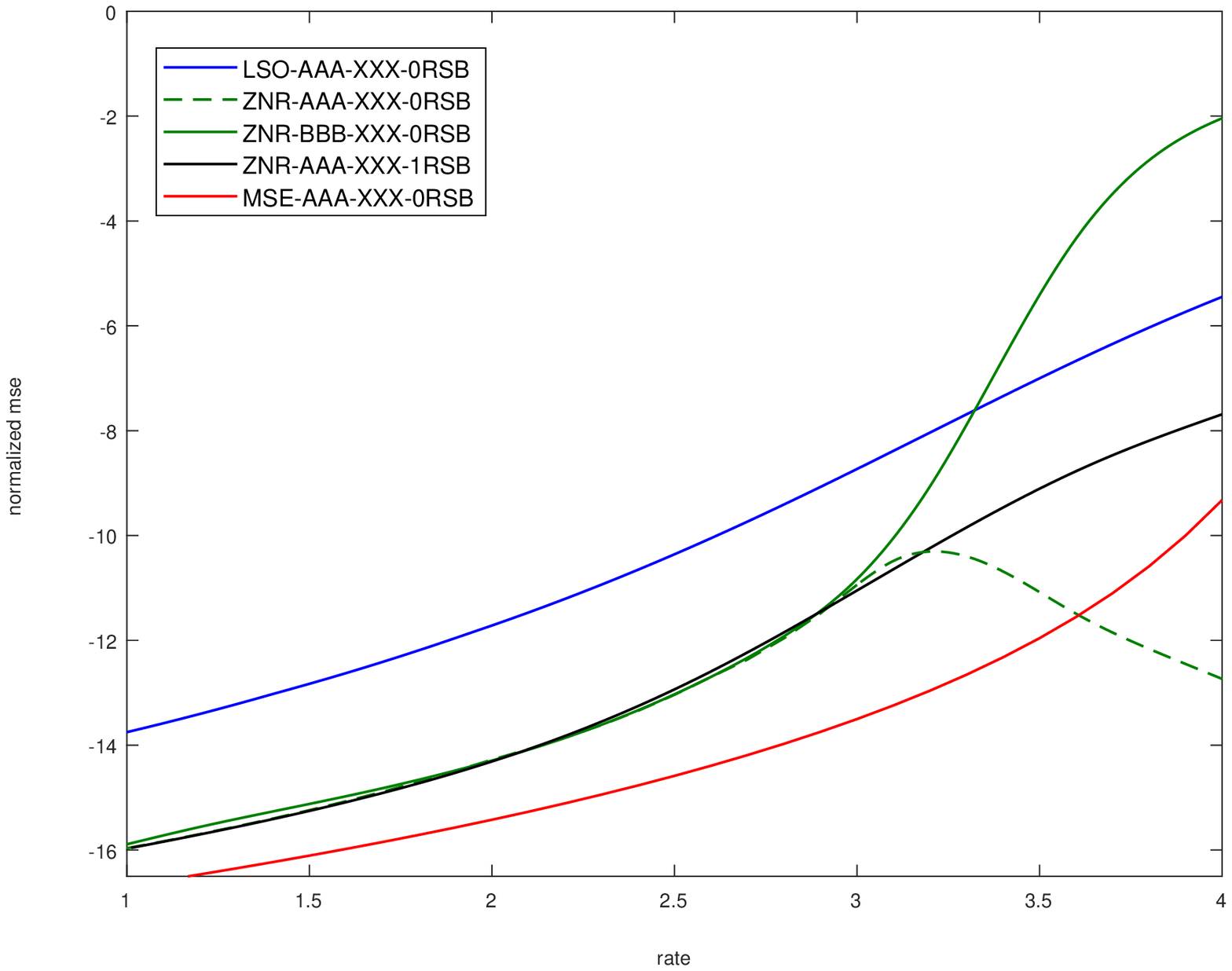}{

\psfrag{LSO-AAA-XXX-0RSB}[l][l][0.25]{LASSO-RS}
\psfrag{ZNR-AAA-XXX-0RSB}[l][l][0.25]{$\ell_0$ norm-RS}
\psfrag{ZNR-BBB-XXX-0RSB}[l][l][0.25]{$\ell_0$ norm-RS restricted}
\psfrag{ZNR-AAA-XXX-1RSB}[l][l][0.25]{$\ell_0$ norm-1RSB}
\psfrag{MSE-AAA-XXX-0RSB}[l][l][0.25]{\ac{mmse} Bound}
\psfrag{norm-mse}[c][c][0.25]{$\mse^0$ in [dB]}

\psfrag{0}[r][c][0.25]{$0$}
\psfrag{-2}[r][c][0.25]{$-2$}
\psfrag{-4}[r][c][0.25]{$-4$}
\psfrag{-6}[r][c][0.25]{$-6$}
\psfrag{-8}[r][c][0.25]{$-8$}
\psfrag{-10}[r][c][0.25]{$-10$}
\psfrag{-12}[r][c][0.25]{$-12$}
\psfrag{-14}[r][c][0.25]{$-14$}
\psfrag{-16}[r][c][0.25]{$-16$}

\psfrag{0.5}[c][c][0.25]{$0.5$}
\psfrag{1}[c][c][0.25]{$1$}
\psfrag{1.5}[c][c][0.25]{$1.5$}
\psfrag{2}[c][c][0.25]{$2$}
\psfrag{2.5}[c][c][0.25]{$2.5$}
\psfrag{3}[c][c][0.25]{$3$}
\psfrag{3.5}[c][c][0.25]{$3.5$}
\psfrag{4}[r][c][0.25]{$4$}
\psfrag{4.5}[c][c][0.25]{$4.5$}
\psfrag{5}[c][c][0.25]{$5$}
\psfrag{5.5}[c][c][0.25]{$5.5$}
\psfrag{6}[r][c][0.25]{$6$}

\psfrag{rate}[c][c][0.3]{compression rate $\sfr$}
\psfrag{normalized mse}[c][c][0.3]{$\mse^0$ in [dB]}

}}
\caption{\ac{rs} versus 1\ac{rsb} prediction of the optimal normalized \ac{mse} considering the $\ell_0$ norm recovery scheme and random \ac{iid} measurements. The sparsity factor and the source to noise power ratio are considered to be $\alpha=0.1$ and $10$ dB respectively. The green dashed line shows the \ac{rs} prediction of optimal $\mse^0$ when it is numerically minimized over $\lambda$; however, the green solid line indicates the restricted \ac{rs} predicted $\mse^0$ minimized numerically over the intervals of $\lambda$ in which the \ac{rs} ansatz is stable. The black solid line denotes the 1\ac{rsb} ansatz which deviates both the \ac{rs} and restricted \ac{rs} curves. For sake of comparison, the normalized \ac{mse} for LASSO recovery as well as the \ac{mmse} bound have been plotted. }
\label{fig:9}
\end{figure}

Fig. \ref{fig:9} compares the \ac{rs} and 1\ac{rsb} ans\"atze for $\ell_0$ norm recovery. In this figure, the optimal normalized \ac{mse} is plotted for the case with \ac{iid} sensing matrix. The sparsity factor is considered to be $\alpha=0.1$ and the noise variance is set to be $\lambda_0=0.01$. For the \ac{rs} ansatz, we have considered two cases, namely when $\mse^0$ is minimized over
\begin{inparaenum}
\item all possible estimation parameters and
\item the interval of $\lambda$'s in which the \ac{rs} ansatz is valid within\footnote{In fact, we consider $\lambda > \lambda_{\rm d}$, where $\lambda_{\rm d}$ is the point in Fig. \ref{fig:7} in which the normalized \ac{mse} curve is not differentiable.}.
\end{inparaenum}
The latter case is referred to as the ``\ac{rs} restricted'' curve. As the figure depicts, the difference between the \ac{rs} and the 1\ac{rsb} ansatz is quite small for low compression rates. The 1\ac{rsb} prediction however deviates from \ac{rs} at larger compression rates. This observation indicates that the performance analysis of the $\ell_0$ norm recovery exhibits \ac{rsb}. For sake of comparison, we also plot the curve for the LASSO recovery scheme as well as the \ac{mmse} bound. As it is observed, the normalized \ac{mse} for LASSO bounds the 1\ac{rsb} curve from above. The 1\ac{rsb} ansatz is moreover consistent with the \ac{mmse} bound which is rigorously justified in the literature \cite{barbier2017mutual,reeves2016replica}. The \ac{rs} restricted $\ell_0$ norm's curve, moreover, crosses the LASSO curve. This deviation indicates that, at large compression rates, the optimal estimation parameter $\lambda$, which minimizes the \ac{mse}, lies somewhere out of the interval in which the \ac{rs} ansatz returns a valid approximation for the \ac{mse}.

\begin{figure}[t]
\centering
\resizebox{1\linewidth}{!}{
\pstool[width=.35\linewidth]{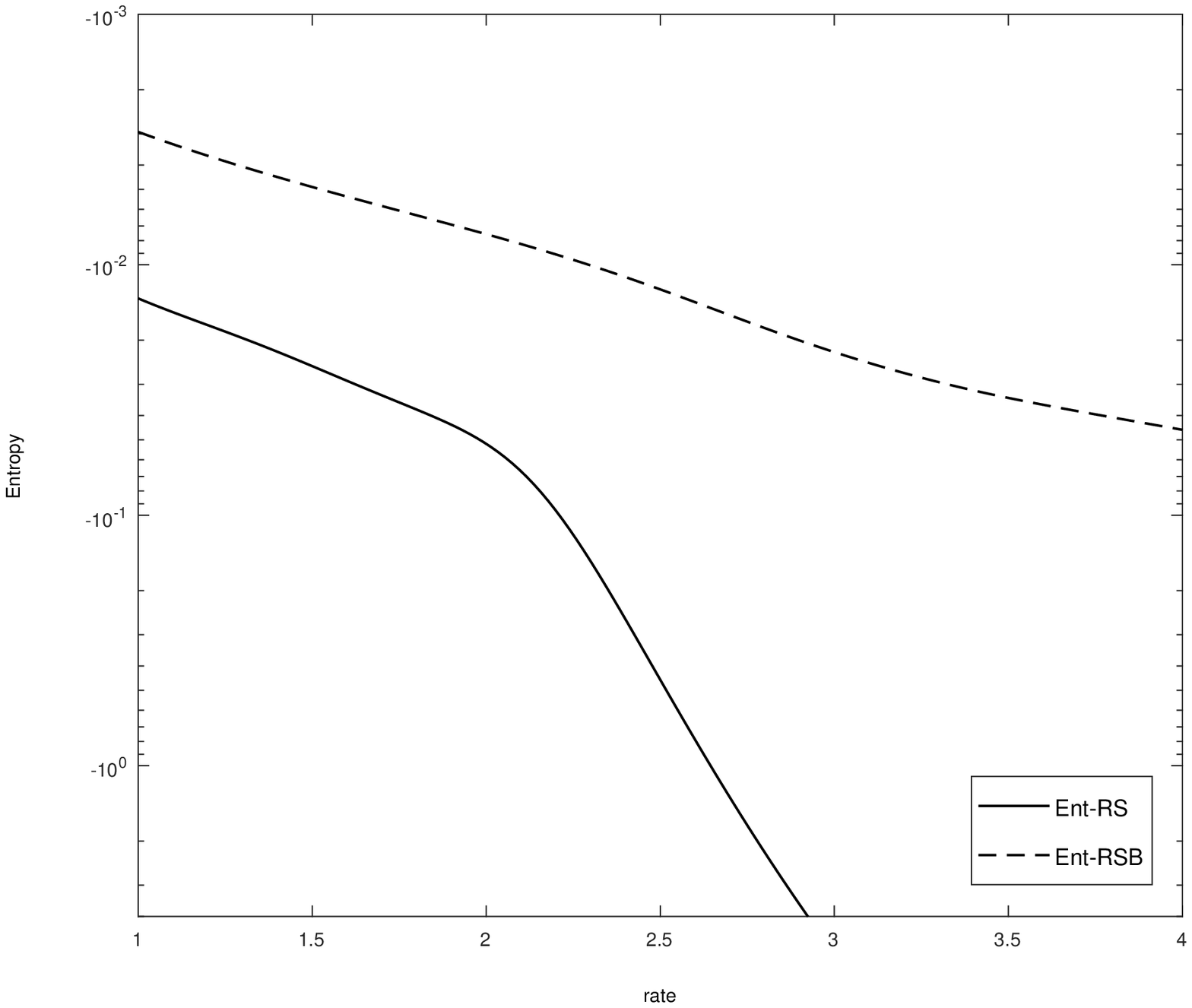}{

\psfrag{Ent-RS}[Bc][Bc][0.3]{$\rmH^0_{\mathsf{rs}}$}
\psfrag{Ent-RSB}[c][l][0.3]{$\hspace*{2mm}\rmH^{0[1]}_{\mathsf{rsb}}$}

\psfrag{-0.01}[c][c][0.25]{$-0.01$\hspace{3mm}}
\psfrag{-0.02}[c][c][0.25]{$-0.02$\hspace{3mm}}
\psfrag{-0.03}[c][c][0.25]{$-0.03$\hspace{3mm}}
\psfrag{-0.04}[c][c][0.25]{$-0.04$\hspace{3mm}}
\psfrag{-0.05}[c][c][0.25]{$-0.05$\hspace{3mm}}
\psfrag{-0.06}[c][c][0.25]{$-0.06$\hspace{3mm}}
\psfrag{-0.07}[c][c][0.25]{$-0.07$\hspace{3mm}}
\psfrag{-0.08}[c][c][0.25]{$-0.08$\hspace{3mm}}
\psfrag{-1}[r][c][0.21]{$-1$}
\psfrag{-2}[r][c][0.21]{$-2$}
\psfrag{-3}[r][c][0.21]{$-3$}
\psfrag{-10}[c][c][0.25]{$-10$\hspace{7mm}}

\psfrag{0}[r][c][0.21]{$0$\hspace*{1mm}}
\psfrag{1}[c][c][0.25]{$1$}
\psfrag{1.5}[c][c][0.25]{$1.5$}
\psfrag{2}[c][c][0.25]{$2$}
\psfrag{2.5}[c][c][0.25]{$2.5$}
\psfrag{3}[c][c][0.25]{$3$}
\psfrag{3.5}[c][c][0.25]{$3.5$}
\psfrag{4}[r][c][0.25]{$4$}
\psfrag{5}[c][c][0.25]{$5$}
\psfrag{6}[c][c][0.25]{$6$}

\psfrag{rate}[c][c][0.3]{compression rate $\sfr$}
\psfrag{Entropy}[c][c][0.3]{zero temperature entropy $\rmH^0$}

}}
\caption{$\rmH^0_{\mathsf{rs}}$ and $\rmH^{0[1]}_{\mathsf{rsb}}$ as functions of the compression rate considering the spin glass which corresponds to the $\ell_0$ norm recovery scheme. The setting is considered to be the same as in the setup investigated in Fig. \ref{fig:8} and Fig. \ref{fig:9}. The better approximation of the performance via the 1\ac{rsb} ansatz is demonstrated in this figure.}
\label{fig:10}
\end{figure}

%
%
%
%
%
%
%
To check the consistency of the solutions under the \ac{rs} and 1\ac{rsb} ans\"atze, we also sketch the zero temperature entropy of the corresponding spin glass as a function of the compression rate in Fig. \ref{fig:10}, considering the same setup as in Fig. \ref{fig:8} and Fig. \ref{fig:9}. The zero temperature entropy also confirms the better accuracy of the 1\ac{rsb} ansatz.

Based on the above investigations, the exact performance of the $\ell_0$ norm recovery scheme needs the \ac{rsb} ans\"atze to be accurately studied. It is however useful to see the validity region of the system parameters in which the prediction given by the \ac{rs} ansatz lies approximately on the 1\ac{rsb} curve. For this purpose, we define for the $\ell_0$ norm recovery, the ``\ac{rs} break compression rate'' $\sfr_{\mathsf{br}}^{\mathsf{RS}}(\cdot)$ as a function of system parameters to be the compression rate that the prediction via the \ac{rs} ansatz starts to deviate from 1\ac{rsb}. More precisely, for a given setting with parameters in the set $\setP$, the ``\ac{rs} $\epsilon$-validity region'' for a given $\epsilon\in \setR^+$ is defined as
\begin{align}
\setV_{\mathsf{br}}^{\mathsf{RS}} (\setP;\epsilon) \coloneqq \left\lbrace \sfr : \abs{\mse^0_{\mathsf{RS}}(\sfr,\setP)-\mse^0_{1\mathsf{RSB}}(\sfr,\setP)} < \epsilon \right\rbrace. \label{eq:cs-34}
\end{align}
where $\mse^0_{\mathsf{RS}}(\sfr,\setP)$ and $\mse^0_{1\mathsf{RSB}}(\sfr,\setP)$ indicate, respectively, the normalized \ac{mse} at the compression rate $\sfr$ for the setup specified by the set $\setP$ calculated from the \ac{rs} and 1\ac{rsb} ansatz, respectively. Consequently, the ``\ac{rs} break compression rate'' $\sfr_{\mathsf{br}}^{\mathsf{RS}}(\cdot)$ for maximum deviation $\epsilon$ is given by
\begin{align}
\sfr_{\mathsf{br}}^{\mathsf{RS}}(\setP;\epsilon) \coloneqq \max_{\setV_{\mathsf{br}}^{\mathsf{RS}} (\setP;\epsilon)} \sfr.
\end{align}
In general, the set $\setP$ includes several setup parameters such as sensing matrix and source's distribution, the sparsity factor, noise variance and estimation parameter. Considering a case with \ac{iid} sensing matrix and sparse Gaussian source with a given sparsity factor $\alpha$, the \ac{rs} break compression rate is a function of $\mathsf{snr}$ and the estimation parameter $\lambda$, i.e., $\sfr_{\mathsf{br}}^{\mathsf{RS}} (\mathsf{snr},\lambda;\epsilon)$, where we define the source to noise power ratio $\mathsf{snr}$ as
\begin{align}
\mathsf{snr} \coloneqq \frac{\E x^2}{\lambda_0} = \alpha \lambda_0^{-1}. \label{eq:cs-35}
\end{align}
In this case, the \ac{rs} $\epsilon$-validity region $\setV_{\mathsf{br}}^{\mathsf{RS}} (\mathsf{snr},\lambda;\epsilon)$ is the area enclosed by both the $\lambda$ and $\mathsf{snr}$ axes as well as the $\sfr_{\mathsf{br}}^{\mathsf{RS}} (\mathsf{snr},\lambda;\epsilon)$ surface. 

\begin{figure}[t]
\centering
\resizebox{1\linewidth}{!}{
\pstool[width=.35\linewidth]{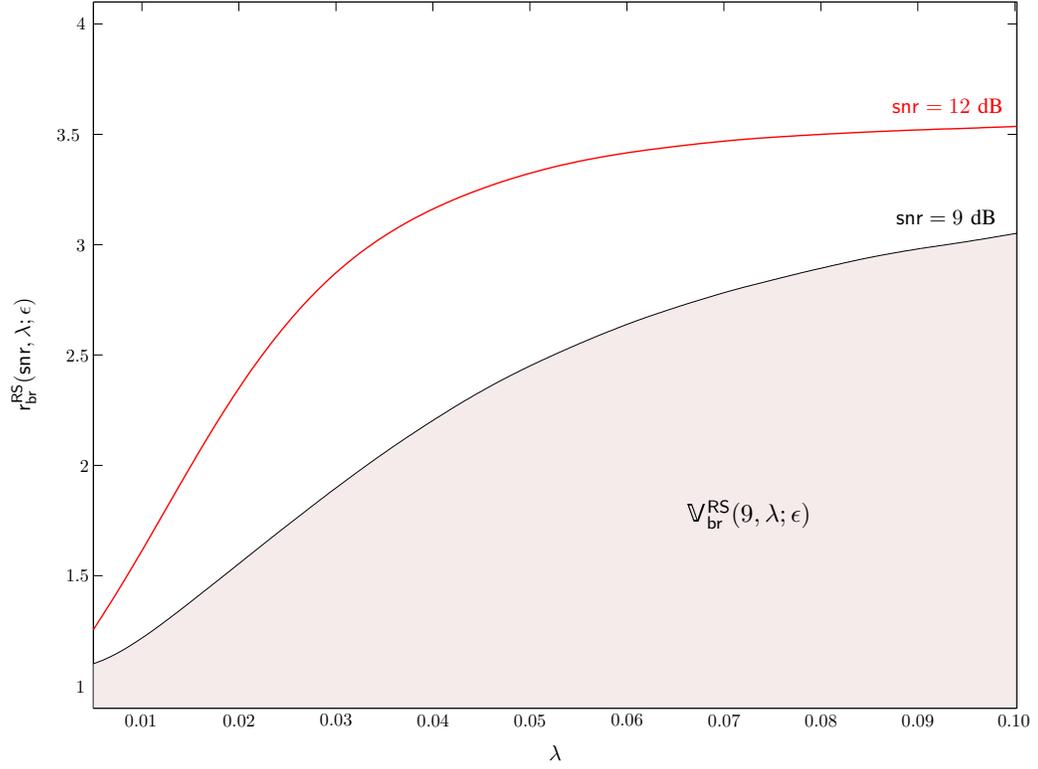}{

\psfrag{1}[r][c][0.25]{$1$}
\psfrag{1.5}[r][r][0.25]{$1.5$}
\psfrag{2}[r][r][0.25]{$2$}
\psfrag{2.5}[r][r][0.25]{$2.5$}
\psfrag{3}[r][c][0.25]{$3$}
\psfrag{3.5}[r][c][0.25]{$3.5$}
\psfrag{4}[r][c][0.25]{$4$}
\psfrag{-14}[r][c][0.25]{$-14$}

\psfrag{5}[c][c][0.25]{$5$}
\psfrag{10}[c][c][0.25]{$10$}
\psfrag{15}[c][c][0.25]{$15$}

\psfrag{0}[c][c][0.25]{$0$}
\psfrag{0.01}[c][c][0.25]{$0.01$}
\psfrag{0.02}[c][c][0.25]{$0.02$}
\psfrag{0.03}[c][c][0.25]{$0.03$}
\psfrag{0.04}[c][c][0.25]{$0.04$}
\psfrag{0.05}[c][c][0.25]{$0.05$}
\psfrag{0.06}[c][c][0.25]{$0.06$}
\psfrag{0.07}[c][c][0.25]{$0.07$}
\psfrag{0.08}[c][c][0.25]{$0.08$}
\psfrag{0.09}[c][c][0.25]{$0.09$}
\psfrag{0.1}[c][c][0.25]{$0.10$}

\psfrag{lambda}[c][c][0.3]{$\lambda$}
\psfrag{snr=12}[c][c][0.3]{\textcolor{red}{$\mathsf{snr}=12$ dB}}
\psfrag{snr=9}[c][c][0.3]{$\mathsf{snr}=9$ dB}
\psfrag{Rbreak}[c][c][0.3]{$\sfr_{\mathsf{br}}^{\mathsf{RS}}(\mathsf{snr},\lambda;\epsilon)$}
\psfrag{RS validity region}[c][c][0.35]{$\setV_{\mathsf{br}}^{\mathsf{RS}} (9,\lambda;\epsilon)$}
}}
\caption{$\sfr_{\mathsf{br}}^{\mathsf{RS}}(\mathsf{snr},\lambda;\epsilon)$ in terms of the estimation parameter $\lambda$ for $\epsilon=10^{-3}$, and $\mathsf{snr}=9$ dB and $\mathsf{snr}=12$ dB considering an \ac{iid} matrix employed for sensing. The break compression rate starts to saturate as the estimation parameter grows. As $\lambda\downarrow0$, the break compression rate takes values close to one. This observation agrees with \ac{rs} instability reported in \cite{kabashima2009typical}.}
\label{fig:12}
\end{figure}

Fig. \ref{fig:12} illustrates the intersection of the \ac{rs} $\epsilon$-validity region and the planes $\mathsf{snr}=9$ and $\mathsf{snr}=12$ for $\epsilon=10^{-3}$. The break compression rate increases with respect to both $\lambda$ and $\mathsf{snr}$ which agrees with the intuition. Moreover, $\sfr_{\mathsf{br}}^{\mathsf{RS}} (\mathsf{snr},\lambda;\epsilon)$ starts to saturate as $\lambda$ grows. Another extreme case is when the estimation parameter tends to zero in which the \ac{map} estimator reduces to
\begin{align}
\bgg(\by)= \arg \min_{\norm{\by-\mA \bv}\leq \epsilon_0} \norm{\bv}_0. \label{eq:cs-36}
\end{align}
with $\epsilon_0=\mao(\frac{1}{\sqrt{\mathsf{snr}}})$. For this case, the break compression rate converges to the minimum compression rate $\sfr=1$. This observation in the large $\mathsf{snr}$ limit agrees with the instability of the \ac{rs} ansatz reported in \cite{kabashima2009typical}.
\newpage
\subsection{Numerical Results for Finite Alphabet Sources}
Considering the finite alphabet source given in \eqref{eq:cs-source}, the boundary points for linear recovery read
\begin{align}
v_k^{\ell_2} = \pm \left(\frac{2k-1}{2} a(1 + \lams) \right)
\end{align}
for $k\in[1:\kappa]$. In the case of LASSO reconstruction, the boundary points are given by
\begin{align}
v_k^{\ell_1} = \pm \left( \frac{2k-1}{2} a + \lams \right)
\end{align}
for $k\in[1:\kappa]$. Finally, by employing $\ell_0$ norm recovery, we have $v_1^{\ell_0} = \pm \left( \frac{1}{2} a+\frac{\lams}{a} \right)$ and
\begin{figure}[t]
\centering
\resizebox{1\linewidth}{!}{
\pstool[width=.35\linewidth]{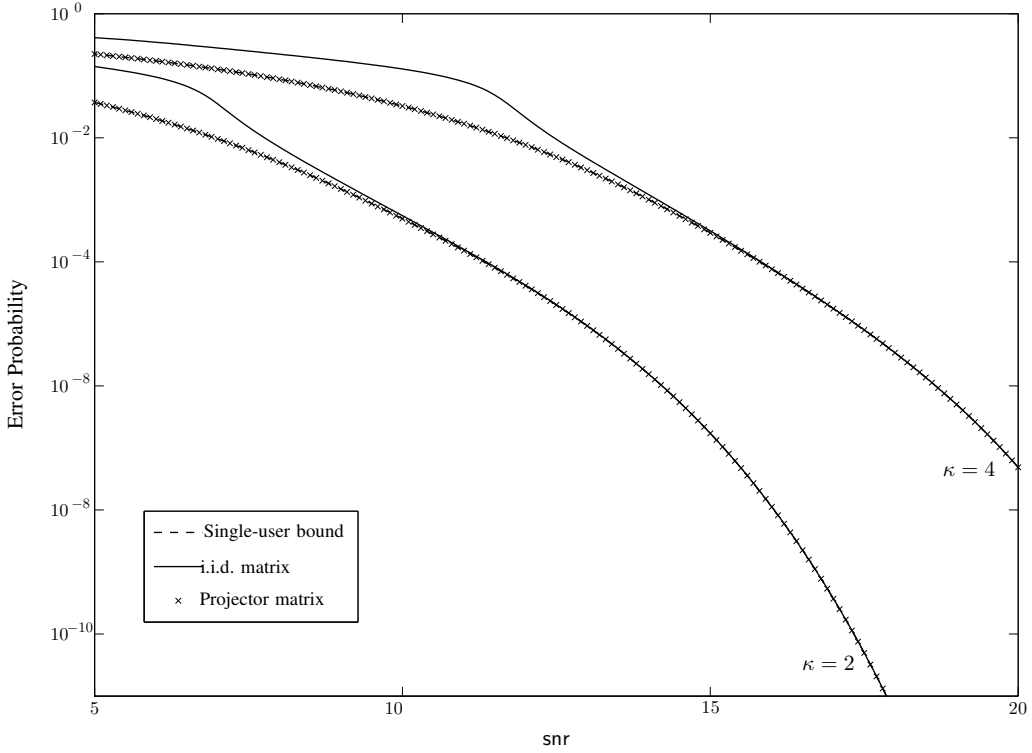}{


\psfrag{0}[c][c][0.2]{$0$}
\psfrag{-2}[c][c][0.2]{$-2$}
\psfrag{-4}[c][c][0.2]{$-4$}
\psfrag{-6}[c][c][0.2]{$-8$}
\psfrag{-8}[c][c][0.2]{$-8$}
\psfrag{-10}[c][c][0.2]{$-10$\hspace{1.5mm}}

\psfrag{5}[c][c][0.25]{$5$}
\psfrag{10}[r][c][0.25]{$10$}
\psfrag{15}[c][c][0.25]{$15$}
\psfrag{20}[c][c][0.25]{$20$}

\psfrag{0.01}[c][c][0.25]{$0.01$}
\psfrag{0.02}[c][c][0.25]{$0.02$}
\psfrag{0.03}[c][c][0.25]{$0.03$}
\psfrag{0.04}[c][c][0.25]{$0.04$}
\psfrag{0.05}[c][c][0.25]{$0.05$}
\psfrag{0.06}[c][c][0.25]{$0.06$}
\psfrag{0.07}[c][c][0.25]{$0.07$}
\psfrag{0.08}[c][c][0.25]{$0.08$}
\psfrag{0.09}[c][c][0.25]{$0.09$}
\psfrag{0.1}[c][c][0.25]{$0.10$}

\psfrag{SUB-AAA-XXX-KKK}[l][l][0.26]{\hspace*{-.85mm}Single-user bound}
\psfrag{DDD-AAA-XXX-KKK}[l][l][0.26]{\hspace*{-1.6mm}\ac{iid} matrix}
\psfrag{OOO-AAA-XXX-KKK}[l][l][0.26]{\hspace*{-1.6mm}Projector matrix}

\psfrag{snr}[c][c][0.3]{$\mathsf{snr}$}
\psfrag{K=2}[c][c][0.3]{$\kappa=2$}
\psfrag{K=4}[c][c][0.3]{$\kappa=4$}

\psfrag{Error Probability}[c][c][0.3]{Error Probability}

}}
\caption{The error probability versus $\mathsf{snr}$ when the $\ell_0$ norm recovery scheme is employed and the compression rate is set to be $\sfr=1$. By \ac{iid} measurements, the error probability deviates from the single-user bound within an interval of $\mathsf{snr}$ while random orthogonal measurements meets the single-user bound for almost every $\mathsf{snr}$ and alphabet size. Here, the sparsity factor and estimation parameter are considered to be $\alpha = 0.1$ and $\lambda = 0.1$, respectively.}
\label{fig:13}
\end{figure}
\begin{align}
v_k^{\ell_0} = \pm \frac{2k-1}{2} a 
\end{align}
for $k\in[2:\kappa]$. The source to noise power ratio $\mathsf{snr}$ is then defined as
\begin{align}
\mathsf{snr}\coloneqq \frac{\E x^2}{\lambda_0}=\frac{\alpha a^2 (\kappa+1)(2\kappa+1)}{6 \lambda_0}.
\end{align}

Fig. \ref{fig:13} shows the \ac{rs} predicted error probability of the finite alphabet system as a function of $\mathsf{snr}$ for the unit compression rate when the $\ell_0$ norm recovery scheme is employed. The cases with $\kappa=2$ and $\kappa=4$ are considered for both random \ac{iid} and orthogonal measurements. Here, the sparsity factor is set to be $\alpha=0.1$ and $\lambda=0.1$. As the figure illustrates, for either sensing matrix, the error probability meets the single-user bound in a relatively large $\mathsf{snr}$ regime. The deviation from the single-user bound in the \ac{iid} case, moreover, occurs in a larger interval of $\mathsf{snr}$ as $\kappa$ grows. In contrast to \ac{iid} measurements, the coincidence occurs for almost every $\mathsf{snr}$ and $\kappa$ when a projector sensing matrix is employed. This observation is intuitively justified due to the orthogonality of the rows in the latter case.

\begin{figure}[t]
\centering
\resizebox{1\linewidth}{!}{
\pstool[width=.35\linewidth]{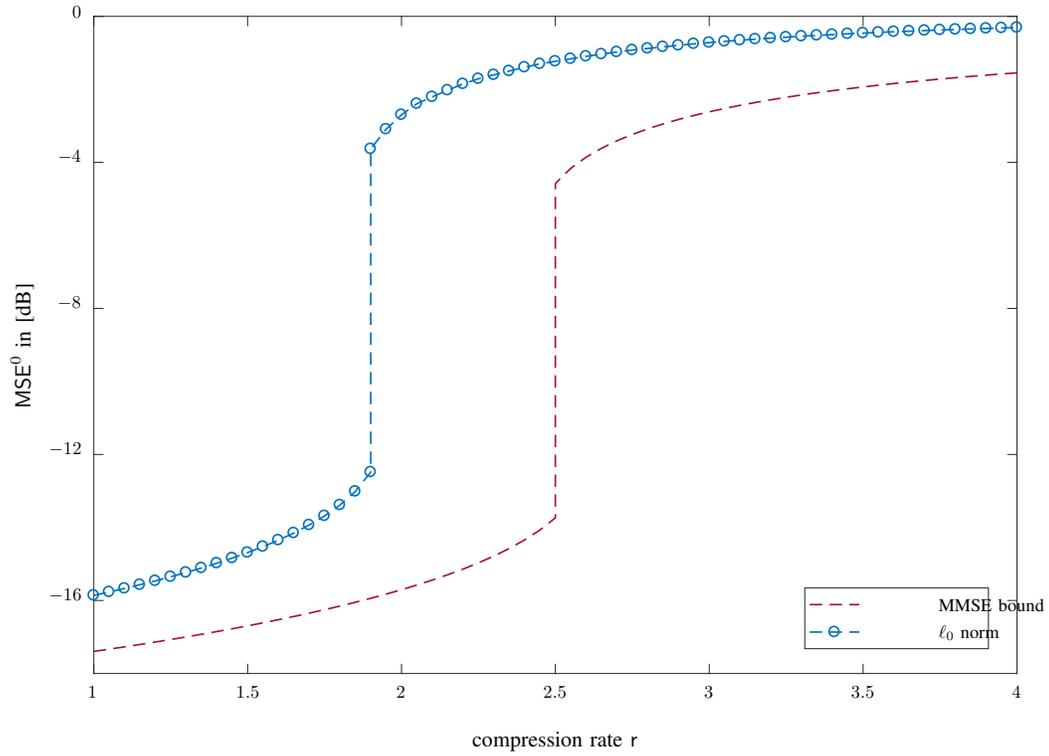}{


\psfrag{0}[b][c][0.25]{$0$\hspace*{2mm}}
\psfrag{-8}[b][c][0.25]{$-8$\hspace*{4mm}}
\psfrag{-4}[b][c][0.25]{$-4$\hspace*{4mm}}
\psfrag{-12}[b][c][0.25]{$-12$\hspace*{4mm}}
\psfrag{-16}[b][c][0.25]{$-16$\hspace*{4mm}}

\psfrag{1}[c][c][0.25]{$1$}
\psfrag{1.5}[c][c][0.25]{$1.5$}
\psfrag{2}[c][c][0.25]{$2$}
\psfrag{2.5}[c][c][0.25]{$2.5$}
\psfrag{3}[c][c][0.25]{$3$}
\psfrag{3.5}[c][c][0.25]{$3.5$}
\psfrag{4}[c][c][0.25]{$4$}
\psfrag{4.5}[c][c][0.25]{$4.5$}
\psfrag{5}[c][c][0.25]{$5$}

\psfrag{10}[r][c][0.25]{$10$}

\psfrag{MMSESNRDDDB}[c][r][0.25]{\hspace*{4mm}\ac{mmse} bound}
\psfrag{ZIMSESNRDDDB}[c][r][0.25]{\hspace*{-5mm}$\ell_0$ norm}
\psfrag{MFE-AAA-XXX-DDD}[l][l][0.25]{\ac{rs} ansatz-\ac{iid}}
\psfrag{SPO-AAA-XXX-OOO}[l][l][0.25]{FP solutions-projector}
\psfrag{MFE-AAA-XXX-OOO}[l][l][0.25]{\ac{rs} ansatz-projector}

\psfrag{iid matrix figure curve}[c][c][0.26]{\ac{iid} Matrix}
\psfrag{orthogonal matrix curve}[c][c][0.26]{Projector Matrix}

\psfrag{normMSE}[c][c][0.3]{$\mse^0$ in [dB]}

\psfrag{rate}[c][c][0.3]{compression rate $\sfr$}

\psfrag{Error Probability}[c][c][0.3]{Error Probability}

}}
\caption{Normalized \ac{mse} versus the compression rate $\sfr$ when $\mathsf{snr} =5$ dB and $a=1$. Here, $\kappa=1$ and the sparsity factor is considered to be $\alpha=0.1$. The $\ell_0$ norm recovery scheme is employed for estimation and the \ac{mse} is numerically minimized over $\lambda$ considering an \ac{iid} sensing matrix. The red curve indicates the \ac{mmse} bound on the normalized \ac{mse}. The figure demonstrates a phase transition in the normalized \ac{mse}.}
\label{fig:14}
\end{figure}

For the \ac{mse}, similar behavior as in Fig.~\ref{fig:13} is observed in Fig.~\ref{fig:14}. As the compression rate increases, considering either the \ac{mse} or error probability, a deviation from the single-user bound is observed for both random \ac{iid} and orthogonal measurements. Considering a fixed $\mathsf{snr}$, it is observed that the \ac{mse}-compression rate has a discontinuity point in which the \ac{mse} suddenly jumps from a lower value to an upper value at a certain compression rate. This phenomenon is known as the ``phase transition'' in which the macroscopic parameters suddenly change the phase within a minor variation of the setting. The compression rate in which the phase transition occurs is referred to as ``transition rate'' which increases as $\mathsf{snr}$ grows. Phase transitions were reported in the literature of communications and information theory for several problems such as turbo coding and \ac{cdma} systems \cite{tanaka2002statistical,agrawal2001turbo,tanaka2002statistical}.

Fig. \ref{fig:14} illustrates the phase transition under the \ac{rs} assumption for the case of sparse binary source, i.e., $\kappa=1$, when the source vector is sensed via a random \ac{iid} matrix and $\mathsf{snr}=5$ dB with $a=1$. For estimation, the $\ell_0$ norm recovery scheme is employed which is equivalent to LASSO in this particular example. The estimation parameter $\lambda$ is moreover optimized numerically such that the \ac{mse} is minimized. To determine each point on the curve, the \ac{rs} fixed point equations have been solved for the specific compression rate and all possible solutions have been found. Within this set of solutions, those which are physically stable have been considered and the stable solution with minimum free energy has been taken for the \ac{rs} ansatz. The normalized \ac{mse} at each compression rate has then been plotted using the \ac{rs} ansatz. As the figure shows, the normalized \ac{mse} suddenly jumps to an upper bound which converges to $\mse^0=0$ dB. For sake of comparison, we have plotted the \ac{mmse} bound as well which shows a samilar behavior at higher rates. In Fig. \ref{fig:15}, we have further plotted the normalized \ac{mse} in terms of the compression rate for the same setting with the sensing matrix replaced by a random projector matrix. The figure shows a phase transition for orthogonal measurements at a higher transition rate compared to the case with random \ac{iid} sensing. Moreover, the improvement in terms of \ac{mse} observed in Fig.~\ref{fig:15} by employing a projector matrix extends the conclusion given for the sparse Gaussian sources in \cite{vehkapera2014analysis} to the sparse finite alphabet sources as well. In order to obtain a more accurate approximation of the performance and transition point, one may investigate the free energy and entropy of the corresponding spin glass by also considering \ac{rsb} ans\"atze \cite{yoshida2007statistical}. The \ac{rs} ansatz, however, gives an accurate approximation of the \ac{mse} for the specific setup considered here.
\begin{figure}[t]
\centering
\resizebox{1\linewidth}{!}{
\pstool[width=.35\linewidth]{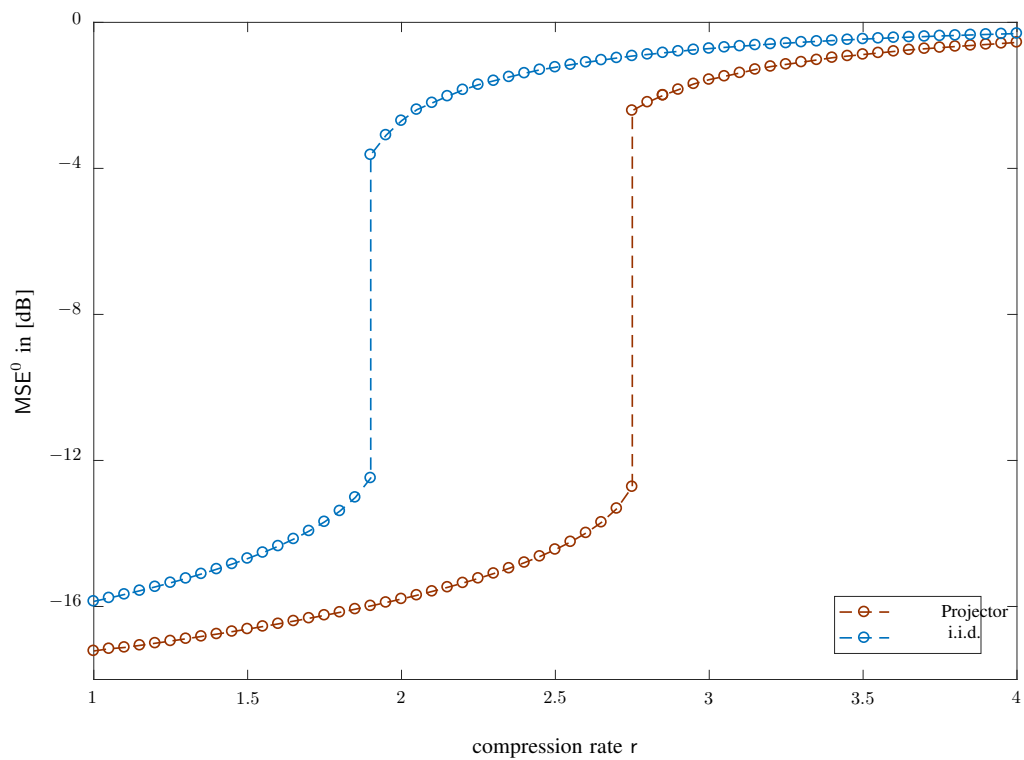}{


\psfrag{0}[b][c][0.25]{$0$\hspace*{2mm}}
\psfrag{-8}[b][c][0.25]{$-8$\hspace*{4mm}}
\psfrag{-4}[b][c][0.25]{$-4$\hspace*{4mm}}
\psfrag{-12}[b][c][0.25]{$-12$\hspace*{4mm}}
\psfrag{-16}[b][c][0.25]{$-16$\hspace*{4mm}}

\psfrag{1}[c][c][0.25]{$1$}
\psfrag{1.5}[c][c][0.25]{$1.5$}
\psfrag{2}[c][c][0.25]{$2$}
\psfrag{2.5}[c][c][0.25]{$2.5$}
\psfrag{3}[c][c][0.25]{$3$}
\psfrag{3.5}[c][c][0.25]{$3.5$}
\psfrag{4}[c][c][0.25]{$4$}
\psfrag{4.5}[c][c][0.25]{$4.5$}
\psfrag{5}[c][c][0.25]{$5$}

\psfrag{10}[r][c][0.25]{$10$}

\psfrag{MSESNR5A}[c][r][0.25]{Projector}
\psfrag{MSESNR5D}[b][r][0.25]{\hspace*{-4mm}\ac{iid}}
\psfrag{MFE-AAA-XXX-DDD}[l][l][0.25]{\ac{rs} ansatz-\ac{iid}}
\psfrag{SPO-AAA-XXX-OOO}[l][l][0.25]{FP solutions-projector}
\psfrag{MFE-AAA-XXX-OOO}[l][l][0.25]{\ac{rs} ansatz-projector}

\psfrag{iid matrix figure curve}[c][c][0.26]{\ac{iid} Matrix}
\psfrag{orthogonal matrix curve}[c][c][0.26]{Projector Matrix}

\psfrag{normMSE}[c][c][0.3]{$\mse^0$ in [dB]}

\psfrag{rate}[c][c][0.3]{compression rate $\sfr$}

\psfrag{Error Probability}[c][c][0.3]{Error Probability}

}}
\caption{$\mse^0$ in terms of the compression rate $\sfr$ for $\mathsf{snr} =5$ dB and $a=1$ considering \ac{iid} and projector sensing matrices. Here, $\alpha=0.1$ and $\kappa=1$, and the \ac{mse} is numerically minimized with respect to the estimation parameter. As the figure shows, the projector matrix enhances the reconstruction performance with respect to the both \ac{mse} and transition rate.}
\label{fig:15}
\end{figure}

\section{Conclusion}
\label{sec:conclusion}
This manuscript considered the performance of the \ac{map} estimator in the large-system limit with respect to a general distortion function. Taking a statistical mechanical approach, the replica method was employed to determine the asymptotic distortion of the system. We deviated from the earlier approaches, e.g., \cite{rangan2012asymptotic, tulino2013support,vehkapera2014analysis}, by evaluating the general replica ansatz of the corresponding spin glass. The general ansatz let us derive the \ac{rs}, as well as the $b$\ac{rsb} ansatz considering the class of rotationally invariant random matrices. The results recover the previous studies \cite{rangan2012asymptotic,tulino2013support,vehkapera2014analysis} in special cases and justifies the uniqueness of the zero temperature entropy's expression under the $b$\ac{rsb} assumption conjectured in \cite{zaidel2012vector}. The replica ansatz evaluated here led us to a more general form of the decoupling principle. In fact, invoking the general replica ansatz, it was shown that for any tuple of input-output entries, the marginal joint distribution converges to the input-output empirical distribution asymptotically. This means that in the large-system limit, the marginal joint distribution of the entries determined by any replica ansatz decouples into a set of identical joint distributions. The form of the asymptotic decoupled distribution, however, depends on the structure imposed on the ansatz. For the $b$\ac{rsb} ansatz, the vector-valued \ac{awgn} system estimated by a \ac{map} estimator decouples into single-user \ac{map}-estimated \ac{awgn} channels followed by some correlated impairment~terms which intuitively model the interference of the system and vanish by reducing the $b$\ac{rsb} assumption to \ac{rs}. The general decoupling principle justified here confirms the conjecture that decoupling is a generic property of \ac{map} estimators, since its validity relies only on the replica continuity assumption. Recent results in statistical mechanics have shown that failures in finding the exact solution via the replica method are mainly caused by the structure imposed on the ansatz, and not replica continuity \cite{talagrand2003spin,guerra2002quadratic,guerra2002thermodynamic,guerra2003broken}. The decoupling property enabled us to represent the equivalent ``replica simulator'' interpretation of the replica ansatz. In fact, the $b$\ac{rsb} fixed point equations are completely described by the statistics of the corresponding decoupled system. The $b$\ac{rsb} ansatz is therefore represented through the state evolution of a transition system which takes an initial set of ansatz parameters as the input and determines a new set of parameters via simulating the $b$\ac{rsb} decoupled system as the output.

As an example of vector-valued \ac{map}-estimated systems, we considered the noisy compressive sensing system and studied different sparse recovery schemes, including the linear, LASSO, and $\ell_0$ norm recovery schemes. The numerical investigations showed that for sparse Gaussian sources, the performance of the linear and LASSO recovery schemes are accurately approximated by the \ac{rs} ansatz within a large interval of compression rates. The \ac{rs} prediction of the $\ell_0$ norm scheme's performance, however, was observed to lack the stability within the moderate and high compression rates. As a result, the 1\ac{rsb} ansatz for this scheme was considered which deviates from the \ac{rs} ansatz significantly as the rate grows. For sparse finite alphabet sources, the \ac{rs} prediction of the error probability and \ac{mse} reported phase transitions with respect to the compression rate. The numerical results, moreover, showed a better performance of the random orthogonal measurements compared to the random \ac{iid} sensing in both the cases which were previously reported for sparse Gaussian sources in \cite{vehkapera2014analysis}. 

The current work can be pursued in several directions. As an example, the ``replica simulator'' introduced in Section \ref{sec:rep_sim} can be studied further by methods developed in the context of transition systems. The analysis may result in proposing a new framework which simplifies the evaluation of fixed point equations. Another direction is the analysis of the conditional distribution of the correlated impairment terms in the $b$\ac{rsb} decoupled system. The study can lead us to understand the necessary or sufficient conditions in which the \ac{rs} ansatz gives an accurate approximation of the system's performance. Inspired by the Sherrington-Kirkpatrick model of spin glasses, for which the full \ac{rsb} ansatz, i.e., $b\uparrow\infty$, has been proved to give a stable solution for all system parameters \cite{mezard2009information}, our conjecture is that the exact solution at all large compression rates, for the cases exhibiting \ac{rsb}, is given by a large number of breaking steps. Nevertheless, as our numerical investigations depicted, even in those cases, the \ac{rs} ansatz, or the $b$\ac{rsb} ansatz with small $b$, can give good approximations of the solution up to some moderate compression rates. The latter study, furthermore, gives us some insights about the accuracy of the approximation given by a finite number of \ac{rsb} steps. Investigating the connection between the replica simulator and message passing based algorithms is another interesting topic for future work.

\section{Acknowledgment}
The authors would like to acknowledge German Research Foundation, Deutsche Forschungsgemeinschaft (DFG), for support of this work under the Grant No. MU 3735/2-1. 

\clearpage
\appendices
\section{Proof of Proposition \ref{proposition:1}}
\label{app:a}
Starting from \eqref{eq:sm-8}, we define the moment function $\sfZ(m)$ as in \eqref{eq:sm-9a}. Therefore,
\begin{align}
\sfD^{\setW}(\bhx;\bx)=\lim_{\upbeta\uparrow\infty}\lim_{n\uparrow\infty}\lim_{h \downarrow 0} \lim_{m \downarrow 0} \frac{1}{m} \frac{1}{n} \frac{\partial}{\partial h} \log \sfZ(m). \label{eq:a-1}
\end{align}
Taking the expectation with respect to the noise term first, the moment function is extended as
\begin{subequations}
\begin{align}
\sfZ(m)&=\sfE_{\bx} \sfE_{\mA} \sfE_{\bz} \sum_{\{\bv_a\}}   e^{-\upbeta\sum_{a=1}^m\mae(\bv_a|\by,\mA)+h n\sum_{a=1}^m\sfD^{\setW(n)} (\bv_a;\bx)} \label{eq:a-2a}\\
&=\sfE_{\bx} \sfE_{\mA} \sfE_{\bz} \sum_{\{\bv_a\}}   e^{-\upbeta \sum_{a=1}^m\left\lbrace\frac{1}{2\lambda}(\bx-\bv_a)^\trp \mJ (\bx-\bv_a) + \frac{1}{\lambda}\bz^{\trp} \mA (\bx-\bv_a ) + \frac{1}{2\lambda}\norm{\bz}^2 + u(\bv_a)\right\rbrace+h n\sum_{a=1}^m\sfD^{\setW(n)} (\bv_a;\bx)} \label{eq:a-2b}\\
&\stackrel{\star}{=} \left(\frac{\lambda}{\lambda+m\upbeta\lambda_0}\right)^{\frac{k}{2}} \sfE_{\bx} \sfE_{\mA} \sum_{\{\bv_a\}}   e^{-\upbeta \left\lbrace \sum_{a,b=1}^m \btv_a^\trp \mJ \btv_b \zeta_{ab} +\sum_{a=1}^m  u(\bv_a) \right\rbrace +h n\sum_{a=1}^m\sfD^{\setW(n)} (\bv_a;\bx)} \label{eq:a-2c}
\end{align}
\end{subequations}
where $\btv_a=\bx-\bv_a$ for $a\in[1:m]$ is the unbiased\footnote{We call it unbiased since it expresses the deviation of the replicas from the source vector.} replica vector, $\mJ$ is the Gramian of the system matrix defined as $\mJ\coloneqq \mA^\trp \mA$ and satisfies the properties stated in Section \ref{sec:problem_formulation}, $\star$ comes from taking expectation over $\bz$, and the factor $\zeta_{ab}$ is defined as
\begin{align}
\zeta_{ab} \coloneqq \frac{1}{2\lambda} \left[ \boldsymbol{1} \{ a=b \}-  \frac{\lambda_0}{\lambda+m\upbeta \lambda_0} \upbeta \right] \label{eq:zeta}
\end{align}
with $\lambda_0$ being the true noise variance as specified in Section \ref{sec:problem_formulation}.
\begin{remark}
Considering \eqref{eq:a-2b}, one could drop the term $\frac{1}{2\lambda} \norm{\bz}^2$, since it plays no rule in the optimization problem \eqref{eq:int-2}. More precisely, one could redefine the Hamiltonian in \eqref{eq:int-7} to be
\begin{align}
\mae_{\mathrm{new}}(\bv|\by,\mA)=\mae(\bv|\by,\mA)-\frac{1}{2\lambda} \norm{\bz}^2
\end{align}
without loss of generality. In this case, the coefficient at the right hand side of \eqref{eq:a-2c} reduces to $1$, and $\zeta_{ab}$ reads
\begin{align}
\zeta_{ab}^{\mathrm{new}} \coloneqq \frac{1}{2\lambda} \left[ \boldsymbol{1} \{ a=b \}-  \frac{\lambda_0}{\lambda} \upbeta \right].
\end{align}
It is, however, clear from \eqref{eq:a-2c} and \eqref{eq:zeta} that as $m$ tends to zero, both the approaches result in a same result.
\end{remark}
Considering the expression in \eqref{eq:a-2c}, we define the random variable $\sfZ(m;\bx)$ as
\begin{align}
\sfZ(m;\bx)\coloneqq \sfE_{\mJ} \sum_{\{\bv_a\}}   e^{-\upbeta \left\lbrace \sum_{a,b=1}^m \btv_a^\trp \mJ \btv_b \zeta_{ab} +\sum_{a=1}^m  u(\bv_a) \right\rbrace +h n\sum_{a=1}^m\sfD^{\setW(n)} (\bv_a;\bx)}. \label{eq:a-3}
\end{align}
Consequently, the moment function is given by taking the expectation over $\sfZ(m;\bx)$ with respect to $\bx$ and multiplying with the scalar $\left(\frac{\lambda}{\lambda+m\upbeta\lambda_0}\right)^{\frac{k}{2}}$. However, we later on show that for almost all realizations of the source vector, $\sfZ(m;\bx)$ converges to a deterministic value, and therefore, the expectation with respect to $\bx$ can be dropped. For $\sfZ(m;\bx)$, we have
\begin{subequations}
\begin{align}
\sfZ(m;\bx) &= \E_{\mJ} \sum_{\{\bv_a\}}  e^{- \upbeta \sum_{a,b=1}^m \btv_a^{\trp} \mJ \btv_b \zeta_{ab}} \times e^{- \upbeta  \sum_{a=1}^m u(\bv_a) +h n\sum_{a=1}^m\sfD^{\setW(n)}(\bv_a;\bx)} \label{eq:a-4a}  \\
&= \sum_{\{\bv_a\}} \left[ \E_{\mJ} e^{- n \upbeta \tr{ \mJ \mG}} \right] \times e^{-\upbeta\sum_{a=1}^m u(\bv_a) +h n\sum_{a=1}^m\sfD^{\setW(n)}(\bv_a;\bx)} \label{eq:a-4b}
\end{align}
\end{subequations}
where $\mG_{n \times n}$ is defined as
\begin{align}
\mG\coloneqq \frac{1}{n} \sum_{a,b=1}^m \btv_b \btv_a^{\trp} \zeta_{a,b}. \label{eq:a-5}
\end{align}
Considering the eigendecomposition of the Gramian $\mJ=\mU \mD \mU^{\trp}$, the expectation in \eqref{eq:a-4b} can be expressed in terms of a spherical integral where the integral measure is the probability measure of $\mU$. Regarding the system setup specified in Section \ref{sec:problem_formulation}, the matrix $\mU$ is distributed over the orthogonal group $\setO_n$ with Haar probability distribution. Therefore, the corresponding spherical integral is the so-called ``Harish-Chandra'' or ``Itzykson \& Zuber'' integral. This integral has been extensively studied in the physics and mathematics literature, see for example \cite{harish1957differential}, \cite{itzykson1980planar} and \cite{guionnet2002large}. A brief discussion on the spherical integral and its closed form solution has been given in Appendix~\ref{app:f}. Invoking Theorem 1.2 and 1.7 in \cite{guionnet2005fourier}, as long as $\mathrm{rank}(\mG)=\mao(\sqrt{n})$\footnote{$\mao(\cdot)$ indicates the growth order with respect to composition, i.e., $\lim\limits_{n \uparrow \infty}[f(n)]^{-1}\mao(f(n))=\mathsf{K}<\infty$.}, the expectation in \eqref{eq:a-4b} can be written as
\begin{align}
\E_{\mJ} e^{- n \upbeta \tr{ \mJ \mG}} = e^{- n \left[\sum_{i=1}^n \int_{0}^{\upbeta \lambda^{\mG}_i} \rmR_{\mJ}(-2 \omega) \dif \omega \right] + \epsilon_n} \label{eq:a-6}
\end{align}with $\mG$ being defined in \eqref{eq:a-5}, and $\epsilon_n \downarrow 0$ as $n \uparrow \infty$. In order to employ the above result and substitute it in \eqref{eq:a-4b}, we need to check the rank condition.

\begin{lemma}
\label{lem:a-1}
Considering $\mG$ to be defined as in \eqref{eq:a-5}, the following argument holds.
\begin{align}
\mathrm{rank}(\mG)=\mao(\sqrt{n}).
\end{align}
\end{lemma}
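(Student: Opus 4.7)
The plan is to exhibit $\mG$ as a factored product whose rank is bounded by $m$, which is a fixed (finite) integer, hence $\mao(\sqrt{n})$ trivially.

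First I would collect the unbiased replica vectors into a single $n\times m$ matrix $\mV \coloneqq [\btv_1,\ldots,\btv_m]$ and let $\mathbf{Z}_{m\times m}$ be the coefficient matrix with entries $[\mathbf{Z}]_{ab}=\zeta_{ab}$. A short index manipulation then shows
\begin{align*}
\mG \;=\; \frac{1}{n}\sum_{a,b=1}^{m}\btv_b\,\btv_a^{\trp}\,\zeta_{ab}
\;=\; \frac{1}{n}\,\mV\,\mathbf{Z}^{\trp}\,\mV^{\trp},
\end{align*}
which is verified by comparing the $(i,j)$-entries on both sides. From the elementary fact that the rank of a matrix product is bounded by the minimum rank of its factors,
\begin{align*}
\mathrm{rank}(\mG)\;\leq\;\min\!\left\{\mathrm{rank}(\mV),\,\mathrm{rank}(\mathbf{Z})\right\}\;\leq\; m.
\end{align*}

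Since $m$ is a fixed positive integer (the number of replicas, which does not depend on $n$ throughout the moment computation; the limit $m\downarrow 0$ is taken only after the thermodynamic limit via Assumption~\ref{asp:2}), we have $\mathrm{rank}(\mG)=\mao(1)$, and a fortiori $\mathrm{rank}(\mG)=\mao(\sqrt{n})$, as required.

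I do not anticipate any real obstacle here: the lemma is essentially a rank count, and the only point worth stressing in the writeup is the bookkeeping that turns the double sum $\sum_{a,b}\btv_b\btv_a^{\trp}\zeta_{ab}$ into the sandwich $\mV\mathbf{Z}^{\trp}\mV^{\trp}$, so that the well-known bound $\mathrm{rank}(ABC)\leq\min\{\mathrm{rank}(A),\mathrm{rank}(B),\mathrm{rank}(C)\}$ can be applied. The lemma is needed only to justify invoking the spherical-integral asymptotics of \cite{guionnet2005fourier} in~\eqref{eq:a-6}, where the hypothesis $\mathrm{rank}(\mG)=\mao(\sqrt{n})$ is required.
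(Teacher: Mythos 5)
Your proof is correct and follows essentially the same route as the paper: both write $\mG$ as a sandwich $\mtV(\cdot)\mtV^{\trp}$ with $\mtV=[\btv_1,\ldots,\btv_m]$, bound $\mathrm{rank}(\mG)\leq m$ by the product-rank inequality, and conclude from the boundedness of $m$ that $\mathrm{rank}(\mG)=\mao(1)\subset\mao(\sqrt{n})$. The only cosmetic difference is that you keep the middle factor as the generic coefficient matrix $\mathbf{Z}$, whereas the paper substitutes the explicit form $\tfrac{1}{2\lambda}\bigl[\mI_m-\upbeta\tfrac{\lambda_0}{\lambda+m\upbeta\lambda_0}\mone_m\bigr]$, and the paper phrases the boundedness of $m$ via Assumption~\ref{asp:2} (a constant $\mathsf{K}$ with $m\leq\mathsf{K}$ after continuation) rather than, as you do, by noting that $m$ is a fixed integer during the moment computation before the $m\downarrow 0$ limit is taken.
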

\begin{proof}
First, we rewrite $\mG$ as
\begin{subequations}
\begin{align}
\mG&= \frac{1}{2\lambda n} \left[ \sum_{a=1}^m \btv_a \btv_a^{\trp} - \upbeta \frac{\lambda_0}{\lambda+m\upbeta \lambda_0} (\sum_{a=1}^m \btv_a) (\sum_{b=1}^m \btv_b)^{\trp} \right] \label{eq:a-7a}\\
&=\frac{1}{2\lambda n}  \mtV (\mI_m- \upbeta \frac{\lambda_0}{\lambda+m\upbeta \lambda_0}  \mathbf{1}_m) \mtV^\trp  \label{eq:a-7b}
\end{align}
\end{subequations}
where $\mtV = [ \btv_1, \ldots , \btv_m]$ is an $n \times m$ matrix with the columns being the unbiased replicas. Then, by considering \eqref{eq:a-7b}, it is obvious that $\mG$ could be, at most, of rank $m$. As Assumption \ref{asp:2} indicates, $\sfZ(m)$ analytically continues to the real axis, and the limit with respect to $m$ is taken in a right neighborhood of $0$. Therefore, for all values of $n$ there exists a constant $\mathsf{K} \in \setR^+$, such that $m \leq \mathsf{K}$. Consequently, one can write
\begin{align}
\lim_{n \uparrow \infty}\frac{\mathrm{rank}(\mG)}{\sqrt{n}} \leq \lim_{n \uparrow \infty} \frac{m}{\sqrt{n}} \leq \lim_{n \uparrow \infty} \frac{\mathsf{K}}{\sqrt{n}}=0 \label{eq:a-8}
\end{align}
which concludes that $\mathrm{rank}(\mG)=\mao(\sqrt{n})$.
\end{proof}

Lemma \ref{lem:a-1} ensures that \eqref{eq:a-6} always holds; therefore, noting the fact that $\mG$ has only $m$ non-zero eigenvalues, the expectation in the right hand side of \eqref{eq:a-4b} reduces to
\begin{align}
\E_{\mJ} e^{- n \upbeta \tr{ \mJ \mG}}  = e^{-n \mg (\mT \mQ^{\mathrm{v}})+\epsilon_n} \label{eq:a-9}
\end{align}
where the function $\mg(\cdot)$ is defined as
\begin{align}
\mg(\mM) \coloneqq  \int_{0}^{{\upbeta}} \Tr \{\mM \mathrm{R}_{\mJ}(-2\omega\mM)\} \dif \omega \label{eq:a-10}
\end{align}
for some square matrix $\mM$, $\mT$ is an $m \times m$ deterministic matrix given by
\begin{align}
\mT \coloneqq \frac{1}{2\lambda} \left[ \mI_m- \frac{\lambda_0}{\lambda+m\upbeta \lambda_0} \upbeta \mathbf{1}_m \right], \label{eq:a-11}
\end{align}
and $\mQ^{\mathrm{v}}$ is the $m \times m$ ``correlation matrix'' defined as
\begin{align}
&\mQ^{\mathrm{v}} = \frac{1}{n} \mtV^{\trp} \mtV. \label{eq:a-12}
\end{align}

\begin{remark}
Note that although $\mQ^{\mathrm{v}}$ is symmetric, $\mT\mQ^{\mathrm{v}}$ is not a symmetric matrix, in general; however, due to the symmetry of $\mG$, the eigenvalues of $\mT\mQ^{\mathrm{v}}$ are real, and therefore, the sequence of integrals over the real axis in \eqref{eq:a-6} exists for all indices.
\end{remark}

By substituting \eqref{eq:a-9} in \eqref{eq:a-4b}, $\sfZ(m;\bx)$ is given as
\begin{align}
\sfZ(m;\bx) = \sum_{ \{ \bv_a \} } e^{-n\mg (\mT \mQ^{\mathrm{v}})- \upbeta \sum_{a=1}^m u(\bv_a) + h n\sum_{a=1}^m\sfD^{\setW(n)}(\bv_a;\bx) + \epsilon_n} . \label{eq:a-13}
\end{align}
In order to determine the sum in \eqref{eq:a-13}, we follow the technique which has been employed in \cite{muller2008vector} and \cite{zaidel2012vector}. We split the space of all replicas into subshells defined by the correlation matrices in which all the vectors of replicas in each subshell have a same correlation matrix. More precisely, for a given source vector $\bx$, the subshell of the matrix $\mQ_{m \times m}$ is defined as
\begin{align}
\setS(\mQ) = \{ \bv_1, \ldots , \bv_m | (\bx-\bv_a)^{\trp} (\bx-\bv_b) =nq_{ab} \} \label{eq:a-14}
\end{align}with $q_{ab} =[\mQ]_{ab}$ denoting the entry $(a,b)$ of $\mQ$. The sum in \eqref{eq:a-13} is determined first over each subshell, and then, over all the subshells as the following.
\begin{subequations}
\begin{align}
\sfZ(m;\bx) &= \sum_{ \{ \bv_a \} } \left[\int e^{-n\mg(\mT\mQ)} \delta(\mQ^{\mathrm{v}}-\mQ) \dif \mQ\right] e^{-\upbeta\sum_{a=1}^m u(\bv_a) + h n\sum_{a=1}^m\sfD^{\setW(n)}(\bv_a;\bx) + \epsilon_n} \label{eq:a-15a}\\
&= \int e^{-n\mg (\mT\mQ)+ \epsilon_n} \left[ \sum_{ \{ \bv_a \} }  \delta(\mQ^{\mathrm{v}}-\mQ) e^{-\upbeta\sum_{a=1}^m u(\bv_a) + h n\sum_{a=1}^m\sfD^{\setW(n)}(\bv_a;\bx) } \right] \dif \mQ \label{eq:a-15b} \\
&= \int e^{-n\left[ \mg (\mT\mQ)- \mai(\mQ)\right] +\epsilon_n} \dif \mQ \label{eq:a-15c}
\end{align}
\end{subequations}
where $\dif \mQ \coloneqq \prod_{a,b=1}^{m} \dif q_{ab}$, the integral is taken over $\setR^{m\times m}$,
\begin{align}
\delta(\mQ^{\mathrm{v}}-\mQ) \coloneqq \prod_{a,b=1}^m \delta ( \btv_a^{\trp} \btv_b - nq_{ab}),  \label{eq:a-16}
\end{align}
and the term $e^{n \mai(\mQ)}$ which determines the probability weight of the subshell $\setS(\mQ)$ is defined as
\begin{align}
e^{n \mai(\mQ)} &\coloneqq \sum_{ \{ \bv_a \} }  \delta(\mQ^{\mathrm{v}}-\mQ) e^{- \upbeta \sum_{a=1}^m u(\bv_a) + h n\sum_{a=1}^m\sfD^{\setW(n)}(\bv_a;\bx)}. \label{eq:a-17}
\end{align}

\begin{remark}
One may define the subshells over the transferred correlation matrix $\mT\mQ$ instead of correlation matrix $\mQ$. In this case the subshells over $\mQ$ defined in \eqref{eq:a-14} only rotate in the $m$-dimensional space with respect to $\mT$. The rotation, however, does not have any impact on the analysis.
\end{remark}

The last step is to determine $e^{n \mai(\mQ)}$. To do so, we represent the Dirac impulse function using its inverse Laplace transform. By defining $s_{ab}$ as the complex frequency corresponding to $\delta (\btv_a^{\trp} \btv_b - nq_{ab})$,
\begin{align}
\delta ( \btv_a^{\trp} \btv_b - nq_{ab}) &= \int e^{s_{ab} (\btv_a^{\trp} \btv_b - nq_{ab})} \frac{\dif s_{ab}}{2 \pi \mathrm{j}} \label{eq:a-18}
\end{align}
where the integral is taken over the imaginary axis $\setJ=(t-\mathrm{j} \infty, t+\mathrm{j} \infty)$, for some $t \in \setR$. Consequently, by defining the frequency domain correlation matrix $\mS$ to be an $m \times m$ matrix with $[\mS]_{ab}=s_{ab}$, \eqref{eq:a-16} reads
\begin{subequations}
\begin{align}
\delta (\mQ^\vv -\mQ)&= \int e^{\sum_{a,b=1}^m s_{ab} (\btv_a^{\trp} \btv_b - nq_{ab})} \dif \mS  \label{eq:a-19a}\\
&= \int \left[ e^{-n\tr{\mS^{\trp} \mQ}} \right]  e^{\sum_{a,b=1}^m s_{ab} \btv_a^\trp \btv_b}  \dif \mS \label{eq:a-19b}
\end{align}
\end{subequations}
with $\dif \mS$ being defined as $\displaystyle \dif \mS \coloneqq \prod_{a,b=1}^{m} \frac{\dif s_{ab}}{2 \pi \mathrm{j}}$, and the integral being taken over $\setJ^{m\times m}$. Substituting in \eqref{eq:a-17}, $e^{n \mai(\mQ)}$ reduces to
\begin{subequations}
\begin{align}
e^{n \mai(\mQ)} &= \int e^{-n\tr{\mS^\trp \mQ}} \sum_{ \{ \bv_a \} } e^{\sum_{a,b=1}^m s_{ab}\btv_a^{\trp} \btv_b - \upbeta\sum_{a=1}^m u(\bv_a) +  h n\sum_{a=1}^m\sfD^{\setW(n)}(\bv_a;\bx)} \dif \mS \label{eq:a-20a} \\
&= \int e^{-n\tr{\mS^\trp \mQ}} \ e^{n \mam(\mS)}  \dif \mS \label{eq:a-20b}
\end{align}
\end{subequations}
with $\mam(\mS)$ being defined as
\begin{align}
\mam(\mS) = \frac{1}{n} \log \sum_{ \{ \bv_a \} } e^{\sum_{a,b=1}^m s_{ab}\btv_a^{\trp} \btv_b - \upbeta\sum_{a=1}^m u(\bv_a) +  h n\sum_{a=1}^m\sfD^{\setW(n)}(\bv_a;\bx)}. \label{eq:a-21}
\end{align}
Thus, $\sfZ(m;\bx)$ finally reads
\begin{align}
\sfZ(m;\bx) = \int \int e^{-n \{ \mg(\mT \mQ)+\tr{\mS^\trp \mQ}- \mam(\mS)\}+\epsilon_n}  \dif \mS \dif \mQ. \label{eq:a-22}
\end{align}

Consequently, one needs to evaluate the expectation of $\sfZ(m;\bx)$ with respect to $\bx$, in order to determine the moment function $\sfZ(m)$. However, $\sfZ(m;\bx)$ for almost all realizations of $\bx$ converges to a deterministic asymptotic as $n \uparrow \infty$, and thus, the expectation with respect to $\bx$ can be dropped. To show the latter statement, we note that the only term in \eqref{eq:a-22} which depends on $\bx$ is $\mam(\mS)$. Therefore, it is sufficient to study the convergence of $\mam(\mS)$. Lemma \ref{lem:a-2} justifies this property of $\mam(\mS)$ using the law of large numbers, and the decoupling property of the functions $u(\cdot)$ and $\sfd(\cdot;\cdot)$.

\begin{lemma}
\label{lem:a-2}
Consider the system specified in Section \ref{sec:problem_formulation}, and let Assumption \ref{asp:2} hold. Then, as $n \uparrow \infty$, $\mam(\mS)$ defined in \eqref{eq:a-21} is given by
\begin{align}
\mam(\mS)=\E \left\lbrace (1-\eta) \left[ \log \sum_{\bvv} e^{(\bxx-\bvv)^\trp \mS (\bxx-\bvv)-\upbeta u(\bvv)} \right]
+\eta \left[ \log \sum_{\bvv} e^{(\bxx-\bvv)^\trp \mS (\bxx-\bvv)-\upbeta u(\bvv) + h \eta^{-1} \sfd(\bvv; \bxx)} \right] \right\rbrace
 \label{eq:a-23}
\end{align}where $\bvv_{m\times 1} \in \setX^m$, $\bxx_{m\times 1}$ is a vector with all the elements being the random variable $x$ which is distributed with the source distribution $\rmp_x$, the expectation is taken over $\rmp_x$, and $\sfd(\bvv; \bxx)$ is defined as $\sfd(\bvv; \bxx)\coloneqq\sum_{a=1}^m \sfd(\vv_a; \mathrm{x}_a)$.
\end{lemma}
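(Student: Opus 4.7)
The proof strategy relies on the fact that every term appearing in the exponent of \eqref{eq:a-21} decouples over the entries $i \in [1:n]$. First I would expand the quadratic form entry-wise: with $\btv_a = \bx - \bv_a$, one has $\btv_a^{\trp} \btv_b = \sum_{i=1}^n (x_i - v_{a,i})(x_i - v_{b,i})$. Defining $\bxx_i \coloneqq x_i \mathbf{1}_m$ and $\bvv_i \coloneqq (v_{1,i},\ldots,v_{m,i})^{\trp} \in \setX^m$, the matrix identity $\sum_{a,b} s_{ab}(x_i-v_{a,i})(x_i-v_{b,i}) = (\bxx_i-\bvv_i)^{\trp}\mS(\bxx_i-\bvv_i)$ follows immediately. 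Using the decoupling property of $u$ assumed in Section~\ref{sec:problem_formulation}, and writing $\sfD^{\setW(n)}(\bv_a;\bx) = |\setW(n)|^{-1}\sum_{w \in \setW(n)} \sfd(v_{a,w};x_w)$, the entire exponent becomes a sum over $i \in [1:n]$ of per-entry contributions that depend only on $(x_i,\bvv_i)$.

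Given this decoupling, the sum over the replicas factorizes as $\sum_{\{\bv_a\}} (\cdot) = \prod_{i=1}^n \sum_{\bvv \in \setX^m} (\cdot)$. Therefore
\begin{align*}
\mam(\mS) = \frac{1}{n}\sum_{i=1}^n \sfL_i(x_i),
\end{align*}
where for $i \notin \setW(n)$
\begin{align*}
\sfL_i(x_i) = \log \sum_{\bvv} e^{(\bxx_i-\bvv)^{\trp}\mS(\bxx_i-\bvv) - \upbeta u(\bvv)},
\end{align*}
and for $i \in \setW(n)$ the same expression holds with the additional term $\frac{n}{|\setW(n)|} h \sum_a \sfd(\vv_a;x_i)$ in the exponent. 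Splitting the sample mean according to $[1:n] = \setW(n)\cup \setW(n)^{\rm c}$ and using \eqref{eq:sys-7} to substitute $n/|\setW(n)| \to \eta^{-1}$, one obtains two empirical averages of i.i.d.\ functionals of $x_i\sim\rmp_x$, weighted by the asymptotic fractions $\eta$ and $1-\eta$ of the index set.

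The final step is an application of the strong law of large numbers to each of these two empirical averages, which yields exactly the expectation written in \eqref{eq:a-23}. The main technical subtlety is twofold. First, $\mS$ takes values on the imaginary axis in $\setJ^{m \times m}$, so the summand $e^{(\bxx_i-\bvv)^{\trp}\mS(\bxx_i-\bvv)}$ is complex; however, one may either invoke analytic continuation of the final formula (consistent with Assumption~\ref{asp:2}) or, following the standard replica route, evaluate the inverse-Laplace integral by saddle-point methods that will select a real $\mS$ at the end. Second, one needs integrability of $\sfL_i(x)$ with respect to $\rmp_x$ for LLN to apply; this is guaranteed because $u(\cdot)\geq 0$, the support $\setX$ is such that the postulated prior $\rmp_{\bx}(\bv) \propto e^{-u(\bv)}$ is normalizable (per Section~\ref{sec:problem_formulation}), and the finite-$m$ replica sum is therefore bounded. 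The hardest part is the subtlety with complex $\mS$, but since \eqref{eq:a-23} is a statement about an analytic function of $\mS$, it suffices to verify it for real $\mS$ in a neighborhood of the saddle point and extend by analyticity — which is precisely how the replica prescription is understood in the sequel of the proof of Proposition~\ref{proposition:1}.
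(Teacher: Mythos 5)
Your proposal is correct and follows essentially the same route as the paper: decouple the exponent entry-wise using $u(\bv_a)=\sum_i u(v_{a,i})$ and $\sfD^{\setW(n)}(\bv_a;\bx)=|\setW(n)|^{-1}\sum_{w\in\setW(n)}\sfd(v_{a,w};x_w)$, factor the replica sum into a product over $i\in[1:n]$, split $[1:n]$ into $\setW(n)$ and its complement, and invoke the law of large numbers with $|\setW(n)|/n\to\eta$. The only deviation is your explicit discussion of integrability and of the complex-$\mS$ issue; the paper leaves these unaddressed (treating them as part of the replica heuristic already absorbed by Assumption~\ref{asp:2}), so your remarks are useful caveats rather than a different argument.
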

\begin{proof}
Consider the decoupling property of the functions $u(\cdot)$ and $\sfd(\cdot;\cdot)$. Define the vector $\bvv_{m\times 1}$ over the support $\setX^m$, and the coefficients $\{ \mathsf{w}_i \}$ for $i \in [1:n]$ as 
\begin{equation}
\mathsf{w}_i=
\begin{cases}
0 & \text{if}\ i\notin\setW(n) \\
\abs{\setW(n)}^{-1} & \text{if}\ i\in\setW(n).
\end{cases} \label{eq:a-24}
\end{equation}
Then, $\mam(\mS)$ reads
\begin{subequations}
\begin{align}
\mam(\mS) &= \frac{1}{n} \log \sum_{ \{ \bv_a \} } \prod_{i=1}^n e^{\sum_{a,b=1}^m  s_{ab} (x_i-v_{ai}) (x_i-v_{bi}) - \upbeta \sum_{a=1}^m  u(v_{ai}) + h n \sum_{a=1}^m \mathsf{w}_i \sfd(v_{ai}; x_{i})} \label{eq:a-25a} \\
&= \frac{1}{n} \log \prod_{i=1}^n \sum_{\bvv} e^{\sum_{a,b=1}^m  s_{ab} (x_{i}-\vv_a) ( x_{i}-\vv_b ) - \upbeta \sum_{a=1}^m  u(\vv_{a}) + h n \sum_{a=1}^m \mathsf{w}_i \sfd(\vv_{a}; x_{i})} \label{eq:a-25b} \\
&= \frac{1}{n} \left[ \sum_{i\notin \setW} \mam_0(\mS;x_{i}) + \sum_{i\in \setW} \mam_1(\mS;x_{i}) \right] \label{eq:a-25c}
\end{align}
\end{subequations}
where the functions $\mam_0(\cdot;\cdot)$ and $\mam_1(\cdot;\cdot)$ are defined as
\begin{subequations}
\begin{align}
\mam_0(\mS;x_i) &= \log \sum_{\bvv} e^{(\bxx-\bvv)^\trp \mS (\bxx-\bvv)-\upbeta u(\bvv)} \label{eq:a-26a}\\
\mam_1(\mS;x_i) &= \log \sum_{\bvv} e^{(\bxx-\bvv)^\trp \mS (\bxx-\bvv)-\upbeta u(\bvv) + h \tfrac{n}{\abs{\setW(n)}} \sfd(\bvv; \bxx)} \label{eq:a-26b}
\end{align}
\end{subequations}
where $\bxx_{m \times 1}$ is a vector with all the elements being $x_i$, and we define $\sfd(\bvv; \bxx)\coloneqq\sum_{a=1}^m \sfd(\vv_a; \mathrm{x}_a)$ for compactness. As Assumption \ref{asp:2} suggests the limits with respect to $m$ and $n$ can be exchanged in \eqref{eq:a-1}; thus, one can consider the asymptotics of $\mam(\mS)$ for a given $m$ when $n$ tends to its large limit. Regarding the fact that $\bx$ is collected from an \ac{iid} distribution, the term in the right hand side of \eqref{eq:a-25c} converges to the expectation over the distribution $\rmp_x$ due to the law of large numbers; more precisely, as $n\uparrow\infty$
\begin{subequations}
\begin{align}
\frac{1}{n} \sum_{i\notin \setW} \mam_0(\mS;x_{i}) &= \left[1-\frac{\abs{\setW(n)}}{n}\right] \frac{1}{n-\abs{\setW(n)}} \sum_{i\notin \setW} \mam_0(\mS;x_{i}) \longrightarrow (1-\eta) \ \sfE_{x} \hspace{.3mm} \mam_0(\mS;x)  \label{eq:a-27a}\\
\frac{1}{n} \sum_{i\in \setW} \mam_1(\mS;x_{i}) &= \left[\frac{\abs{\setW(n)}}{n}\right] \frac{1}{\abs{\setW(n)}} \sum_{i\in \setW} \mam_1(\mS;x_{i}) \longrightarrow \eta \ \sfE_{x} \hspace{.3mm} \mam_1(\mS;x)  \label{eq:a-27b}
\end{align}
\end{subequations}
with $\eta$ being defined as in \eqref{eq:sys-7}. Substituting \eqref{eq:a-27a} and \eqref{eq:a-27b} in \eqref{eq:a-25c}, Lemma \ref{lem:a-2} is concluded.
\end{proof}

\begin{remark}
Considering Lemma \ref{lem:a-2}, it eventually says that the probability weight $e^{n \mai(\mQ)}$ for a given correlation matrix $\mQ$ converges to a deterministic weight as $n$ tends to infinity. This statement equivalently states that for almost any given realization of the source vector, the correlation matrix converges to its expectation. In fact, considering the correlation matrix $\mQ^\vv$, as defined in \eqref{eq:a-12}, the entries are functions of $\bx$, and therefore, variate randomly due to the source distribution for a given $n$. Lemma \ref{lem:a-2}, however, indicates that, as $n\uparrow \infty$, the entries converge to some deterministic asymptotics for almost any realization of $\bx$. As an alternative approach, one could study the convergence property of the correlation matrix $\mQ^\vv$ by means of the law of large numbers first, and then, conclude Lemma \ref{lem:a-2} by rewriting $\mam(\mS)$ in proper way, and replacing it with the expectation using the fact that the probability weight $e^{n \mai(\mQ)}$ needs to converge deterministically as $n\uparrow\infty$. Nevertheless, the approach taken here seems to be more straightforward.
\end{remark}

Using Lemma \ref{lem:a-2}, we drop the expectation with respect to $\bx$ in \eqref{eq:a-2c}. Replacing in \eqref{eq:a-1}, the asymptotic distortion is found by taking the limits. As Assumption \ref{asp:2} suggests, we exchange the order of the limits and take the limit with respect to $n$ at first. Denoting that the probability measure defined with $e^{n \mai(\mQ)} \dif \mQ$ satisfy the large deviation properties \cite{dembo2009large}, we can use the saddle point approximation to evaluate the integral in \eqref{eq:a-22} which says that as $n \uparrow \infty$
\begin{align}
\sfZ(m) = \left(\frac{\lambda}{\lambda+m\upbeta\lambda_0}\right)^{\frac{k}{2}} \int \int e^{-n \{ \mg(\mT \mQ)+\tr{\mS^\trp \mQ}- \mam(\mS)\}}  \dif \mS \dif \mQ \doteq \mathsf{K}_n e^{-n \{ \mg(\mT \tilde{\mQ})+\tr{\tilde{\mS}^\trp \tilde{\mQ}}- \mam(\tilde{\mS})\}}, \label{eq:a-28}
\end{align}
where we drop $\epsilon_n$ given in \eqref{eq:a-22} regarding the fact that it vanishes in the large limit. Here, $(\tilde{\mQ},\tilde{\mS})$ is the saddle point of the integrand function's exponent, $\mathsf{K}_n$ is a bounded coefficient, and $\doteq$ indicates the asymptotic equivalency in exponential scale defined as the following.
\begin{definition}
\normalfont
The functions $a(\cdot)$ and $b(\cdot)$ defined over the non-bounded set $\setX$ are said to be asymptotically equivalent in exponential scale, if
\begin{align}
\lim_{n  \uparrow\infty} \log |\frac{a(x_n)}{b(x_n)}| =0. \label{eq:a-29}
\end{align}
for an unbounded sequence $\{x_n \in \setX \}$.
\end{definition}
As $n \uparrow \infty$, the $m$th moment can be replaced with its asymptotic equivalent in \eqref{eq:a-1}. Consequently, by substituting the equivalent term and exchanging the limits' order, we have
\begin{subequations}
\begin{align}
\sfD^{\setW}(\bhx;\bx)&=\lim_{\upbeta\uparrow\infty}\lim_{m\downarrow 0}\lim_{h \downarrow 0} \lim_{n\uparrow\infty} \frac{1}{m}  \frac{\partial}{\partial h} \left[ -\mg(\mT \tilde{\mQ})-\tr{\tilde{\mS}^\trp \tilde{\mQ}}+\mam(\tilde{\mS}) + \frac{\log \mathsf{K}_n}{n} \right] \label{eq:a-30a}\\
&\stackrel{\star}{=}\lim_{\upbeta\uparrow\infty}\lim_{m\downarrow 0}\lim_{h \downarrow 0} \frac{1}{m}  \frac{\partial}{\partial h} \mam(\tilde{\mS}) \label{eq:a-30b}\\
&= \lim_{\upbeta\uparrow\infty}\lim_{m\downarrow 0} \E \frac{\sum_{\bvv} \sfd(\bvv; \bxx) e^{(\bxx-\bvv)^\trp \tilde{\mS} (\bxx-\bvv)-\upbeta u(\bvv)}}{m \sum_{\bvv} e^{(\bxx-\bvv)^\trp \tilde{\mS} (\bxx-\bvv)-\upbeta u(\bvv)}} \label{eq:a-30c}
\end{align}
\end{subequations}
where $\star$ comes from the fact that $\mathsf{K}_n$ is bounded, and $\mg(\mT \tilde{\mQ})$ and $\tr{\tilde{\mS}^\trp \tilde{\mQ}}$ are not functions of $h$.

The saddle point $(\tilde{\mQ},\tilde{\mS})$ is found by letting the derivatives of the exponent zero. Using the standard definition $\displaystyle \left[ \frac{\partial}{\partial \mM} \right]_{ab} \coloneqq \frac{\partial}{\partial [\mM]_{ab}}$, the saddle point is given by the following fixed point equations.
\begin{subequations}
\begin{align}
\frac{\partial}{\partial \mQ} \left[ \mg (\mT \mQ) + \tr{\mS \mQ} - \mam(\mS) \right]|_{(\tilde{\mQ},\tilde{\mS})}&=0 \label{eq:a-31a} \\
\frac{\partial}{\partial \mS} \left[ \mg (\mT \mQ) + \tr{\mS \mQ} - \mam(\mS) \right]|_{(\tilde{\mQ},\tilde{\mS})}&=0. \label{eq:a-31b}
\end{align}
\end{subequations}
\eqref{eq:a-31a} reduces to 
\begin{align}
\tilde{\mS}= -\upbeta \mT \rmR_{\mJ}(-2\upbeta \mT \tilde{\mQ}), \label{eq:a-32}
\end{align}
and \eqref{eq:a-31b} results in
\begin{align}
\tilde{\mQ}= \E \frac{\sum_{\bvv}(\bxx - \bvv)(\bxx-\bvv)^{\trp} e^{(\bxx-\bvv)^{\trp} \tilde{\mS} (\bxx-\bvv)-\upbeta u(\bvv)}}{\sum_{\bvv} e^{(\bxx-\bvv)^{\trp} \tilde{\mS} (\bxx-\bvv)-\upbeta u(\bvv)}}. \label{eq:a-33}
\end{align}
By replacing \eqref{eq:a-32} in \eqref{eq:a-30c} and \eqref{eq:a-33}, the expression for the asymptotic distortion and the saddle point correlation matrix can be considered as expectations over a conditional Boltzmann-Gibbs distribution $\rmp^\upbeta_{\bvv|\bxx}$ defined as
\begin{align}
\rmp^\upbeta_{\bvv|\bxx}(\bvv|\bxx)\coloneqq \frac{e^{-\upbeta \left[(\bxx-\bvv)^{\trp} \mT \rmR_{\mJ}(-2\upbeta \mT \tilde{\mQ}) (\bxx-\bvv)+ u(\bvv)\right]}}{\sum_{\bvv} e^{-\upbeta\left[(\bxx-\bvv)^{\trp} \mT \rmR_{\mJ}(-2\upbeta \mT \tilde{\mQ}) (\bxx-\bvv)+ u(\bvv)\right]}} \label{eq:a-34}
\end{align}
which simplifies the expressions in \eqref{eq:a-30c} and \eqref{eq:a-33} to those given in Proposition \ref{proposition:1}.

In general the fixed point equation \eqref{eq:a-33} can be satisfied with several saddle points, and therefore, multiple asymptotic distortions might be found. In this case, one should note that the valid solution is the one which minimizes the free energy of the spin glasses at the zero temperature, i.e., $\upbeta\uparrow\infty$. Using the $m$th moment, the free energy of the system reads
\begin{subequations}
\begin{align}
\sfF(\upbeta) &= -\lim_{n\uparrow\infty}\lim_{h \downarrow 0} \lim_{m \downarrow 0} \frac{1}{m} \frac{1}{\upbeta} \frac{1}{n} \log \sfZ(m) \label{eq:a-35a}\\
&\stackrel{\star}{=} \lim_{m\downarrow 0}\lim_{h \downarrow 0} \frac{1}{\upbeta m}  \left[ \mg(\mT \tilde{\mQ})+\tr{\tilde{\mS}^\trp \tilde{\mQ}}-\mam(\tilde{\mS}) \right] \label{eq:a-35b} \\
&\stackrel{\dagger}{=} \lim_{m\downarrow 0} \frac{1}{m} \left[\frac{1}{\upbeta} \mg(\mT \tilde{\mQ}) - \tr{\tilde{\mQ}^\trp \mT \rmR_{\mJ}(-2\upbeta \mT \tilde{\mQ}) } - \frac{1}{\upbeta} \sfE\log \sum_{\bvv} e^{-\upbeta (\bxx-\bvv)^\trp \mT \rmR_{\mJ}(-2\upbeta \mT \tilde{\mQ})(\bxx-\bvv)-\upbeta u(\bvv)} \right]  \label{eq:a-35c}
\end{align}
\end{subequations}
where $\star$ comes from the facts that $\mathsf{K}_n$ is bounded and the limits with respect to $m$ and $n$ are supposed to exchange, and $\dagger$ is deduced from \eqref{eq:a-32} and Lemma \ref{lem:a-2}. Finally by considering the definition of $\mg(\cdot)$, Proposition \ref{proposition:1} is concluded.

\newpage
\section{Proof of Proposition \ref{proposition:3}}
\label{app:b}
Starting from Assumption \ref{asp:3}, the replica correlation matrix is
\begin{align}
\mQ=  \frac{\chi }{\upbeta} \mI_m + q  \mone_m \label{eq:b-1}
\end{align}
for some non-negative real $\chi$ and $q$. Considering Definition \ref{def:replica_spin}, the Hamiltonian of the spin glass of replicas is given by
\begin{align}
\mae^\sfR(\bvv|\bxx)= (\bxx-\bvv)^{\trp} \mT \rmR_{\mJ}(-2 \upbeta \mT \mQ) (\bxx-\bvv) + u(\bvv) \label{eq:b-2}
\end{align}
with $\mT$ being defined in \eqref{eq:rep-5}. Denoting $\mR\coloneqq\mT \rmR_{\mJ}(-2 \upbeta \mT \mQ)$, it is shown in Appendix \ref{app:e} that $\mR$ has the same structure as the correlation matrix; thus, one can write
\begin{align}
\mR= e \mI_m - \upbeta \frac{f^2}{2} \mone_m, \label{eq:b-3}
\end{align}
for some real $f$ and $e$ which are functions of $\chi$ and $q$. Denoting the eigendecomposition of $\mQ$ as $\mQ=\mV \mD^\rmQ \mV^{\trp}$, we have\footnote{Note that $\mQ$ is full-rank and symmetric.}
\begin{subequations}
\begin{align}
\mT=\mV \mD^\rmT \mV^{\trp} \label{eq:b-4a} \\
\mR=\mV \mD^\rmR \mV^{\trp} \label{eq:b-4b}
\end{align}
\end{subequations}
where $\mD^\rmQ$, $\mD^\rmT$ and $\mD^\rmR$ are the diagonal matrices of eigenvalues. Therefore, we have
\begin{align}
\mD^\rmR= \mD^\rmT \rmR_\mJ(-2\upbeta \mD^\rmT \mD^\rmQ) \label{eq:b-5}
\end{align}
which equivalently states that for $a\in[1:m]$
\begin{align}
\lambda^\mR_a= \lambda^\mT_a \rmR_\mJ(-2\upbeta \lambda^\mT_a \lambda^\mQ_a) \label{eq:b-6}
\end{align}
with $\lambda^\mR_a$, $\lambda^\mQ_a$ and $\lambda^\mT_a$ being the eigenvalue of $\mR$, $\mQ$ and $\mT$ corresponding to the $a$th column of $\mV$. The matrices $\mR$, $\mQ$ and $\mT$ have two different corresponding eigenvalues, namely $\displaystyle \left\lbrace e-\upbeta m\frac{f^2}{2}, \frac{\chi}{\upbeta} +m q, \frac{1}{2\lambda}\left[ 1-\frac{m \upbeta\lambda_0}{\lambda+m\upbeta \lambda_0} \right] \right\rbrace$ which occur with multiplicity $1$ and $\displaystyle \left\lbrace e, \frac{\chi}{\upbeta}, \frac{1}{2\lambda} \right\rbrace$ which occur with multiplicity $m-1$. Substituting in \eqref{eq:b-6} and taking the limit when $m \downarrow 0$, $e$ and $f$ are found as
\begin{subequations}
\begin{align}
e &= \frac{1}{2\lambda} \rmR_{\mJ}(- \frac{\chi}{\lambda}), \label{eq:b-7a} \\
f^2 &= \frac{1}{\lambda^2} \frac{\partial}{\partial \chi} \left\lbrace \left[ \lambda_0 \chi - \lambda q \right] \rmR_{\mJ}(- \frac{\chi}{\lambda}) \right\rbrace. \label{eq:b-7b}
\end{align}
\end{subequations}
To pursue the analysis, we rewrite the Hamiltonian using \eqref{eq:b-3}
\begin{align}
\mae^\sfR(\bvv|\bxx)= e \norm{\bxx-\bvv}^2 - \upbeta \frac{f^2}{2} \tr{(\bxx-\bvv)(\bxx-\bvv)^{\trp} \mone_m } + u(\bvv), \label{eq:b-8}
\end{align}
and therefore, the partition function $\maz^{\sfR}(\upbeta|\bxx)$ is given by
\begin{align}
\maz^\sfR(\upbeta|\bxx)= \sum_{\{\vv_a \} } e^{- \upbeta e \norm{\bxx-\bvv}^2 + \upbeta^2 \frac{f^2}{2} \tr{(\bxx-\bvv)(\bxx-\bvv)^{\trp} \mone_m } -\upbeta u(\bvv)}. \label{eq:b-9}
\end{align}
Using the Gaussian integral, we have
\begin{align}
e^{\upbeta^2 \tfrac{f^2}{2} \tr{(\bxx-\bvv)(\bxx-\bvv)^{\trp} \mone_m }} = \int e^{-\upbeta f \left[ \sum_{a=1}^m (x-\vv_a) \right] z} \md z, \label{eq:b-10}
\end{align}
and thus, the partition function reduces to
\begin{align}
\maz^\sfR(\upbeta|\bxx)= \int \left[ \sum_{v} e^{- \upbeta \left[ e (x-v)^2 + f (x-v)z + u(v)\right]} \right]^m \md z \label{eq:b-11}
\end{align}
with $v \in \setX$. The parameters of the spin glass of replicas are then determined using the partition function. Starting with the normalized free energy, it reads
\begin{align}
\sfF^\sfR(\upbeta,m)= -\frac{1}{\upbeta m} \sfE_{x} \log \int \left[ \sum_{v} e^{- \upbeta \left[ e (x-v)^2 + f (x-v)z + u(v)\right]} \right]^m \md z. \label{eq:b-12}
\end{align}
Noting that $\int \md z$ takes expectation over the Gaussian distribution, one can use the Riesz equality in \eqref{eq:sm-7} to show that when $m$ varies in a vicinity of $0$
\begin{align}
\sfF^\sfR(\upbeta,m) &= -\frac{1}{\upbeta} \sfE \int \log \sum_{v}  e^{- \upbeta \left[ e (x-v)^2 + f (x-v)z + u(v)\right]} \md z + \epsilon_m \label{eq:b-13}
\end{align}
where $\epsilon_m$ tends to $0$ as $m\downarrow 0$ and the expectation is taken over $x\sim\rmp_x$. Consequently, as $m \downarrow 0$ the normalized free energy reads
\begin{align}
\sfF^\sfR(\upbeta) = \lim_{m\downarrow 0} \sfF^\sfR(\upbeta,m) = -\frac{1}{\upbeta} \sfE \int \log \sum_{v}  e^{- \upbeta \left[ e (x-v)^2 + f (x-v)z + u(v)\right]} \md z \label{eq:b-13.1}
\end{align}

The next parameters to be specified are $\chi$ and $q$. By determining the conditional distribution $\rmp_{\bvv|\bxx}^\upbeta$ and substituting in \eqref{eq:rep-6}, the following fixed point equations are deduced
\begin{subequations}
\begin{align}
\left[ \frac{\chi}{\upbeta}+q \right] m &= \sfE_{\bxx} \sum_{\bvv} \norm{\bxx-\bvv}^2 \ \rmp_{\bvv|\bxx}^\upbeta(\bvv|\bxx), \label{eq:b-14a} \\
\left[ \frac{\chi}{\upbeta}+m q \right] m &= \sfE_{\bxx} \sum_{\bvv} \tr{(\bxx-\bvv)(\bxx-\bvv)^\trp \mone_m} \ \rmp_{\bvv|\bxx}^\upbeta(\bvv|\bxx). \label{eq:b-14b}
\end{align}
\end{subequations}
where \eqref{eq:b-14a} and \eqref{eq:b-14b} are found by taking the trace and sum over all the entries of the both sides of \eqref{eq:rep-6}, respectively. One can directly evaluate the right hand sides of \eqref{eq:b-14a} and \eqref{eq:b-14b}; however, considering \eqref{eq:b-9}, it is straightforward to show that
\begin{subequations}
\begin{align}
&\sfE_{\bxx} \sum_{\bvv} \norm{\bxx-\bvv}^2 \ \rmp_{\bvv|\bxx}^\upbeta(\bvv|\bxx) = m \frac{\partial}{\partial e} \sfF^\sfR(\upbeta,m), \label{eq:b-15a} \\
&\sfE_{\bxx} \sum_{\bvv} \tr{(\bxx-\bvv)(\bxx-\bvv)^\trp \mone_m} \ \rmp_{\bvv|\bxx}^\upbeta(\bvv|\bxx)= -\frac{m}{\upbeta f} \frac{\partial}{\partial f} \sfF^\sfR(\upbeta,m). \label{eq:b-15b}
\end{align}
\end{subequations}
After substituting and taking the limit $m \downarrow 0$, the fixed point equations finally read
\begin{subequations}
\begin{align}
\frac{\chi}{\upbeta}+q &= \E \int  \frac{ \sum_{v} (v-x)^2 e^{- \upbeta \left[ e (x-v)^2 + f (x-v)z + u(v)\right]}}{ \sum_{v}  e^{- \upbeta \left[ e (x-v)^2 + f (x-v)z + u(v)\right]}} \md z \label{eq:b-16a} \\
\chi &= \frac{1}{f} \E \int \frac{ \sum_{v} (v-x) z  e^{- \upbeta \left[ e (x-v)^2 + f (x-v)z + u(v)\right]}}{ \sum_{v}  e^{- \upbeta \left[ e (x-v)^2 + f (x-v)z + u(v)\right]} } \md z. \label{eq:b-16b}
\end{align}
\end{subequations}
with $f$ and $e$ defined in \eqref{eq:b-7a} and \eqref{eq:b-7b}.

In order to determine the replicas' average distortion defined in \eqref{eq:rep-10} regarding the distortion function $\sfd(\cdot;\cdot)$, we replace the Hamiltonian by
\begin{align}
\mae_h^\sfR(\bvv|\bxx)=\mae^\sfR(\bvv|\bxx)+h\sum_{a=1}^m \sfd(\vv_a;x) \label{eq:b-17}
\end{align}
with $\mae^\sfR(\bvv|\bxx)$ given in \eqref{eq:b-8}, and take the steps as in \eqref{eq:b-9}-\eqref{eq:b-12} to find the modified form of the normalized free energy, i.e. $\sfF^\sfR(\upbeta,h,m)$. The replicas' average distortion is then evaluated as
\begin{subequations}
\begin{align}
\sfD^{\sfR}(\upbeta,m)&=\frac{\partial}{\partial h} \sfF^\sfR(\upbeta,h,m)|_{h=0} \label{eq:b-18a}\\
&=\E \int \frac{ \sum_{v} \sfd(v;x)  e^{- \upbeta \left[ e (x-v)^2 + f (x-v)z + u(v)\right]} }{ \sum_{v}  e^{- \upbeta \left[ e (x-v)^2 + f (x-v)z + u(v)\right]} }\md z. \label{eq:b-18b}
\end{align}
\end{subequations}
which does not depend on $m$, and thus, taking the limit $m \downarrow 0$ is not needed.

The last step is to take the zero temperature limit. Using the Laplace method of summation, as $\upbeta\uparrow\infty$ the fixed point equations reduce to
\begin{subequations}
\begin{align}
q &= \E \int (\rmg-x)^2 \ \md z \label{eq:b-19a}, \\
\chi &= \frac{1}{f} \E \int (\rmg-x) z  \ \md z. \label{eq:b-19b}
\end{align}
\end{subequations}
with $\rmg$ being defined as
\begin{align}
\rmg \coloneqq \arg \min_{v} \left[ e (x-v)^2 + f (x-v)z + u(v)\right]. \label{eq:b-20}
\end{align}
Taking the same approach, the replicas' average distortion at zero temperature reads
\begin{align}
\sfD^{\setW}&= \E \int \sfd(\rmg;x) \ \md z. \label{eq:b-21}
\end{align}

In order to avoid multiple solutions, we need to find the normalized free energy of the corresponding spin glass as given in Proposition \ref{proposition:1}. In fact, the fixed point equations in \eqref{eq:b-19a} and \eqref{eq:b-19b} may have different solutions, and therefore, the several asymptotics for the distortion can be obtained. In this case, the fixed point solution which minimizes the zero temperature free energy of the system and its corresponding asymptotic distortion are taken. Substituting in Proposition \ref{proposition:1}, the free energy of the corresponding spin glass at the inverse temperature $\upbeta$ is found as
\begin{align}
\sfF(\upbeta)=\frac{1}{2\lambda} \left[ \int_0^1 \rmF^{\upbeta}(\omega) \dif \omega -  \rmF^{\upbeta} (1) \right] +\sfF^{\sfR}(\upbeta) \label{eq:b-22}
\end{align}
where the function $\rmF^{\upbeta}(\cdot)$ is defined as
\begin{align}
\rmF^{\upbeta}(\omega) = \frac{\chi}{\upbeta} \rmR_{\mJ}(-\frac{\chi}{\lambda} \omega) + \left[q-\frac{\lambda_0}{\lambda} \chi \right] \frac{\dif}{\dif \omega} \left[ \omega \rmR_{\mJ}(-\frac{\chi}{\lambda} \omega) \right]. \label{eq:b-23}
\end{align}
By taking the limit as $\upbeta \uparrow \infty$, the zero temperature free energy reads
\begin{align}
\sfF^0=\frac{1}{2\lambda} \left[ \int_0^1 \rmF^{\infty}(\omega) \dif \omega -  \rmF^{\infty} (1) \right] + \E \int e (x-\rmg)^2 + f (x-\rmg)z + u(\rmg) \ \md z \label{eq:b-24}
\end{align}
with $\rmg$ being defined as in \eqref{eq:b-20} and $\rmF^{\infty}(\omega)\coloneqq \lim_{\upbeta\uparrow\infty}\rmF^{\upbeta}(\omega)$. By defining $\lams\coloneqq \left[2e\right]^{-1}$ and $\lams_0\coloneqq \left[4e^2\right]^{-1} f^2$ Proposition \ref{proposition:3} is concluded.

\newpage
\section{Proof of Proposition \ref{proposition:5}}
\label{app:c}
We take the same approach as in Appendix \ref{app:b}. Considering the replica correlation matrix to be of the form
\begin{align}
\mQ= \frac{\chi}{\upbeta} \mI_m+ p \mI_{\frac{m \upbeta}{\mu}} \otimes \mone_{\frac{\mu}{\upbeta}} +q \mone_m , \label{eq:c-1}
\end{align}
for some non-negative real $\chi$, $p$, $q$, and $\mu$, we need to evaluate the parameters of the spin glass of replicas defined in Definition \ref{def:replica_spin}. Starting with the Hamiltonian,
\begin{align}
\mae^\sfR(\bvv|\bxx)= (\bxx-\bvv)^{\trp} \mT \rmR_{\mJ}(-2 \upbeta \mT \mQ) (\bxx-\bvv) + u(\bvv) \label{eq:c-2}
\end{align}
where $\mT$ is given in \eqref{eq:rep-5}. As discussed in Appendix \ref{app:e}, for a given $\mu$ the matrix $\mR\coloneqq\mT \rmR_{\mJ}(-2 \upbeta \mT \mQ)$ is of the following form
\begin{align}
\mR= e \mI_m - \upbeta \frac{g^2}{2} \mI_{\frac{m \upbeta}{\mu}} \otimes \mone_{\frac{\mu}{\upbeta}} - \upbeta \frac{f^2}{2} \mone_m  \label{eq:c-3}
\end{align}
where $e$, $g$ and $f$ can be found in terms of $\chi$, $p$ and $q$. Using the eigendecomposition of $\mR$, $\mQ$ and $\mT$, it is then straightforward to show that for $a\in[1:m]$
\begin{align}
\lambda^\mR_a= \lambda^\mT_a \rmR_\mJ(-2\upbeta \lambda^\mT_a \lambda^\mQ_a) \label{eq:c-4}
\end{align}
where $\lambda^\mR_a$, $\lambda^\mQ_a$ and $\lambda^\mT_a$ denote the $a$th eigenvalues of $\mR$, $\mQ$ and $\mT$, respectively. Regarding the structure of $\mQ$ and $\mR$, there are three different sets of corresponding eigenvalues for $\mR$, $\mQ$ and $\mT$, namely\\
\begin{itemize}
\item $\displaystyle \left\lbrace e-\mu \frac{g^2}{2}-m\upbeta \frac{f^2}{2}, \frac{\chi+\mu p}{\upbeta} + m q, \frac{1}{2\lambda}\left[ 1-\frac{m\upbeta \lambda_0}{\lambda+m\upbeta \lambda_0} \right] \right\rbrace$ with multiplicity $1$,
\item $\displaystyle \left\lbrace e-\mu \frac{g^2}{2}, \frac{\chi+\mu p}{\upbeta}, \frac{1}{2\lambda}\right\rbrace$ with multiplicity $\displaystyle m \upbeta\mu^{-1}-1$, and
\item $\displaystyle \left\lbrace e, \frac{\chi}{\upbeta}, \frac{1}{2\lambda} \right\rbrace$ with multiplicity $\displaystyle m-m \upbeta\mu^{-1}$.\\
\end{itemize}
Thus, by substituting in \eqref{eq:c-4} and taking the limit when $m \downarrow 0$, $e$, $g$ and $f$ are given as
\begin{subequations}
\begin{align}
e &= \frac{1}{2\lambda} \rmR_{\mJ}(- \frac{\chi}{\lambda}), \label{eq:c-5a} \\
g^2 &= \frac{1}{\lambda\mu} \left[ \rmR_{\mJ}(- \frac{\chi}{\lambda}) - \rmR_{\mJ}(- \frac{\chi+\mu p}{\lambda}) \right], \label{eq:c-5b} \\
f^2 &= \frac{1}{\lambda^2} \frac{\partial}{\partial \chi} \left\lbrace \left[ \lambda_0 (\chi+\mu p) - \lambda q \right] \rmR_{\mJ}(- \frac{\chi+\mu p}{\lambda}) \right\rbrace. \label{eq:c-5c}
\end{align}
\end{subequations}
The next step is to evaluate the partition function. Substituting \eqref{eq:c-3} in \eqref{eq:c-2}, the Hamiltonian reads
\begin{align}
\mae^\sfR(\bvv|\bxx)= e \norm{\bxx-\bvv}^2 - \upbeta \frac{g^2}{2} \tr{(\bxx-\bvv)(\bxx-\bvv)^{\trp} \mI_{\frac{m \upbeta}{\mu}} \otimes \mone_{\frac{\mu}{\upbeta}} } - \upbeta \frac{f^2}{2} \tr{(\bxx-\bvv)(\bxx-\bvv)^{\trp} \mone_m } + u(\bvv). \label{eq:c-6}
\end{align}
The partition function is then determined as in \eqref{eq:rep-8}. Substituting in \eqref{eq:rep-8} and using the equalities
\begin{subequations}
\begin{align}
e^{\tfrac{1}{2} \upbeta^2 f^2  \tr{(\bxx-\bvv)(\bxx-\bvv)^{\trp} \mone_m }} &= \int e^{-\upbeta f \left[ \sum\limits_{a=1}^m (x-\vv_a) \right] z_0} \md z_0, \label{eq:c-7a} \\
e^{ \tfrac{1}{2} \upbeta^2 g^2 \tr{(\bxx-\bvv)(\bxx-\bvv)^{\trp} \mI_{\frac{m \upbeta}{\mu}} \otimes \mone_{\frac{\mu}{\upbeta}}}} &= \prod_{k=0}^{\Xi} \int e^{-\upbeta g \left[ \sum\limits_{a=\varrho_k}^{\breve{\varrho}_k} (x-\vv_a) \right] z_1} \md z_1, \label{eq:c-7b}
\end{align}
\end{subequations}
where $\varrho_k=k \mu\upbeta^{-1}+1$, $\breve{\varrho}_k=(k+1)\mu\upbeta^{-1}$, and $\Xi=m\upbeta\mu^{-1}-1$, the partition function is found as
\begin{align}
\maz^\sfR(\upbeta;\mu|\bxx)= \int \left[ \int \left[ \sum_{v} e^{- \upbeta \left[ e (x-v)^2 + (fz_0+gz_1) (x-v) + u(v)\right]} \right]^{\frac{\mu}{\upbeta}} \md z_1 \right]^{\frac{m\upbeta}{\mu}} \md z_0 \label{eq:c-8}
\end{align}
with $v \in \setX$ where we denoted $\mu$ in the argument of the partition function to indicate that the expression is determined for a given $\mu$. The normalized free energy of the spin glass of replicas then reads
\begin{align}
\sfF^\sfR(\upbeta,m;\mu)= -\frac{1}{\upbeta m} \sfE_{x} \log \int \left[ \int \left[ \sum_{v} e^{- \upbeta \left[ e (x-v)^2 + (fz_0+gz_1) (x-v) + u(v)\right]} \right]^{\frac{\mu}{\upbeta}} \md z_1 \right]^{\frac{m\upbeta}{\mu}} \md z_0. \label{eq:c-9}
\end{align}
Using the Riesz equality and taking the limit $m \downarrow 0$, the normalized free energy reduces to
\begin{align}
\sfF^\sfR(\upbeta;\mu) = \lim_{m\downarrow 0} \sfF^\sfR(\upbeta,m;\mu) = -\frac{1}{\mu} \sfE \int \log \left\lbrace \int \left[ \sum_{v} e^{- \upbeta \left[ e (x-v)^2 + (fz_0+gz_1) (x-v) + u(v)\right]} \right]^{\frac{\mu}{\upbeta}} \md z_1 \right\rbrace \md z_0. \label{eq:c-10}
\end{align}
In order to find the fixed point equations, we use \eqref{eq:rep-6}; therefore,
\begin{subequations}
\begin{align}
\left[ \frac{\chi}{\upbeta}+q+p \right] m &= \sfE_{\bxx} \sum_{\bvv} \norm{\bxx-\bvv}^2 \ \rmp_{\bvv|\bxx}^\upbeta(\bvv|\bxx), \label{eq:c-11a} \\
\left[ \frac{\chi}{\upbeta}+\frac{\mu p}{\upbeta}+\frac{\mu q}{\upbeta} \right] m &= \sfE_{\bxx} \sum_{\bvv} \tr{(\bxx-\bvv)(\bxx-\bvv)^{\trp} \mI_{\frac{m \upbeta}{\mu}} \otimes \mone_{\frac{\mu}{\upbeta}}} \ \rmp_{\bvv|\bxx}^\upbeta(\bvv|\bxx), \label{eq:c-11b}\\
\left[ \frac{\chi}{\upbeta}+\frac{\mu p}{\upbeta}+m q \right] m&= \sfE_{\bxx} \sum_{\bvv} \tr{(\bxx-\bvv)(\bxx-\bvv)^\trp \mone_m} \ \rmp_{\bvv|\bxx}^\upbeta(\bvv|\bxx) \label{eq:c-11c}
\end{align}
\end{subequations}
where \eqref{eq:c-11a}, \eqref{eq:c-11b} and \eqref{eq:c-11c} are concluded by taking the trace, sum over the diagonal blocks and sum over all the entries of the both sides of \eqref{eq:rep-6}, respectively. To evaluate the right hand sides of \eqref{eq:c-11a}-\eqref{eq:c-11c}, we take the alternative approach and express the expectations as
\begin{subequations}
\begin{align}
&\sfE_{\bxx} \sum_{\bvv} \norm{\bxx-\bvv}^2 \ \rmp_{\bvv|\bxx}^\upbeta(\bvv|\bxx) = m \frac{\partial}{\partial e} \sfF^\sfR(\upbeta,m;\mu), \label{eq:c-12a} \\
&\sfE_{\bxx} \sum_{\bvv} \tr{(\bxx-\bvv)(\bxx-\bvv)^{\trp} \mI_{\frac{m \upbeta}{\mu}} \otimes \mone_{\frac{\mu}{\upbeta}}} \ \rmp_{\bvv|\bxx}^\upbeta(\bvv|\bxx) = -\frac{m}{\upbeta g} \frac{\partial}{\partial g} \sfF^\sfR(\upbeta,m;\mu), \label{eq:c-12b}\\
&\sfE_{\bxx} \sum_{\bvv} \tr{(\bxx-\bvv)(\bxx-\bvv)^\trp \mone_m} \ \rmp_{\bvv|\bxx}^\upbeta(\bvv|\bxx)= -\frac{m}{\upbeta f} \frac{\partial}{\partial f} \sfF^\sfR(\upbeta,m;\mu). \label{eq:c-12c}
\end{align}
\end{subequations}
Taking the derivatives and limit $m \downarrow 0$, the fixed point equations finally reduce to
\begin{subequations}
\begin{align}
\frac{\chi}{\upbeta}+q+p &= \E \int \frac{ \sum_{v} (v-x)^2 e^{- \upbeta \left[  e (x-v)^2 + (fz_0+gz_1) (x-v) + u(v)\right]} }{\sum_{v}  e^{- \upbeta \left[  e (x-v)^2 + (fz_0+gz_1) (x-v) + u(v) \right]} } \tilde{\Lambda}^\upbeta \ \md z_1 \md z_0 \label{eq:c-13a} \\
\chi+\mu p+\mu q &= \frac{1}{g} \E \int \frac{ \sum_{v} (v-x) z_1  e^{- \upbeta \left[  e (x-v)^2 + (fz_0+gz_1) (x-v) + u(v)\right]} }{ \sum_{v}  e^{- \upbeta \left[  e (x-v)^2 + (fz_0+gz_1) (x-v) + u(v)\right]} } \tilde{\Lambda}^\upbeta \ \md z_1 \md z_0, \label{eq:c-13b}\\
\chi+\mu p &= \frac{1}{f} \E \int \frac{ \sum_{v} (v-x) z_0  e^{- \upbeta \left[  e (x-v)^2 + (fz_0+gz_1) (x-v) + u(v)\right]} }{ \sum_{v}  e^{- \upbeta \left[  e (x-v)^2 + (fz_0+gz_1) (x-v) + u(v)\right]} } \tilde{\Lambda}^\upbeta \ \md z_1 \md z_0 \label{eq:c-13c}
\end{align}
\end{subequations}
with $\tilde{\Lambda}^\upbeta\coloneqq \left[ \int \Lambda^\upbeta \md z_1\right]^{-1} \Lambda^\upbeta$ and $\Lambda^\upbeta$ being defined as
\begin{align}
\Lambda^\upbeta\coloneqq  \left[ \sum_{v} e^{- \upbeta \left[  e (x-v)^2 + (fz_0+gz_1) (x-v) + u(v)\right]} \right]^{\frac{\mu}{\upbeta}}.   \label{eq:c-14}
\end{align}
The replicas' average distortion regarding the distortion function $\sfd(\cdot;\cdot)$ is further determined by modifying the Hamiltonian as
\begin{align}
\mae_h^\sfR(\bvv|\bxx)=\mae^\sfR(\bvv|\bxx)+h\sum_{a=1}^m \sfd(\vv_a;x) \label{eq:c-15}
\end{align}
with $\mae^\sfR(\bvv|\bxx)$ given in \eqref{eq:c-6}, and taking the steps as in \eqref{eq:c-6}-\eqref{eq:c-9} to find the modified form of the normalized free energy, i.e. $\sfF^\sfR(\upbeta,h,m;\mu)$. The replicas' average distortion then reads
\begin{subequations}
\begin{align}
\sfD^{\sfR}(\upbeta;\mu)&= \lim_{m\downarrow0}  \frac{\partial}{\partial h} \sfF^\sfR(\upbeta,h,m;\mu)|_{h=0} \label{eq:c-16a}\\
&= \E \int \frac{ \sum_{v} \sfd(v;x) e^{- \upbeta \left[  e (x-v)^2 + (fz_0+gz_1) (x-v) + u(v)\right]} }{\sum_{v}  e^{- \upbeta \left[  e (x-v)^2 + (fz_0+gz_1) (x-v) + u(v) \right]} } \tilde{\Lambda}^\upbeta \ \md z_1 \md z_0. \label{eq:c-16b}
\end{align}
\end{subequations}
The analysis is concluded by taking the zero temperature limit. As $\upbeta\uparrow\infty$, \eqref{eq:c-13a}-\eqref{eq:c-13c} read
\begin{subequations}
\begin{align}
q + p &= \E \int (\rmg-x)^2 \tilde{\Lambda} \md z_1 \md z_0 \label{eq:c-17a}, \\
\chi+ \mu q + \mu p &= \frac{1}{g} \E \int (\rmg-x) z _1 \tilde{\Lambda} \md z_1 \md z_0, \label{eq:c-17b} \\
\chi+ \mu p &= \frac{1}{f} \E \int (\rmg-x) z _0 \tilde{\Lambda} \md z_1 \md z_0, \label{eq:c-17c} 
\end{align}
\end{subequations}
where $\rmg$ is defined as
\begin{align}
\rmg \coloneqq \arg \min_{v} \left[ e (x-v)^2 + (fz_0+gz_1) (x-v) + u(v)\right] \label{eq:c-18}
\end{align}
and $\tilde{\Lambda}\coloneqq \left[\int \Lambda \md z_1 \right]^{-1} \Lambda$ with $\Lambda$ denoting
\begin{subequations}
\begin{align}
\Lambda &\coloneqq \lim_{\upbeta \uparrow \infty} \Lambda^\upbeta \label{eq:c-19a} \\ 
&=  e^{- \mu  \left[ e (x-\rmg)^2 + (fz_0+gz_1) (x-\rmg) + u(\rmg)\right]}.   \label{eq:c-19b}
\end{align}
\end{subequations}
Moreover, the asymptotic distortion for a given $\mu$ reads
\begin{align}
\sfD^{\setW}&= \E \int \sfd(\rmg;x) \tilde{\Lambda} \md z_1 \md z_0. \label{eq:c-20}
\end{align}

\eqref{eq:c-10} as well as \eqref{eq:c-17a}-\eqref{eq:c-20} are determined in terms of $\mu$. Moreover, for a given $\mu$, multiple solution to the fixed point equations can be found. Proposition \ref{proposition:1} suggests us to choose the solution which minimizes the free energy. Therefore, one needs to find the optimal $\mu$, and its corresponding $\chi$, $p$ and $q$, such that the free energy meets its minimum value. As the second law of thermodynamics is satisfied at any inverse temperature, we should initially search for the optimal $\mu$ considering a given $\upbeta$. We, then, find the corresponding $\chi$, $p$, and $q$ which minimize the zero temperature free energy. Using Proposition \ref{proposition:1}, the free energy at the inverse temperature $\upbeta$ for a given $\mu$ is written as
\begin{align}
\sfF(\upbeta;\mu)=\frac{1}{2\lambda} \left[ \int_0^1 \rmF^{\upbeta}(\omega;\mu) \dif \omega -  \rmF^{\upbeta} (1;\mu) \right] +\sfF^{\sfR}(\upbeta;\mu) \label{eq:c-21}
\end{align}
where the function $\rmF^{\upbeta}(\cdot;\mu)$ is defined as
\begin{align}
\rmF^{\upbeta}(\omega;\mu) = \frac{1}{\mu} \frac{\dif}{\dif \omega} \int_{\chi \omega}^{\left[\chi+\mu p\right] \omega} \rmR_{\mJ}(-\frac{t}{\lambda} ) \dif t + \frac{\chi}{\upbeta} \rmR_{\mJ}(-\frac{\chi}{\lambda} \omega) + \left[q-\lambda_0 \frac{\chi+\mu p}{\lambda} \right] \frac{\dif}{\dif \omega} \left[ \omega \rmR_{\mJ}(-\frac{\chi+\mu p}{\lambda} \omega) \right]. \label{eq:c-22}
\end{align}
To find $\mu$ at the thermal equilibrium, we let
\begin{align}
\frac{\partial}{\partial \mu} \sfF(\upbeta;\mu) = 0 \label{eq:c-23}
\end{align}
Using the equalities \eqref{eq:c-5a}-\eqref{eq:c-5c}, \eqref{eq:c-23} concludes that $\mu$ satisfies
\begin{align}
\frac{1}{2\lambda} \left[ p \rmR_{\mJ}(-\frac{\chi}{\lambda}) + q \rmR_{\mJ}(-\frac{\chi}{\lambda}) - q \rmR_{\mJ}(-\frac{\chi+\mu p}{\lambda}) \right] = \sfF^{\sfR}(\upbeta;\mu) + \frac{1}{2\lambda\mu} \int_{\chi}^{\chi+\mu p} \rmR_{\mJ}(-\frac{t}{\lambda}) \dif t \nonumber \\
+ \E \frac{1}{\upbeta} \int \log \left[ \sum_{v} e^{- \upbeta \left[  e (x-v)^2 + (fz_0+gz_1) (x-v) + u(v)\right]} \right] \tilde{\Lambda}^\upbeta \md z_1 \md z_0 \label{eq:c-24}
\end{align}
which as $\upbeta \uparrow \infty$ reduces to
\begin{align}
\frac{1}{2\lambda} \left[ p \rmR_{\mJ}(-\frac{\chi}{\lambda}) + q \rmR_{\mJ}(-\frac{\chi}{\lambda}) - q \rmR_{\mJ}(-\frac{\chi+\mu p}{\lambda}) \right] = \frac{1}{2\lambda\mu} \int_{\chi}^{\chi+\mu p} \rmR_{\mJ}(-\frac{t}{\lambda}) \dif t + \E \frac{1}{\mu} \int \log \tilde{\Lambda} \ \tilde{\Lambda} \md z_1 \md z_0. \label{eq:c-25}
\end{align}
Denoting the solution to \eqref{eq:c-25} by $\mu^\star$, the free energy of the corresponding spin glass is then given as $\sfF(\upbeta)=\sfF(\upbeta;\mu^\star)$ which at the zero temperature reads
\begin{align}
\sfF^0=\frac{1}{2\lambda} \left[ \int_0^1 \rmF^{\infty}(\omega) \dif \omega -  \rmF^{\infty} (1) \right] - \frac{1}{\mu} \E \int \log \left[ \int \Lambda \md z_1 \right] \ \md z_0 \label{eq:c-27}
\end{align}
with
\begin{align}
\rmF^{\infty}(\omega)\coloneqq \lim_{\upbeta\uparrow\infty}\rmF^{\upbeta}(\omega;\mu^\star).
\end{align}
Finally by defining $\lams\coloneqq \left[2e\right]^{-1}$, $\lams_0\coloneqq \left[4e^2\right]^{-1} f^2$ and $\lams_1\coloneqq \left[4e^2\right]^{-1} g_1^2$, Proposition \ref{proposition:5} is concluded.

\newpage
\section{Proof of Proposition \ref{proposition:7}}
\label{app:d}
The strategy here is to extend the approach in Appendix \ref{app:c} to a general number of breaking steps. Following Appendix \ref{app:e} and considering $\mQ$ as
\begin{align}
\mQ= \frac{\chi}{\upbeta} \mI_m+ \sum_{\kappa=1}^b p_\kappa \mI_{\frac{m \upbeta}{\mu_\kappa}} \otimes \mone_{\frac{\mu_\kappa}{\upbeta}} +q \mone_m , \label{eq:d-1}
\end{align}
the frequency domain correlation matrix $\mR\coloneqq\mT \rmR_{\mJ}(-2 \upbeta \mT \mQ)$ is written as
\begin{align}
\mR= e \mI_m - \upbeta \sum_{\kappa=1}^b \frac{g_\kappa^2}{2} \mI_{\frac{m \upbeta}{\mu_\kappa}} \otimes \mone_{\frac{\mu_\kappa}{\upbeta}} - \upbeta \frac{f^2}{2} \mone_m  \label{eq:d-2}
\end{align}
considering a given vector $\bmu=\left[ \mu_1, \ldots, \mu_b \right]^\trp$, such that
\begin{align}
\mu_{\kappa+1} = \vartheta_{\kappa+1} \mu_{\kappa}, \label{eq:d-2.1}
\end{align}
with $\{ \vartheta_\kappa \}$ being non-negative integers, $e$, $f$ and $\{g_\kappa\}$ are then found in terms of $\chi$, $q$ and $\{p_\kappa\}$ by letting
\begin{align}
\lambda^\mR_a= \lambda^\mT_a \rmR_\mJ(-2\upbeta \lambda^\mT_a \lambda^\mQ_a) \label{eq:d-3}
\end{align}
for $a\in[1:m]$ where $\lambda^\mR_a$, $\lambda^\mQ_a$ and $\lambda^\mT_a$ denote the $a$th corresponding eigenvalues of $\mR$, $\mQ$ and $\mT$. As long as the constraint in \eqref{eq:d-2.1} holds, $\mQ$, $\mT$ and $\mR$ have $b+2$ different sets of corresponding eigenvalues specified by\\
\begin{itemize}
\item $\displaystyle \left\lbrace e-\sum_{\kappa=1}^b \mu_\kappa \frac{g_\kappa^2}{2}-m\upbeta \frac{f^2}{2}, \frac{\chi}{\upbeta} + \sum_{\kappa=1}^b p_\kappa \frac{\mu_\kappa}{\upbeta} + m q, \frac{1}{2\lambda}\left[ 1-\frac{m\upbeta\lambda_0}{\lambda+m\upbeta\lambda_0} \right] \right\rbrace$ with multiplicity $\Theta_{b+1}(m)=1$,
\item $\displaystyle \left\lbrace e-\sum_{\kappa=1}^b \mu_\kappa \frac{g_\kappa^2}{2}, \frac{\chi}{\upbeta} + \sum_{\kappa=1}^b p_\kappa \frac{\mu_\kappa}{\upbeta}, \frac{1}{2\lambda} \right\rbrace$ with multiplicity $\Theta_b(m)=m\upbeta \mu_b^{-1} -1$,
\item $\displaystyle \left\lbrace e-\sum_{\varsigma=1}^{\kappa} \mu_\varsigma \frac{g_\varsigma^2}{2}, \frac{\chi}{\upbeta} + \sum_{\varsigma=1}^\kappa p_\varsigma \frac{\mu_\varsigma}{\upbeta}, \frac{1}{2\lambda} \right\rbrace$ with multiplicity $\Theta_\kappa(m)=m\upbeta \left( \mu_\kappa^{-1} -\mu_{\kappa+1}^{-1} \right)$ for $\kappa\in[1:b-1]$, and
\item $\displaystyle \left\lbrace e, \frac{\chi}{\upbeta}, \frac{1}{2\lambda} \right\rbrace$ with multiplicity $\Theta_0(m)=m-m\upbeta \mu_1^{-1}$.\\
\end{itemize}
Substituting in \eqref{eq:d-3} $e$, $f$ and $\{g_\kappa\}$ for $\kappa\in[1:b]$ are determined in terms of $\chi$, $q$ and $\{p_\kappa\}$ as
\begin{subequations}
\begin{align}
e &= \frac{1}{2\lambda} \rmR_{\mJ}(- \frac{\chi}{\lambda}), \label{eq:d-3.1a} \\
g_\kappa^2 &= \frac{1}{\lambda\mu_\kappa} \left[ \rmR_{\mJ}(- \frac{\tilde{\chi}_{\kappa-1}}{\lambda}) - \rmR_{\mJ}(-\frac{\tilde{\chi}_{\kappa}}{\lambda}) \right], \label{eq:d-3.1b} \\
f^2 &= \frac{1}{\lambda^2} \frac{\partial}{\partial \tilde{\chi}_{b}} \left\lbrace \left[ \lambda_0 \tilde{\chi}_{b} - \lambda q \right] \rmR_{\mJ}(- \frac{\tilde{\chi}_{b}}{\lambda}) \right\rbrace. \label{eq:d-3.1c}
\end{align}
\end{subequations}
where we define $\tilde{\chi}_0 \coloneqq \chi$ and 
\begin{align}
\tilde{\chi}_\kappa \coloneqq \chi+\sum_{\varsigma=1}^{\kappa} \mu_\varsigma p_\varsigma \label{eq:d-21}
\end{align}
for $\kappa\in[1:b]$. The Hamiltonian of the spin glass of replicas is then determined as in \eqref{eq:rep-4}. Substituting the Hamiltonian in \eqref{eq:rep-8} and using the equalities
\begin{subequations}
\begin{align}
e^{\tfrac{1}{2} \upbeta^2 f^2  \tr{(\bxx-\bvv)(\bxx-\bvv)^{\trp} \mone_m }} &= \int e^{-\upbeta f \left[ \sum_{a=1}^m (x-\vv_a) \right] z_0} \md z_0, \label{eq:d-4a} \\
e^{ \tfrac{1}{2} \upbeta^2 g_\kappa^2 \tr{(\bxx-\bvv)(\bxx-\bvv)^{\trp} \mI_{\frac{m \upbeta}{\mu_\kappa}} \otimes \mone_{\frac{\mu_\kappa}{\upbeta}}}} &= \prod_{k=0}^{\Xi_\kappa} \int e^{-\upbeta g_\kappa \left[ \sum\limits_{a=\varrho_k^\kappa}^{\breve{\varrho}_k^\kappa} (x-\vv_a) \right] z_\kappa} \md z_\kappa, \label{eq:d-4b}
\end{align}
\end{subequations}
with $\varrho_k^\kappa=k \mu_\kappa \upbeta^{-1}+1$, $\breve{\varrho}_k^\kappa=(k+1)\mu_\kappa\upbeta^{-1}$, and $\Xi_\kappa=m\upbeta\mu_\kappa^{-1}-1$, the partition function finally reads
\begin{align}
\maz^\sfR(\upbeta;\bmu|\bxx)= \int \left[ \bigwedge_{\varsigma=2}^{b} \int \left[ \int \left[ \sum_{v} e^{- \upbeta \left[ e (x-v)^2 + (fz_0+\sum\limits_{\kappa=1}^b g_\kappa z_\kappa) (x-v) + u(v)\right]} \right]^{\frac{\mu_1}{\upbeta}} \md z_1 \right]^{\frac{\mu_{\varsigma}}{\mu_{\varsigma-1}}} \md z_\varsigma \right]^{\frac{m\upbeta}{\mu_b}} \md z_0 \label{eq:d-5}
\end{align}
with $v \in \setX$ where for the sequences $\{\xi_\varsigma\}$ and $\{z_\varsigma\}$ we define
\begin{align}
\bigwedge_{\varsigma=1}^{b} \int \rmF^{\xi_\varsigma} \md z_\varsigma \coloneqq \int \left[ \cdots \int \left[ \int \rmF^{\xi_1} \md z_1 \right]^{\xi_2} \md z_2 \cdots \right]^{\xi_b} \md z_b. \label{eq:d-6}
\end{align}
Consequently, one evaluates the free energy as in \eqref{eq:rep-9} which by using the Riesz equality when $m \downarrow 0$ reduces to
\begin{align}
\sfF^\sfR(\upbeta;\bmu) = -\frac{1}{\mu_b} \sfE \int \log \left\lbrace \bigwedge_{\varsigma=1}^{b} \int \left[ \sum_{v} e^{- \upbeta \left[ e (x-v)^2 + (fz_0+\sum\limits_{\kappa=1}^b g_\kappa z_\kappa) (x-v) + u(v)\right]} \right]^{\frac{\mu_{\varsigma}}{\mu_{\varsigma-1}}} \md z_\varsigma \right\rbrace \md z_0 \label{eq:d-7}
\end{align}
where we have defined $\mu_0=\upbeta$ for sake of compactness. The fixed point equations are, moreover, found via \eqref{eq:rep-6} where we have
\begin{subequations}
\begin{align}
\left[ \frac{\chi}{\upbeta}+\sum\limits_{\kappa=1}^b p_\kappa+q \right] m &= \sfE_{\bxx} \sum_{\bvv} \norm{\bxx-\bvv}^2 \ \rmp_{\bvv|\bxx}^\upbeta(\bvv|\bxx), \label{eq:d-8a} \\
\left[ \frac{\tilde{\chi}_{\kappa-1}}{\upbeta} +\frac{\mu_\kappa}{\upbeta} \left(\sum_{\varsigma=\kappa}^b p_\varsigma + q \right) \right] m &= \sfE_{\bxx} \sum_{\bvv} \tr{(\bxx-\bvv)(\bxx-\bvv)^{\trp} \mI_{\frac{m \upbeta}{\mu_\kappa}} \otimes \mone_{\frac{\mu_\kappa}{\upbeta}}} \ \rmp_{\bvv|\bxx}^\upbeta(\bvv|\bxx), \label{eq:d-8b}\\
\left[ \frac{\tilde{\chi}_b}{\upbeta}+m q \right] m&= \sfE_{\bxx} \sum_{\bvv} \tr{(\bxx-\bvv)(\bxx-\bvv)^\trp \mone_m} \ \rmp_{\bvv|\bxx}^\upbeta(\bvv|\bxx) \label{eq:d-8c}
\end{align}
\end{subequations}
for $\kappa\in[1:b]$. \eqref{eq:d-8a} and \eqref{eq:d-8c} are found by taking trace and sum over all the entries from both sides of \eqref{eq:rep-6}, respectively. \eqref{eq:d-8b} is moreover concluded by adding up the entries over the diagonal blocks of size $\mu_\kappa\upbeta^{-1}$. The right hand sides of \eqref{eq:d-8a}-\eqref{eq:d-8c} can then be evaluated using the equalities
\begin{subequations}
\begin{align}
&\sfE_{\bxx} \sum_{\bvv} \norm{\bxx-\bvv}^2 \ \rmp_{\bvv|\bxx}^\upbeta(\bvv|\bxx) = m \frac{\partial}{\partial e} \sfF^\sfR(\upbeta,m;\bmu), \label{eq:d-9a} \\
&\sfE_{\bxx} \sum_{\bvv} \tr{(\bxx-\bvv)(\bxx-\bvv)^{\trp} \mI_{\frac{m \upbeta}{\mu_\kappa}} \otimes \mone_{\frac{\mu_\kappa}{\upbeta}}} \ \rmp_{\bvv|\bxx}^\upbeta(\bvv|\bxx) = -\frac{m}{\upbeta g_\kappa} \frac{\partial}{\partial g_\kappa} \sfF^\sfR(\upbeta,m;\bmu), \label{eq:d-9b}\\
&\sfE_{\bxx} \sum_{\bvv} \tr{(\bxx-\bvv)(\bxx-\bvv)^\trp \mone_m} \ \rmp_{\bvv|\bxx}^\upbeta(\bvv|\bxx)= -\frac{m}{\upbeta f} \frac{\partial}{\partial f} \sfF^\sfR(\upbeta,m;\bmu). \label{eq:d-9c}
\end{align}
\end{subequations}
Thus, the fixed point equations are finally concluded as
\begin{subequations}
\begin{align}
&\frac{\chi}{\upbeta}+\sum\limits_{\kappa=1}^b p_\kappa+q = \E \int \frac{ \sum_{v} (v-x)^2 \ e^{- \upbeta \left[  e (x-v)^2 + (fz_0+\sum\limits_{\kappa=1}^b g_\kappa z_\kappa) (x-v) + u(v)\right]} }{\sum_{v}  e^{- \upbeta \left[  e (x-v)^2 + (fz_0+\sum\limits_{\kappa=1}^b g_\kappa z_\kappa) (x-v) + u(v) \right]} } \prod_{\kappa=1}^b \tilde{\Lambda}^\upbeta_\kappa \ \md z_\kappa \md z_0 \label{eq:d-10a} \\
&\tilde{\chi}_{\kappa-1}+\mu_\kappa \left(\sum_{\varsigma=\kappa}^b p_\varsigma + q \right) = \frac{1}{g_\kappa}\E \int \frac{ \sum_{v} (v-x)z_\kappa \ e^{- \upbeta \left[  e (x-v)^2 + (fz_0+\sum\limits_{\kappa=1}^b g_\kappa z_\kappa) (x-v) + u(v)\right]} }{\sum_{v}  e^{- \upbeta \left[  e (x-v)^2 + (fz_0+\sum\limits_{\kappa=1}^b g_\kappa z_\kappa) (x-v) + u(v) \right]} } \times \nonumber \\ & \hspace{12cm} \times\prod_{\kappa=1}^b \tilde{\Lambda}^\upbeta_\kappa \ \md z_\kappa \md z_0 \label{eq:d-10b} \\
&\tilde{\chi}_{b} =\frac{1}{f}\E \int \frac{ \sum_{v} (v-x)z_0 \ e^{- \upbeta \left[  e (x-v)^2 + (fz_0+\sum\limits_{\kappa=1}^b g_\kappa z_\kappa) (x-v) + u(v)\right]} }{\sum_{v}  e^{- \upbeta \left[  e (x-v)^2 + (fz_0+\sum\limits_{\kappa=1}^b g_\kappa z_\kappa) (x-v) + u(v) \right]} } \prod_{\kappa=1}^b \tilde{\Lambda}^\upbeta_\kappa \ \md z_\kappa \md z_0. \label{eq:d-10c}
\end{align}
\end{subequations}
for $\kappa\in[1:b]$ in which we denote $\tilde{\Lambda}^{\upbeta}_\kappa\coloneqq \left[ \int \Lambda^\upbeta_\kappa \md z_\kappa\right]^{-1} \Lambda^\upbeta_\kappa$ with $\Lambda^\upbeta_1$ 
\begin{align}
\Lambda^{\upbeta}_{1} \coloneqq \left[ \sum_{v}  e^{- \upbeta \left[  e (x-v)^2 + (fz_0+\sum\limits_{\kappa=1}^b g_\kappa z_\kappa) (x-v) + u(v) \right]} \right]^{\tfrac{\mu_{1}}{\upbeta}} \label{eq:d-11}
\end{align}
and $\{\Lambda^\upbeta_\kappa\}$ for $\kappa\in[2:b]$ being recursively defined as
\begin{align}
\Lambda^{\upbeta}_{\kappa} \coloneqq \left[ \int \Lambda^\upbeta_{\kappa-1} \ \md z_{\kappa-1} \right]^{\tfrac{\mu_{\kappa}}{\mu_{\kappa-1}}}. \label{eq:d-12}
\end{align}

The replicas' average distortion regarding the distortion function $\sfd(\cdot;\cdot)$ is further determined using the Hamiltonian modification technique employed in Appendix \ref{app:b} and \ref{app:c}. After modifying the Hamiltonian and taking the derivatives, the average distortion at the inverse temperature $\upbeta$ is given by
\begin{align}
\sfD^{\sfR}(\upbeta;\bmu)
= \E \int \frac{ \sum_{v} \sfd(v;x) e^{- \upbeta \left[  e (x-v)^2 + (fz_0+\sum\limits_{\kappa=1}^b g_\kappa z_\kappa) (x-v) + u(v)\right]} }{\sum_{v}  e^{- \upbeta \left[  e (x-v)^2 + (fz_0+\sum\limits_{\kappa=1}^b g_\kappa z_\kappa) (x-v) + u(v) \right]} } \prod_{\kappa=1}^b \tilde{\Lambda}^\upbeta_\kappa \ \md z_\kappa \md z_0. \label{eq:d-13}
\end{align}
Finally, by taking the limit $\upbeta \uparrow \infty$, we find the asymptotic distortion as
\begin{align}
\sfD^{\setW}&= \E \int \sfd(\rmg;x) \prod_{\kappa=1}^b \tilde{\Lambda}_\kappa \ \md z_\kappa \md z_0 \label{eq:d-14}
\end{align}
where $\rmg$ is defined as
\begin{align}
\rmg \coloneqq \arg \min_{v} \left[ e (x-v)^2 + (fz_0+\sum\limits_{\kappa=1}^b g_\kappa z_\kappa) (x-v) + u(v)\right], \label{eq:d-15}
\end{align}
and $\tilde{\Lambda}_\kappa$ denotes the limiting factor $\tilde{\Lambda}_\kappa^\infty$. Considering the definition of $\tilde{\Lambda}_\kappa^\upbeta$, $\tilde{\Lambda}_\kappa$ reads $\tilde{\Lambda}_\kappa = \left[ \int \Lambda_\kappa \md z_\kappa\right]^{-1} \Lambda_\kappa$ with
\begin{align}
\Lambda_{1} \coloneqq  e^{-\mu_1  \left[  e (x-\rmg)^2 + (fz_0+\sum\limits_{\kappa=1}^b g_\kappa z_\kappa) (x-\rmg) + u(\rmg) \right]} \label{eq:d-16}
\end{align}
and $\{\Lambda_\kappa\}$ for $\kappa\in[2:b]$ 
\begin{align}
\Lambda_{\kappa} \coloneqq \left[ \int \Lambda_{\kappa-1} \ \md z_{\kappa-1} \right]^{\tfrac{\mu_{\kappa}}{\mu_{\kappa-1}}}. \label{eq:d-17}
\end{align}
Moreover, the fixed point equations reduce to
\begin{subequations}
\begin{align}
\sum_{\kappa=1}^b p_\kappa + q &= \E \int (\rmg-x)^2 \prod_{\kappa=1}^b \tilde{\Lambda}_\kappa \ \md z_\kappa \md z_0 \label{eq:d-18a}, \\
\tilde{\chi}_{\kappa-1}+\mu_\kappa \left(\sum_{\varsigma=\kappa}^b p_\varsigma + q \right) &= \frac{1}{g_\kappa} \E \int (\rmg-x) z _\kappa \prod_{\kappa=1}^b \tilde{\Lambda}_\kappa \ \md z_\kappa \md z_0 , \label{eq:d-18b} \\
\tilde{\chi}_{b} &= \frac{1}{f} \E \int (\rmg-x) z _0  \prod_{\kappa=1}^b \tilde{\Lambda}_\kappa \ \md z_\kappa \md z_0, \label{eq:d-18c} 
\end{align}
\end{subequations}
for $\kappa \in [1:b]$.

As in the 1\ac{rsb} ansatz, we set $\bmu$ to be the extreme point of the free energy at a given inverse temperature $\upbeta$, in order to satisfy the second law of thermodynamics. The solution needs to be found over the set of non-negative real vectors which satisfy the constraint in \eqref{eq:d-2.1}. The parameters of the ansatz, however, are finally taken such that the zero temperature free energy is minimized.

Using Proposition \ref{proposition:1}, the free energy of the corresponding spin glass for a given vector $\bmu$ is written as
\begin{align}
\sfF(\upbeta;\bmu)=\frac{1}{2\lambda} \left[ \int_0^1 \rmF^{\upbeta}(\omega;\bmu) \dif \omega -  \rmF^{\upbeta} (1;\bmu) \right] +\sfF^{\sfR}(\upbeta;\bmu) \label{eq:d-19}
\end{align}
where the function $\rmF^{\upbeta}(\cdot;\bmu)$ is defined as
\begin{align}
\rmF^{\upbeta}(\omega;\bmu) = \sum\limits_{\kappa=1}^b \frac{1}{\mu_\kappa} \frac{\dif}{\dif \omega} \int_{\tilde{\chi}_{\kappa-1} \omega}^{\tilde{\chi}_{\kappa} \omega} \rmR_{\mJ}(-\frac{t}{\lambda} ) \dif t + \frac{\chi}{\upbeta} \rmR_{\mJ}(-\frac{\chi}{\lambda} \omega) + \left[q- \frac{\lambda_0}{\lambda} \tilde{\chi}_b \right] \frac{\dif}{\dif \omega} \left[ \omega \rmR_{\mJ}(-\frac{\tilde{\chi}_b}{\lambda} \omega) \right].  \label{eq:d-20}
\end{align}
Therefore, the vector $\bmu^\star$, for a given $\upbeta$, is set as
\begin{align}
\bmu^\star=\arg\min_{\bmu} \sfF(\upbeta;\bmu) 
\end{align}
with $\bmu\in\setS_{\bmu}$ where $\setS_{\bmu}$ is the set of non-negative real vectors satisfying the constraint in \eqref{eq:d-2.1}. By substituting \eqref{eq:d-3.1a}-\eqref{eq:d-3.1c} in \eqref{eq:d-20}, $\bmu^\star$ reduces to
\begin{align}
\bmu^\star=\arg\min_{\bmu} \left\lbrace \frac{1}{2\lambda} \left[ \int_0^1 \rmF^{\upbeta}(\omega;\bmu) \dif \omega \right]+\sfF^{\sfR}(\upbeta;\bmu) - e \Delta(\bmu) \right\rbrace \label{eq:d-19}
\end{align}
with $\bmu\in\setS_{\bmu}$ where $\Delta(\cdot)$ reads
\begin{align}
\Delta(\bmu)\coloneqq \frac{1}{e} \left\lbrace\sum\limits_{\kappa=1}^b \frac{1}{\mu_\kappa} \left[ \tilde{e}_\kappa \tilde{\chi}_\kappa - \tilde{e}_{\kappa-1} \tilde{\chi}_{\kappa-1} \right] + \left[\frac{\tilde{e}_0 \tilde{\chi}_0}{\upbeta}+ \tilde{e}_b q - \frac{f^2}{2} \tilde{\chi}_b \right] \right\rbrace
\end{align}
with $\tilde{e}_0 \coloneqq e$ and 
\begin{align}
\tilde{e}_\kappa \coloneqq e-\sum_{\varsigma=1}^{\kappa} \mu_\varsigma \frac{g^2_\varsigma}{2} \label{eq:d-21}
\end{align}
for $\kappa\in[1:b]$. The vector $\bmu^\star$ is then determined such that it minimizes the free energy . Finally by taking the limit $\upbeta\uparrow\infty$, the zero temperature free energy is evaluated as
\begin{align}
\sfF^0=\frac{1}{2\lambda} \left[ \int_0^1 \rmF^{\infty}(\omega) \dif \omega -  \rmF^{\infty} (1) \right] - \frac{1}{\mu_b} \E \int \log \left[\int \Lambda_b \md z_b\right] \ \md z_0 \label{eq:d-22}
\end{align}
where we define 
\begin{align}
\rmF^{\infty}(\omega)\coloneqq \lim_{\upbeta\uparrow\infty}\rmF^{\upbeta}(\omega;\bmu^\star). \label{eq:d-23}
\end{align}
Denoting $\lams\coloneqq \left[2e\right]^{-1}$, $\lams_0\coloneqq \left[4e^2\right]^{-1} f^2$ and $\lams_\kappa\coloneqq \left[4e^2\right]^{-1} g_\kappa^2$ for $\kappa\in[1:b]$, and defining the sequence $\{ \zeta_\kappa \}$ such that $\zeta_0=1$ and
\begin{align}
\zeta_\kappa \coloneqq 1-\sum_{\varsigma=1}^\kappa \mu_{\varsigma} \frac{\lams_\varsigma}{\lams}
\end{align}
for $\kappa\in[1:b]$, Proposition \ref{proposition:7} is concluded.

\newpage
\section{General \ac{rsb} Frequency Domain Correlation Matrix}
\label{app:e}
Consider the spin glass of replicas defined in Definition \ref{def:replica_spin}, the Hamiltonian reads
\begin{align}
\mae^\sfR(\bvv|\bxx)= (\bxx-\bvv)^{\trp} \mR (\bxx-\bvv) + u(\bvv). \label{eq:e-1}
\end{align}
where $\mR \coloneqq \mT \rmR_{\mJ}(-2 \upbeta \mT \mQ)$ is referred to as the ``frequency domain correlation matrix''. In this appendix, we show that under the general \ac{rsb} assumption on $\mQ$, including the \ac{rs} case, the frequency domain correlation matrix has the same structure with different scalar coefficients. To show that, let the correlation matrix be of the form
\begin{align}
\mQ= q_0 \mI_m + \sum_{i=1}^{b} q_i \mI_{\frac{m}{\xi_i}} \otimes \mone_{\xi_i} + q_{b+1} \mone_m \label{eq:e-2}
\end{align}
for some integer $b$ where $q_0,q_{b+1} \neq 0$. \eqref{eq:e-2} represents the $b$\ac{rsb} as well as \ac{rs} structures by setting the coefficients correspondingly. Considering $\mT$ as defined in \eqref{eq:rep-5}, $\mT\mQ$ is then written as
\begin{align}
\mT\mQ=\frac{1}{2\lambda}\left[ \mQ-\frac{\upbeta \lambda_0}{\lambda + m\upbeta \lambda_0}\mone_m\mQ \right]. \label{eq:e-3}
\end{align}
Defining the vector $\bu_{m\times 1}$ as a vector with all entries equal to $1$, it is clear that $\bu$ is an eigenvector of $\mQ$, and therefore, by denoting the eigendecomposition of $\mQ$ as $\mV\mD^{\rmQ}\mV^\trp$, $\mone_m$ reads
\begin{align}
\mone_m=\bu\bu^\trp=\mV\mD^{1}\mV^\trp \label{eq:e-4}
\end{align}
where $\mD^1$ is a diagonal matrix in which all the diagonal entries expect the entry corresponding to the eigenvector $\bu$ are zero. Consequently, \eqref{eq:e-3} reduces to
\begin{align}
\mT\mQ= \frac{1}{2\lambda} \mV \left[ \mD^{\rmQ}-\frac{\upbeta \lambda_0}{\lambda + m\upbeta \lambda_0}\mD^{1} \mD^{\rmQ} \right] \mV^\trp \label{eq:e-5}
\end{align}
which states that $\mT\mQ$ and $\mQ$ span the same eigenspace. The eigenvalues of $\mT\mQ$ and $\mQ$ are also distributed with the same frequencies. In fact, as the eigenvalue corresponding to $\bu$ occurs with multiplicity $1$, the second term on the right hand side of \eqref{eq:e-3} does not change the distribution of eigenvalues and only modifies the eigenvalue corresponding to $\bu$. Therefore, $\mT\mQ$ can be also represented as in \eqref{eq:e-2} with different scalar coefficient.

To extend the scope of the analysis to $\mR$, we note that the function $\rmR_{\mJ}(\cdot)$ is strictly increasing for any $\rmF_{\mJ}$ different from the single mass point \ac{cdf}\footnote{In the single mass point \ac{cdf}, we have $\rmF_{\mJ}(\lambda)=\mone\{\lambda \geq \mathsf{K}\}$ for some real constant $\mathsf{K}$.}\cite{zaidel2012vector}. Consequently, the eigenvalues' distribution remains unchanged, and thus,
\begin{align}
\mR= r_0 \mI_m + \sum_{i=1}^{b} r_i \mI_{\frac{m}{\xi_i}} \otimes \mone_{\xi_i} + r_{b+1} \mone_m. \label{eq:e-6}
\end{align}
for some real $\left\lbrace r_i \right\rbrace$. In the case that $\rmF_{\mJ}$ is the single mass point \ac{cdf}, the $\rmR$-transform becomes a constant function which results in $\rmR_{\mJ}(-2 \upbeta \mT \mQ)=\mathsf{K}\mI_m$ for some constant $\mathsf{K}$. Therefore, $\mR=\sfK \mT$ which is again represented as in \eqref{eq:e-6} by setting $r_i = 0$ for $i \in [1:b]$. This concludes that $\mR$ has the same structure as $\mQ$ for any $\rmF_{\mJ}$.

\newpage
\section{Asymptotics of Spherical Integral}
\label{app:f}
Consider $\mu_{n}^{\zeta}$ to be the Haar measure on the orthogonal group $\setO_n$ for $\zeta=1$, and on the unitary group $\setU_n$ for $\zeta=2$. Let $\mG_n$ and $\mD_n$ be $n\times n$ matrices; then, the integral of the form
\begin{align}
\rmI_n^{\zeta}(\mG_n, \mD_n) \coloneqq \int e^{n \tr{\mU \mG_n \mU^{\dagger}\mD_n}} \dif \mu_{n}^{\zeta}(\mU), \label{eq:f-1}
\end{align}is known as the ``spherical integral''. This integral has been extensively studied in the mathematics literature, as well as physics where it is often called ``Harish-Chandra'' or ``Itzykson \& Zuber'' integral. In a variety of problems, such as ours, the evaluation of spherical integrals in asymptotic regime is interesting, and therefore, several investigations have been done on this asymptotics. In \cite{guionnet2002large}, the asymptotics of the integral has been investigated when the matrices $\mG_n$ and $\mD_n$ have $n$ distinct eigenvalues with converging spectrums, and under some assumptions, a closed form formula has been given; however, the final formula in \cite{guionnet2002large} is too complicated and hard to employ. In \cite{guionnet2005fourier}, the authors showed that, for a low-rank $\mG_n$, the asymptotics of the integral can be written directly in terms of the $\mathrm{R}$-transform corresponding to the asymptotic eigenvalue distribution of $\mD_n$. As long as the replica analysis is being considered, we can utilize the result from \cite{guionnet2005fourier}, since the number of replicas can be considered to be small enough.

In \cite{guionnet2005fourier}, Theorem 1.2, it is shown that when $\mG_n$ is a rank-one matrix, under the assumption that the spectrum of $\mD_n$ asymptotically converges to a deterministic \ac{cdf} $\rmF_{\mD}$ with compact and finite length support, the asymptotics of the integral can be written in terms of the $\mathrm{R}$-transform $\rmR_{\mD}(\cdot)$ as
\begin{align}
\lim_{n \uparrow \infty} \frac{1}{n} \log \rmI_n^{\zeta}(\mG_n, \mD_n)= \int_{0}^{\uptheta} \rmR_{\mD}(\frac{2 \omega}{\zeta}) \dif \omega, \label{eq:f-2}
\end{align}in which $\uptheta$ denotes the single nonzero eigenvalue of $\mG_n$. The authors further showed in Theorem 1.7 that in the case of $\mathrm{rank}(\mG_n)=\mao(\sqrt{n})$, under the same assumption as in Theorem 1.2, the spherical integral asymptotically factorizes into product of rank-one integrals, and therefore, 
\begin{align}
\lim_{n \uparrow \infty} \frac{1}{n} \log \rmI_n^{\zeta}(\mG_n, \mD_n)= \sum_{i=1}^m \int_{0}^{\uptheta_i} \rmR_{\mD}(\frac{2 \omega}{\zeta}) \dif \omega, \label{eq:f-3}
\end{align}with $\{ \uptheta_i \}$ denoting the nonzero eigenvalues of $\mG_n$ for $i\in[1:m]$, and $m=\mathrm{rank}(\mG_n)$.

In Appendix \ref{app:a}, one can employ \eqref{eq:f-3} in order to evaluate the asymptotics over the system matrix consistent to the system setup illustrated in Section \ref{sec:problem_formulation}. Moreover, by using the above discussion, the investigations in Appendix \ref{app:a} can be extended to the case of complex variables. More about the spherical integral and its asymptotics can be found in \cite{guionnet2005fourier}, and the references therein.

\label{list:acronyms}
\begin{acronym}
\acro{iid}[i.i.d.]{independent and identically distributed}
\acro{pmf}[PMF]{Probability Mass Function}
\acro{cdf}[CDF]{Cumulative Distribution Function}
\acro{pdf}[PDF]{Probability Density Function}
\acro{rs}[RS]{Replica Symmetry}
\acro{1rsb}[1RSB]{One-Step Replica Symmetry Breaking}
\acro{brsb}[$b$RSB]{$b$-Steps Replica Symmetry Breaking}
\acro{rsb}[RSB]{Replica Symmetry Breaking}
\acro{mse}[MSE]{Mean Square Error}
\acro{mmse}[MMSE]{Minimum Mean Square Error}
\acro{map}[MAP]{Maximum-A-Posterior}
\acro{rhs}[r.h.s.]{right hand side}
\acro{lhs}[l.h.s.]{left hand side}
\acro{wrt}[w.r.t.]{with respect to}
\acro{lln}[LLN]{Law of Large Numbers}
\acro{mpm}[MPM]{Marginal-Posterior-Mode}
\acro{mimo}[MIMO]{Multiple-Input Multiple-Output}
\acro{awgn}[AWGN]{Additive White Gaussian Noise}
\acro{cdma}[CDMA]{Code Division Multiple Access}
\acro{amp}[AMP]{Approximate Message Passing}
\end{acronym}
%
%
%
%



\newpage
\bibliographystyle{IEEEtran}
%

\bibliography{ref}


%

%
%
%




\end{document}